\DeclareMathOperator{\arccot}{arccot}
\newtheorem{proposition}{Proposition}[section]
\newtheorem{theorem}{Theorem}[section]
\newtheorem{condition}{Condition}[section]
\numberwithin{equation}{section}
\begin{document}

\title{
Spherical Distribution of 5 Points with Maximal Distance Sum
\footnote{
Partially supported by a National Key Basic Research Project of China
(2004CB318000) and by National Natural Science Foundation of China
(10571095)
}
}

\author{
Xiaorong Hou\footnote{The author to whom all correspondence should be sent.}, Junwei Shao\\
College of Automation,\\
University of Electronic Science and Technology of China, Sichuan, PRC\\
E-mail: \href{mailto:houxr@uestc.edu.cn}{houxr@uestc.edu.cn},
\href{mailto:junweishao@gmail.com}{junweishao@gmail.com} }

\date{}

\maketitle


\begin{abstract}
In this paper, we mainly consider the problem of spherical distribution
of 5 points, that is, how to configure 5 points on a sphere such that the
mutual distance sum attains the maximum.
It is conjectured that the sum of distances is maximal if 5 points
form a bipyramid configuration in which case two points are positioned
at two poles of the sphere and the other three are positioned uniformly
on the equator.
We study this problem using interval methods and related technics,
and give a proof for the conjecture through computers in finite time.
\end{abstract}

\section{Introduction}
Studies on the problem of optimally arranging points on a sphere
can date back to over one hundred years ago,
when Thomson attempted to explain the periodic table in
terms of the ``plum pudding'' model of the atom.
Since then, several varied problems were proposed,
and some of such problems are still unsolved now \cite{croft:1}.
In general, these problems involve finding configurations
of points on the surface of a sphere that maximize or minimize
some given quantities, some of them are directly relevant to physics or
chemistry where stable configurations tend to minimize some form of
energy expression.

The problem has the following general form.
Let $x_1, x_2, \ldots, x_n$ be points on the unit sphere $S^{m-1}$ of
the Euclidean space $\mathbb{R}^m$, denote
\begin{equation} \label{vxn}
V(X_n,m,\lambda) =
\sum\limits_{1 \leq i < j \leq n} \left| x_i - x_j \right|^\lambda,
\end{equation}
where $X_n=(x_1,x_2,\ldots,x_n)$, and $\left| x_i-x_j \right|$ denotes
the Euclidean distance between $x_i$ and $x_j$.

For $\lambda \leq 0$, denote
\begin{equation} \label{vnn}
V_1(n,m,\lambda) = \min\limits_{X_n \subset S^{m-1}} V(X,n,m,\lambda)
\end{equation}
where
\begin{equation} \label{vn0}
V_1(n,m,0)=
\min\limits_{X_n \subset S^{m-1}} \sum\limits_{1 \leq i < j \leq n}
\log \frac{1}{\left| x_i - x_j \right|}
\end{equation}

When $m=3$, this is the 7th Problem listed by Steve Smale in
\textit{Mathematical Problems for the Next Century}
\cite{smale:1, smale:2}.

For $\lambda > 0$, denote
\begin{equation} \label{vnp}
V_2(n,m,\lambda) = \max\limits_{X_n \subset S^{m-1}} V(X,n,m,\lambda)
\end{equation}

So far as we know, G. P\'{o}lya and G. Szeg\"{o} \cite{polya:1} first studied problems
of such types in 1930s, since then, a number of results about $V_2(n,m,\lambda)$ have been derived.
For example,
L. Fejes T\'{o}th proved results for cases when $m=2, \lambda=1$
and when $n=m+1, \lambda=1$ \cite{toth:1}.
E. Hille considered the asymptotic properties of
$V_2(n,m,\lambda)/N$ when $n \to \infty$ for definite $m$ and $\lambda$,
and gave some results \cite{hille:1}.
K. B. Stolarsky proved bounds of $V_2(n,m,\lambda)$ for
definite $m$ and $\lambda$ in \cite{stolarsky:1, stolarsky:2},
and gave some properties of point distributions corresponding $V_2(n,m,\lambda)$
when $m = 2$ and $m=3$
in \cite{stolarsky:3, stolarsky:4, stolarsky:5}.
R. Alexander also proved bounds of $V_2(n,3,1)$ in
\cite{alexander:1}, and discussed some generalized
sums of distances in \cite{alexander:2, alexander:3}.
G. D. Chakerian and M. S. Klamkin proved bounds of $V_2(n,m,1)$ in
\cite{chakerian:1}.
J. Berman and K. Hanes proved a property of the point distribution
corresponding $V_2(n,3,1)$, and deduced some numerical results
in \cite{berman:1}.
G. Harman, J. Beck, T. Amdeberhan proved bounds of $V_2(n,m,\lambda)$ in
\cite{harman:1, beck:1, amdeberhan:1}.
Similar problems were also discussed in
\cite{furedi:1, ali:1, saff:1, minghuijiang:1}.

For $V_2(5,3,1)$, numerical computations show
evidences for the conjecture that,
it is obtained when 5 points form a bipyramid configuration
in which case two points are on the two poles of $S^2$, while three other
points are uniformly distributed on the equator.
In this paper, we study this problem via interval arithmetic,
and prove the conjecture through computer in comparatively short time.
For related problems, this guides a different method.

The main ideal of our proof is as follows.
Firstly we express $V(X_5,3,1)$ as a function under certain coordinate system,
secondly we exclude a domain where the bipyramid configuration
is proved to correspond an only maximum of $V(X_5,3,1)$,
lastly we subdivide the remaining domain,
and prove that function values in these subdomains are less then
the previous maximum obtained. So we complete the proof of the conjecture.

\section{Mathematical descriptions of the problem} \label{5points:math}

\subsection{Spherical coordinate system}
We choose the spherical coordinate system as showed in Fig. \ref{coordsys}.
A point $P$ on $S^2$ is identified by $(1,\phi,\theta)$,
where $\phi \in \left[-\frac{\pi}{2},\frac{\pi}{2}\right]$
is the angle from vector $\overrightarrow{OH}$,
i.e., the projection of vector $\overrightarrow{OP}$ in $xoy$-plane,
to vector $\overrightarrow{OP}$,
positive if the $z$-coordinate of $P$ is positive,
and $\theta \in [-\pi, \pi)$ is the angle from $x$-axis to
vector $\overrightarrow{OH}$, positive if the $y$-coordinate of $P$ is positive.

\begin{figure}[htbp]
\centering
\includegraphics[width=0.60\textwidth]{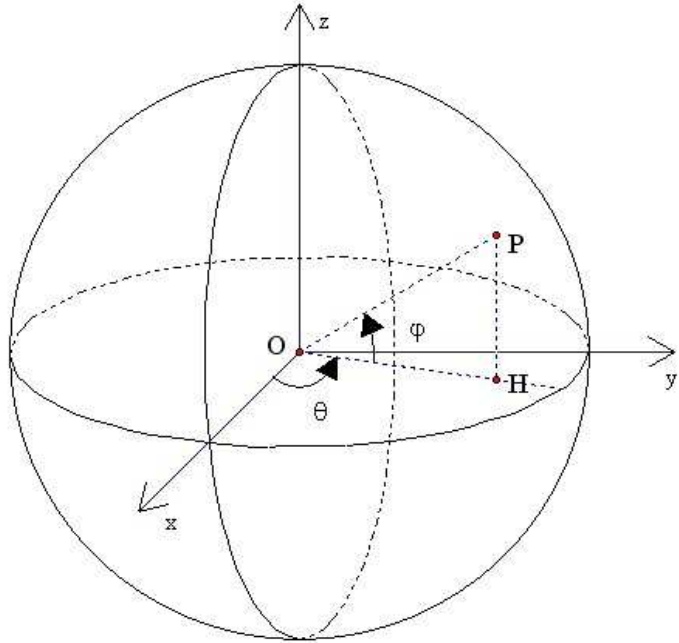}
\caption{The spherical coordinate system}
\label{coordsys}
\end{figure}

According to such definitions, we have following
formulas transforming from
spherical coordinate $(1,\phi,\theta)$ to Cartesian coordinate $(x,y,z)$,
\begin{equation}
\left\{
\begin{array}{rcl}
x & = & \cos(\phi) \cos(\theta), \\
y & = & \cos(\phi) \sin(\theta), \\
z & = & \sin(\phi).
\end{array}
\right.
\end{equation}

Considering the spherical symmetry, we can choose the spherical coordinates
for 5 points as follows:
\begin{equation}
A(1,0,0),B(1,\phi_1,\pi),C(1,\phi_2,\theta_2),D(1,\phi_3,\theta_3),E(1,\phi_4,\theta_4).
\end{equation}
Thus the sum of mutual distances of these points is
\begin{equation}
\begin{split}
& f(\phi_1, \phi_2, \theta_2, \phi_3, \theta_3, \phi_4, \theta_4) \\
= & \sqrt {2+ 2\,\cos \left( \phi_{{1}} \right)}+\sqrt {2-2\,\cos \left(
\phi_{{2}} \right) \cos \left( \theta_{{2}} \right) }\\
&+\sqrt {2-2\,\cos\left( \phi_{{3}} \right) \cos \left( \theta_{{3}} \right) }+\sqrt {2
-2\,\cos \left( \phi_{{4}} \right) \cos \left( \theta_{{4}} \right) }\\
&+\sqrt {2\,\cos \left( \phi_{{1}} \right) \cos \left( \phi_{{2}}
 \right) \cos \left( \theta_{{2}} \right) +2-2\,\sin \left( \phi_{{1}}
 \right) \sin \left( \phi_{{2}} \right) }\\
&+\sqrt {2\,\cos \left( \phi_{{1}} \right) \cos \left( \phi_{{3}} \right) \cos
\left( \theta_{{3}}
 \right) +2-2\,\sin \left( \phi_{{1}} \right) \sin \left( \phi_{{3}}
 \right) }\\
& +\sqrt {2\,\cos \left( \phi_{{1}} \right) \cos \left( \phi_{
{4}} \right) \cos \left( \theta_{{4}} \right) +2-2\,\sin \left( \phi_{
{1}} \right) \sin \left( \phi_{{4}} \right) }\\
&+\sqrt {-2\,\cos \left( \phi_{{3}} \right) \cos \left( \phi_{{2}} \right)\cos \left(
\theta_{{2}}-\theta_{{3}} \right)  +2-2\,\sin \left( \phi_{{2}} \right) \sin
 \left( \phi_{{3}} \right) }\\
&+\sqrt {-2\,\cos \left( \phi_{{2}} \right) \cos \left( \phi_{
{4}} \right)\cos \left( \theta_{{2}}-
\theta_{{4}} \right)  +2-2\,\sin \left( \phi_{{2}} \right) \sin \left( \phi_{{4
}} \right) }\\
&+\sqrt {-2\,\cos \left( \phi_{{3}} \right) \cos \left( \phi_{{4}} \right)
\cos \left( \theta_{{3}}-\theta_{{4}} \right) +2-2\,
\sin \left( \phi_{{3}} \right) \sin \left( \phi_{{4}} \right) }.
\end{split}
\end{equation}

\subsection{Bipyramid distribution} \label{sec:bipyramid}
Spherical coordinates of 5 points corresponding a bipyramid distribution are not unique,
but the following 5 points indeed form a bipyramid configuration,

\begin{equation} \label{bipycoords}
A(1,0,0),B(1,-\dfrac{\pi}{3},\pi),C(1,\dfrac{\pi}{3},\pi),D(1,0,-\dfrac{\pi}{2}),
E(1,0,\dfrac{\pi}{2}),
\end{equation}
as showed in Fig. \ref{bipydistrib}.

\begin{figure}[htbp]
\centering
\includegraphics[width=0.60\textwidth]{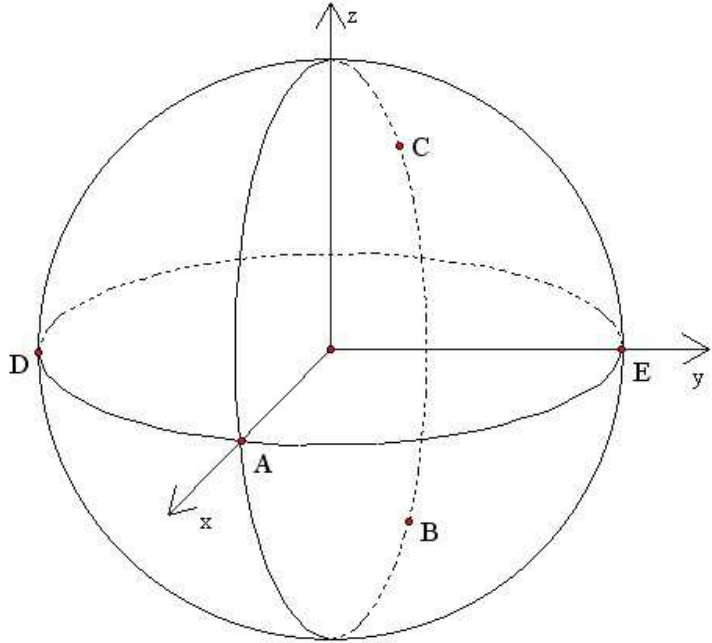}
\caption{The bipyramid distribution}
\label{bipydistrib}
\end{figure}

Denote the corresponding values of
$(\phi_1, \phi_2, \theta_2, \phi_3, \theta_3, \phi_4, \theta_4)$ by
$$
\Theta_{bp} = (-\dfrac{\pi}{3}, \dfrac{\pi}{3}, \pi, 0, -\dfrac{\pi}{2}, 0, \dfrac{\pi}{2}),
$$
then the corresponding value of function $f$ is
\begin{equation} \label{fmax}
\begin{split}
fmax & = f(\Theta_{bp}) \\
& = 3\,\sqrt {3}+6\,\sqrt {2}+2\\
& \approx 15.68143380,
\end{split}
\end{equation}
and the Hessian matrix of $f$ is
{\footnotesize
\begin{equation}
\left( \begin {array}{ccccccc}
{\dfrac {  -\sqrt {3}  }
{  2  }}&{\dfrac {  \sqrt {3}  }{
4  }}&0&{\dfrac {  -\sqrt {2}  }{
4  }}&{\dfrac {  \sqrt {6}  }{
4  }}&{\dfrac {  -\sqrt {2}  }{  4
}}&{\dfrac {  -\sqrt {6}  }{  4
}}\\\noalign{\medskip}{\dfrac {  \sqrt {3}  }{
4  }}&{\dfrac {  -\sqrt {3}  }{
2  }}&0&{\dfrac {  -\sqrt {2}  }{
4  }}&{\dfrac {  -\sqrt {6}  }{
4  }}&{\dfrac {  -\sqrt {2}  }{
4  }}&{\dfrac {  \sqrt {6}  }{
4  }}\\\noalign{\medskip}0&0&{\dfrac {  -2\,\sqrt {3}-3
\,\sqrt {2}  }{  24  }}&{\dfrac {  -
\sqrt {6}  }{  16  }}&{\dfrac {  \sqrt {2
}  }{  16  }}&{\dfrac {  \sqrt {6}
}{  16  }}&{\dfrac {  \sqrt {2}
}{  16  }}\\\noalign{\medskip}{\dfrac {
-\sqrt {2}  }{  4  }}&{\dfrac {
-\sqrt {2}  }{  4  }}&{\dfrac {
-\sqrt {6}  }{  16  }}&{\dfrac {
-3\,\sqrt {2}-4  }{  8  }}&0&{\dfrac {
-1  }{  2  }}&0\\\noalign{\medskip}{
\dfrac {  \sqrt {6}  }{  4  }}&{\dfrac {
-\sqrt {6}  }{  4  }}&{\dfrac {
\sqrt {2}  }{  16  }}&0&{\dfrac {
-3\,\sqrt {2}-4  }{  8  }}&0&{\dfrac {
1  }{  2  }}\\\noalign{\medskip}{
\dfrac {  -\sqrt {2}  }{  4  }}&{\dfrac {
-\sqrt {2}  }{  4  }}&{\dfrac {
\sqrt {6}  }{  16  }}&{\dfrac {
-1  }{  2  }}&0&{\dfrac {  -3\,
\sqrt {2}-4  }{  8  }}&0\\\noalign{\medskip}{
\dfrac {  -\sqrt {6}  }{  4  }}&{\dfrac {
\sqrt {6}  }{  4  }}&{\dfrac {
\sqrt {2}  }{  16  }}&0&{\dfrac {  1
}{  2  }}&0&{\dfrac {  -3\,\sqrt {2}-4
}{  8  }}\end {array} \right).
\end{equation}
}
This matrix is negative definite, so
the bipyramid distribution corresponds a maximum
of function $f$.

\subsection{Inequality form}
As a matter of fact, what we are to prove is the following inequality,

\begin{equation} \label{inequality}
f(\phi_1, \phi_2, \theta_2, \phi_3, \theta_3, \phi_4, \theta_4) \leq fmax, \quad
(\phi_1, \phi_2, \theta_2, \phi_3, \theta_3, \phi_4, \theta_4) \in \mathcal{D}.
\end{equation}
where
$$
\mathcal{D}=
\left(
[-\frac{\pi}{2},\frac{\pi}{2}],
[-\frac{\pi}{2},\frac{\pi}{2}], [-\pi, \pi),
[-\frac{\pi}{2},\frac{\pi}{2}], [-\pi, \pi),
[-\frac{\pi}{2},\frac{\pi}{2}], [-\pi, \pi)
\right),
$$
and the equality holds if and only if
$(\phi_1, \phi_2, \theta_2, \phi_3, \theta_3, \phi_4, \theta_4)=\Theta_{bp}$.

In the remaining part of this paper, we will according to following
steps to prove this inequality.

\begin{enumerate}
\item
Giving some restricted conditions
and results to demonstrate that we only need to prove the inequality
over a subdomain of $\mathcal{D}$, i.e., $\mathcal{D}^{(1)} \cup \mathcal{D}^{(2)}$
(see Eq. \eqref{eqn:bipyramid}).

\item
Analyzing interval Hessian matrices
(Theorem \ref{th:pstv}, \ref{th:posdefeig} and \ref{th:extremepoint})
to prove that the equality holds only at $\Theta_{bp}$ over a subdomain of
$\mathcal{D}^{(1)} \cup \mathcal{D}^{(2)}$,
i.e., $\mathcal{D}_{bp}$ (see Proposition \ref{prop:bipyramid}).

\item
Analyzing interval Hessian matrices
(Theorem \ref{th:nonpos} and \ref{th:notextremepoint})
to prove the corresponding strict inequality
holds over a subdomain of
$\mathcal{D}^{(1)} \cup \mathcal{D}^{(2)}$,
i.e., $\mathcal{D}_{p}$ (see Proposition \ref{prop:pyramid}).

\item
Making use of the interval arithmetic(\S\,\ref{intarith}) to prove the corresponding strict inequality
holds over the remaining domains, i.e.,
$(\mathcal{D}^{(1)} \cup \mathcal{D}^{(2)}) \backslash (\mathcal{D}_{bp} \cup \mathcal{D}_{p})$
(see Eq. \eqref{strictinequality}).
\end{enumerate}

\section{Restricted conditions and verification domain}
\subsection{Some results}
What we are to prove is in fact that, there exists no distribution
of 5 points exclude the bipyramid distribution
corresponding larger distance sum then $fmax$.
We need following results so as to simplify this problem.
\begin{proposition} \label{prop:phi1}
If some configuration of 5 points corresponds larger function value of $f$
then the bipyramid configuration,
and $AB$ is the second largest distance in ${5 \choose 2} = 10$ distances,
$\phi_1$ should satisfy
\begin{equation}
\phi_1 \geq -2\,\arccos ( \sqrt {3}/6 +\sqrt {2}/3 )
\end{equation}
\end{proposition}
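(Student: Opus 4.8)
The plan is to bound $\phi_1$ from below by exploiting the hypothesis that $AB$ is the \emph{second} largest among the ten pairwise distances. The key observation is that $|AB| = \sqrt{2 + 2\cos\phi_1}$ is a decreasing function of $\phi_1$ on $[-\tfrac{\pi}{2},\tfrac{\pi}{2}]$ (since $\cos$ is decreasing there in absolute value — actually $\cos\phi_1$ is maximal at $\phi_1 = 0$ and symmetric, but on the relevant range $\phi_1 < 0$ the distance $|AB|$ increases as $\phi_1 \to -\tfrac{\pi}{2}$ and then... ); more precisely, $|AB|$ is large exactly when $\phi_1$ is close to $-\tfrac{\pi}{2}$ or... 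I will treat $|AB|$ as a function of $\cos\phi_1$ and note that $\phi_1$ small (near $-\pi/2$) forces $\cos\phi_1$ small, hence $|AB|$ close to $\sqrt{2}$, and conversely. So a \emph{lower} bound on $\phi_1$ is equivalent to an \emph{upper} bound on $|AB|$, which we get from the following chain: if $f > fmax$ and $|AB|$ is the second largest distance, then in particular $|AB|$ cannot be too large, because then the total would be constrained.

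First I would set $d = |AB| = \sqrt{2 + 2\cos\phi_1}$ and argue that, since $|AB|$ is only the second largest distance, at most one of the remaining nine distances exceeds $d$, and all ten are at most $2$ (the diameter). Hence
\[
f \;\le\; 2 + d + 8d \;=\; 2 + 9d
\]
is too crude; instead I would use the sharper fact that the eight distances other than the largest and $|AB|$ are each $\le d$, giving $f \le (\text{largest}) + d + 8d \le 2 + 9d$ — still not enough. The real input must be a genuine upper bound for the sum of distances from a point configuration where two points $A,B$ are at (spherical) distance controlled by $d$: I would invoke an isoperimetric-type estimate, or more likely a direct estimate that the sum of the nine distances \emph{not} equal to $|AB|$ is at most $3\sqrt{3} + 6\sqrt{2} + 2 - d$ rearranged, i.e. compare term by term against the bipyramid. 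Concretely: $f > fmax = 3\sqrt 3 + 6\sqrt 2 + 2$ together with each non-largest distance $\le |AB|$ forces $|AB|$ to exceed some threshold; but $|AB|$ being only \emph{second} largest means the \emph{largest} distance is $\ge |AB|$, and one shows the largest distance is $\le 2$, so $8|AB| + 2 + |AB| \ge f > fmax$ gives $|AB| > (fmax - 2)/9$. Squaring, $2 + 2\cos\phi_1 = d^2 > ((fmax-2)/9)^2$, whence $\cos\phi_1 > \tfrac12((fmax-2)/9)^2 - 1$, and unwinding the closed form $fmax - 2 = 3\sqrt3 + 6\sqrt2$ should produce exactly $\cos(\phi_1/2) > \sqrt3/6 + \sqrt2/3$, i.e. $\phi_1 \ge -2\arccos(\sqrt3/6 + \sqrt2/3)$ via the half-angle identity $\sqrt{2 + 2\cos\phi_1} = 2|\cos(\phi_1/2)|$.

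So the skeleton is: (i) write $|AB| = 2\cos(\phi_1/2)$ using the half-angle formula, valid since $\phi_1/2 \in [-\pi/4,\pi/4]$ so the cosine is nonnegative; (ii) bound $f$ above by $9|AB|$ plus a correction accounting for the one distance that may exceed $|AB|$ and using the trivial bound $2$ for that distance, yielding $f \le 7|AB| + 2 + 2|AB|$ or a similar count — I must be careful to count the ten distances correctly, with $|AB|$ appearing once, the unique possibly-larger distance bounded by $2$, and the remaining eight bounded by $|AB|$; (iii) combine with $f > fmax$ to get a lower bound on $\cos(\phi_1/2)$, hence on $\phi_1$; (iv) check the arithmetic matches $\sqrt3/6 + \sqrt2/3$ exactly.

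The main obstacle is step (ii): getting the \emph{counting constant} right so that the resulting threshold is exactly $\sqrt3/6+\sqrt2/3$ and not merely comparable to it. A naive count gives $f \le 2 + 9|AB|$ (the largest distance $\le 2$, the other nine $\le |AB|$), leading to $|AB| \ge (fmax-2)/9 = (3\sqrt3+6\sqrt2)/9 = \sqrt3/3 + 2\sqrt2/3$, and then $\cos(\phi_1/2) = |AB|/2 \ge \sqrt3/6 + \sqrt2/3$ — which is exactly the claimed bound. So in fact the crude count suffices, and the only delicate points are verifying that the largest of the ten distances is genuinely $\le 2$ (the sphere's diameter, immediate) and that the half-angle manipulation of $\arccos$ reverses the inequality correctly (since $\arccos$ is decreasing and $\phi_1$ may be negative, a sign check is needed). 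I would therefore present (i)–(iv) as above, flagging the decreasing-ness of $\arccos$ and the nonnegativity of $\cos(\phi_1/2)$ as the two spots where care is required.
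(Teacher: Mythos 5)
Your final skeleton is exactly the paper's argument: bound the largest distance by the diameter $2$ and the other nine by the second largest $|AB|$, so $f>fmax$ forces $|AB|\ge (fmax-2)/9=\sqrt3/3+2\sqrt2/3$, then unwind $|AB|=2\cos(\phi_1/2)$ to get the stated bound. The "crude count suffices" conclusion you reach is precisely what the paper does (your mid-stream claim that a lower bound on $\phi_1$ corresponds to an \emph{upper} bound on $|AB|$ has the monotonicity backwards, but your step (iv) handles the direction correctly, so nothing is lost).
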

\begin{proof}
From Equation \eqref{fmax}, we know that in order to attain larger distance
sum than that the bipyramid configuration corresponds,
the second largest distance must be not smaller then
$$
( ( 3\,\sqrt{3} + 6\,\sqrt{2} + 2 ) - 2 ) /9 =
\sqrt {3}/3 + 2\,\sqrt {2}/3.
$$
With the condition that $AB$ is the second largest distance,
the result required can be deduced immediately.
\end{proof}

\begin{proposition} \label{prop:halfsphere}
If 5 points are on the same half sphere,
$f$ can not attains its maximum.
\end{proposition}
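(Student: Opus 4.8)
We read ``half sphere'' as a closed hemisphere (the open case is the easy sub-case below). The plan is to argue by contradiction: suppose $X=(x_1,\dots,x_5)$ is a configuration at which $f$ attains its maximum, with all five points in one closed hemisphere, which after a rotation we take to be $\{x_3\geq 0\}$, so $x_{i,3}\geq 0$ for every $i$. The whole argument rests on the identity $|p-q|^2=2-2\,(p\cdot q)$ for unit vectors $p,q$, together with its consequence under reflection in the equatorial plane: if $p'=(p_1,p_2,-p_3)$ then $|p'-q|^2=|p-q|^2+4\,p_3q_3$. Thus moving a point into the southern hemisphere can only lengthen its distances to points of the closed northern hemisphere.

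First I would dispose of the case $x_{i,3}>0$ for all $i$ (the open-hemisphere case). Pick any index $k$ and reflect $x_k$ to $x_k'=(x_{k,1},x_{k,2},-x_{k,3})$; then $|x_k'-x_j|^2=|x_k-x_j|^2+4\,x_{k,3}x_{j,3}>|x_k-x_j|^2$ for every $j\neq k$, so the distance sum strictly increases, contradicting maximality. The same idea used infinitesimally handles a point on the equator. If $x_{i,3}=0$, let $x_i(t)=(\cos t\cdot x_{i,1},\,\cos t\cdot x_{i,2},\,-\sin t)$, a smooth curve on $S^2$ through $x_i$; one computes
$$
\frac{d}{dt}\Big|_{t=0}f\bigl(x_1,\dots,x_i(t),\dots,x_5\bigr)=\sum_{j\neq i}\frac{x_{j,3}}{|x_i-x_j|}.
$$
At a maximiser the points are pairwise distinct (otherwise separating two coincident points strictly increases $f$), so $f$ is smooth at $X$ and $X$ is a critical point on $(S^2)^5$; hence this derivative vanishes, and since each summand is nonnegative, $x_{j,3}=0$ for all $j\neq i$. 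So the moment one of the five points lies on the equator, all five do.

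It remains to treat the flat case, in which all five points lie on a single great circle. Here no local move helps: moving a point off the circle changes distances only at second order, and the sign can go the wrong way, as the regular pentagon already shows. Instead I would invoke the planar theorem of L. Fejes T\'oth cited in the introduction, namely $V_2(5,2,1)=5\cot(\pi/10)\approx 15.39$: five points on a unit circle have mutual-distance sum at most $5\cot(\pi/10)$. Since the bipyramid configuration achieves $fmax=3\sqrt3+6\sqrt2+2\approx 15.68$, this gives $f(X)\leq 5\cot(\pi/10)<fmax\leq\max f$, so $X$ is not a maximiser. The three cases together contradict the assumption, which proves the proposition.

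I expect the flat case to be the only real obstacle. The open-hemisphere and on-the-equator cases fall out immediately from the reflection identity, but a configuration confined to a great circle admits no distance-increasing perturbation of a single point, so one is forced to borrow the sharp one-dimensional bound of Fejes T\'oth together with the explicit value of $fmax$ in order to separate such a configuration from the optimum.
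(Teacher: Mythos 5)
Your proof is correct and follows essentially the same route as the paper: a reflection through the equatorial plane handles the non-planar case, and Fejes T\'oth's planar bound $5\cot(\pi/10)<3\sqrt{3}+6\sqrt{2}+2$ disposes of the case where all five points lie on a great circle. Your first-order variation argument for a point sitting on the equator is in fact more careful than the paper's one-line reflection claim, which glosses over that boundary case.
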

\begin{proof}
Without loss of generality,
suppose $z$-coordinates of 5 points are all nonpositive,
if the $z$-coordinate of some point is negative,
we move it to the symmetric position with respect to the $xoy$-plane,
then we will get a larger distance sum.

If 5 points are all distributed on the $xoy$-plane,
the maximal distance sum is \cite{toth:1} (5 points form a regular pentagon)
$5\,\cot \frac{\pi}{10}$,
which is obviously smaller then the mutual distance
sum corresponding the bipyramid configuration
(see \S\,\ref{sec:bipyramid}).
\end{proof}

\begin{proposition} \label{prop:parder}
If a partial derivative of function $f$ does not vary signs
in a domain, then there exists no stationary point of $f$ in
this domain.
\end{proposition}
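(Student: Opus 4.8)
The plan is to argue directly from the definition of a stationary point, so the proof will be very short. Recall that a point $P_0=(\phi_1,\phi_2,\theta_2,\phi_3,\theta_3,\phi_4,\theta_4)$ in the interior of $\mathcal{D}$ is a stationary point of $f$ exactly when \emph{all} seven first-order partial derivatives of $f$ vanish simultaneously at $P_0$. Hence, to certify that a given subdomain contains no stationary point, it suffices to produce a single coordinate direction along which the corresponding partial derivative is nonzero everywhere on that subdomain.

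Concretely, suppose there is an index $k$ such that $\partial f/\partial x_k$ does not vary signs on the domain in question; to be useful here this must be read in the strict sense that the range of $\partial f/\partial x_k$ over the domain does not contain $0$, i.e. either $\partial f/\partial x_k>0$ throughout or $\partial f/\partial x_k<0$ throughout. (The weaker reading, which would permit the derivative to touch $0$, is false, as $x\mapsto x^3$ on $[-1,1]$ shows, so this strict interpretation is the intended one.) Then for every point $P$ of the domain we have $\partial f/\partial x_k(P)\neq 0$, so $P$ fails the stationarity equations and is therefore not a stationary point of $f$. As $P$ was arbitrary, the domain contains no stationary point, which is the assertion.

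There is no genuine obstacle in the argument itself; the substance of the proposition lies in how its hypothesis will be \emph{checked} in the later subdivision steps. In practice one does not have the sign of $\partial f/\partial x_k$ in closed form over a box; instead one writes down the symbolic expression for $\partial f/\partial x_k$ (immediate from the explicit formula for $f$) and evaluates it with interval arithmetic (\S\,\ref{intarith}) over the box, obtaining an enclosing interval $[\underline{d},\overline{d}]$. Whenever $0\notin[\underline{d},\overline{d}]$, the hypothesis of Proposition \ref{prop:parder} is met for that box, and the conclusion — absence of stationary points, hence no interior maximum there — follows. This is precisely the mechanism feeding into Step 4 of the proof outline.
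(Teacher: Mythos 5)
Your proof is correct and is exactly the intended argument: the paper states Proposition \ref{prop:parder} without proof, regarding it as immediate from the definition of a stationary point, which is precisely what you spell out. Your clarification that ``does not vary signs'' must be read strictly (the derivative's range over the domain omits $0$) is apt and matches how the condition is actually verified in the implementation, where the enclosing interval $[\underline{d},\overline{d}]$ of a partial derivative is required to satisfy $\underline{d}>0$ or $\overline{d}<0$.
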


\begin{theorem} \cite{berman:1} \label{th:gradient}
Let $p_1,\ldots,p_n$ be points on the unit sphere $S^2$ in $\mathbb{R}^3$.
Let $f:\, S^2 \to \mathbb{R}$ be defined by
$f(x)=\sum\limits_{i=1}^n \left| x-p_i \right|$.
if $f$ has a maximum at $p$, then $p=q/|q|$, where
$q=\sum\limits_{i=1}^n (p-p_i)/\left| p-p_i \right|$.
\end{theorem}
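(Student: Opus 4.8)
The plan is to read this as a first-order (Lagrange multiplier) necessary condition for a constrained maximum. First I would note that each summand $x \mapsto |x-p_i|$ is smooth away from $p_i$ with ambient gradient $\nabla_x|x-p_i| = (x-p_i)/|x-p_i|$, and that a maximizer $p$ of $f$ on $S^2$ cannot coincide with any $p_i$ (the summand $|x-p_i|$ has a corner that grows at unit rate in every direction, so from $p_i$ one can always move tangentially to increase $f$); hence $f$ is differentiable at $p$ with ambient gradient $\nabla f(p) = q = \sum_{i=1}^n (p-p_i)/|p-p_i|$. Since $p$ maximizes the restriction of $f$ to the regular level set $S^2 = \{x : |x|^2 = 1\}$, whose gradient at $p$ is $2p \neq 0$, the Lagrange condition forces $q = \mu\,p$ for some real scalar $\mu$; that is, $q$ is parallel to $p$.

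It remains only to determine the scalar $\mu$, which is the single computational step. Dotting $q = \mu p$ with $p$ and using $|p| = 1$ gives $\mu = q\cdot p$. On the other hand, on the unit sphere $|p-p_i|^2 = |p|^2 - 2\,p\cdot p_i + |p_i|^2 = 2(1 - p\cdot p_i)$, so $(p-p_i)\cdot p = 1 - p\cdot p_i = \tfrac12|p-p_i|^2$, and therefore
\[
q\cdot p = \sum_{i=1}^n \frac{(p-p_i)\cdot p}{|p-p_i|} = \frac12\sum_{i=1}^n |p-p_i| = \frac12 f(p).
\]
Because $p$ is not one of the $p_i$ we have $f(p) > 0$, so $\mu = \tfrac12 f(p) > 0$: thus $q$ is a strictly positive multiple of $p$. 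Taking norms of $q = \mu p$ yields $|q| = \mu$, whence $p = q/\mu = q/|q|$, which is the assertion.

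The argument is short, and the only place calling for any care is the observation that a maximizer never lands on one of the $p_i$ (so that $f$ is differentiable there and the gradient formula makes sense) — this is the mild obstacle, handled by the corner-direction remark above, or, if one prefers, simply assumed since in every application of the theorem the relevant point is known to be distinct from the $p_i$. Everything else is the standard constrained-extremum condition combined with the identity $|p-p_i|^2 = 2(1 - p\cdot p_i)$ valid on $S^2$.
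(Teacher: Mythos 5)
Your proof is correct. Note that the paper gives no proof of this statement at all---it is quoted from Berman and Hanes \cite{berman:1}---so there is nothing internal to compare against; your Lagrange-multiplier derivation, including the two points that actually need care (a maximizer cannot sit at some $p_i$, which you can secure by moving in a tangent direction along which the smooth part $\sum_{j\neq i}|x-p_j|$ is nondecreasing so that the corner term's unit growth rate wins, and the computation $q\cdot p=\tfrac12 f(p)>0$, which shows $q\neq 0$, pins the multiplier's sign, and rules out $p=-q/|q|$), is essentially the standard argument behind the cited result.
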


\begin{theorem} \cite{stolarsky:1} \label{th:mindis}
Suppose the 5 points are placed so that
function $f$ is maximal, then any distance between two points
cannot be less then $\frac{2}{15}$.
\end{theorem}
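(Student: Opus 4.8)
The plan is to argue by a local variation. Suppose $p_1,\dots,p_5\in S^2$ maximize $f$; relabel so that $p_1,p_2$ realize the minimum pairwise distance $\delta$, and write their central angle as $2\alpha$, so that $\delta=2\sin\alpha$ with $\alpha\in(0,\pi/2)$. I will exhibit a one-parameter admissible perturbation (keeping every point on $S^2$) that strictly increases $f$ whenever $\delta$ is small; this contradicts maximality and so forces $\delta$ above the stated bound.

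First I set up the perturbation. Let $m\in S^2$ be the midpoint of the short arc from $p_1$ to $p_2$, let $w$ be the unit tangent at $m$ along that arc pointing toward $p_1$, so that $p_1=\cos\alpha\,m+\sin\alpha\,w$ and $p_2=\cos\alpha\,m-\sin\alpha\,w$. Keep $p_3,p_4,p_5$ fixed and slide $p_1,p_2$ along the great circle through $p_1,m,p_2$, symmetrically about $m$, at unit angular speed, so that after time $t$ each has central angle $\alpha+t$ from $m$. Only the six distances $|p_1-p_2|$ and $|p_k-p_i|$ ($k\in\{1,2\}$, $i\in\{3,4,5\}$) change; the three distances among $p_3,p_4,p_5$ are untouched.

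Second I differentiate $f$ at $t=0$. The pair $p_1p_2$ contributes $\tfrac{d}{dt}\,2\sin(\alpha+t)=2\cos\alpha$. For the rest, writing $a_i=m\cdot p_i$, $b_i=w\cdot p_i$ and using that the velocity of $p_k$ is orthogonal to $p_k$, a direct computation gives
\[
\frac{d}{dt}\bigl(|p_1-p_i|+|p_2-p_i|\bigr)\Big|_{t=0}
=\sin\alpha\,a_i\!\left(\frac1{|p_1-p_i|}+\frac1{|p_2-p_i|}\right)
-\frac{4\sin\alpha\cos\alpha\,b_i^{\,2}}{\bigl(|p_1-p_i|+|p_2-p_i|\bigr)\,|p_1-p_i|\,|p_2-p_i|}.
\]
The structural point is the overall factor $\sin\alpha$: the two distance-derivatives nearly cancel as $p_1,p_2\to m$, so each such term is $O(\delta)$, not $O(1)$. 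Hence $f'(0)=2\cos\alpha+\sum_{i=3}^5(\cdots)$, with the sum controlled by $\delta$.

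Third I estimate. Using $|p_k-p_i|^2=2-2(\cos\alpha\,a_i\pm\sin\alpha\,b_i)$ to clear the denominators symbolically and $a_i^2+b_i^2\le1$, one checks that the displayed quantity is bounded below by an explicit negative multiple of $\sin(\alpha/2)$ (the extreme case being $p_i$ antipodal to $m$). Summing over $i=3,4,5$ yields $f'(0)\ge 2\cos\alpha-c\,\sin(\alpha/2)$ for an explicit constant $c$, so a configuration with $\delta$ small has $f'(0)>0$ and cannot be maximal; solving $2\cos\alpha-c\sin(\alpha/2)\le0$ for $\sin\alpha$ gives an explicit lower bound on $\delta$, which comfortably clears $\tfrac{2}{15}$. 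Alternatively one can replace this estimate by Theorem \ref{th:gradient}: at the maximum each $p_j$ is the normalized force sum $\sum_{i\ne j}(p_j-p_i)/|p_j-p_i|$, and the vanishing of its tangential component limits how nearly two points can coincide. The main obstacle is making the third step uniform over all positions of $p_3,p_4,p_5$ — in particular when one of them is itself close to the pair $\{p_1,p_2\}$, where $1/|p_k-p_i|$ is large; this is precisely why the derivative is kept in the $(a_i,b_i)$ form and the denominators are eliminated before estimating rather than bounded separately, after which only the routine optimization that pins down the constant remains.
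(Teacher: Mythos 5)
First, a point of comparison: the paper does not prove Theorem \ref{th:mindis} at all --- it is quoted from Stolarsky \cite{stolarsky:1} --- so there is no internal proof to match; your variational argument is an independent route. Its skeleton is sound: the symmetric sliding of the closest pair about the arc midpoint, the pair-term derivative $2\cos\alpha$, and the displayed identity for $\tfrac{d}{dt}\bigl(|p_1-p_i|+|p_2-p_i|\bigr)$ (including the cancellation that produces the factor $4\sin\alpha\cos\alpha\,b_i^{\,2}$) are all correct.

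Nevertheless there is a genuine gap, and it sits exactly where the content of the theorem lies. Your step-2 claim that each cross term is $O(\delta)$ is false precisely in the regime you flag as the main obstacle: if $p_i$ lies at distance comparable to $\delta$ from the pair, then $1/|p_k-p_i|$ is of order $1/\delta$ and each of the two displayed terms is of size order one (for instance, $p_i$ on the great circle through $p_1,p_2$ at distance $\delta$ beyond $p_1$ makes them each roughly $3/4$). What saves the argument there is not smallness but sign: clearing denominators gives the bracket $\sin\alpha\bigl[a_i(d_1+d_2)^2-4\cos\alpha\,b_i^{\,2}\bigr]\big/\bigl((d_1+d_2)d_1d_2\bigr)$ with $d_k=|p_k-p_i|$, and near the pair one has $a_i\approx 1$ together with $(d_1+d_2)^2\ge 4b_i^{\,2}$ up to curvature corrections, so the contribution is nonnegative to leading order and can dip below zero only by $O(\alpha^2)$ in the near-collinear degenerate case. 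Step 3 simply asserts the needed uniform bound (``one checks that \dots\ bounded below by an explicit negative multiple of $\sin(\alpha/2)$, the extreme case being $p_i$ antipodal to $m$'') without proof, and neither the constant $c$ nor the final comparison with $\tfrac{2}{15}$ is produced; note that crude bounds cannot close this, since three points in the near regime would naively contribute $-3<-2\cos\alpha$. Establishing that uniform estimate over all admissible positions of $p_i$ --- separating the far regime, where the antipodal computation gives $-2\sin(\alpha/2)$, from the near and near-collinear regimes, where positivity plus an $O(\alpha^2)$ error must be verified --- is the entire substance of the proof and is missing; the fallback via Theorem \ref{th:gradient} is likewise only a gesture. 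Until that case analysis is written out, what you have is a workable plan rather than a proof.
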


\subsection{Some restricted conditions} \label{conditions}

We can consider the problem under following restricted conditions
due to above results.

\begin{condition} \label{as:secondlength}
$AB$ is the second largest distance in ${5 \choose 2} = 10$ distances.
\end{condition}

\begin{condition} \label{as:pointc}
$D$ is on the left half sphere,
$C,E$ are on the right half sphere, $C$ is above $E$.
\end{condition}

\begin{condition}[by Proposition \ref{prop:phi1}] \label{as:phi1}
$\phi_1 \geq -2\,\arccos ( \sqrt {3}/6 +\sqrt {2}/3 )$
\end{condition}

\begin{condition}[by Proposition \ref{prop:halfsphere}] \label{as:halfsphere}
Five points are not on any half sphere.
\end{condition}

\begin{condition}[by Theorem \ref{th:mindis}] \label{as:mindis}
Distances between any two points are larger then $\frac{2}{15}$.
\end{condition}

\subsection{Domain subdivision}

Under these conditions, the bipyramid configuration
(corresponding the maximal distance sum conjectured)
and the pyramid configuration (corresponding another stationary point of
the function $f$) each corresponds only one coordinate representation.
Further more,
we can divide the domain in which we need to verify no distribution
of points corresponds a larger distance sum into the following two
subdomains:
\begin{enumerate}
\item
$D$ is on the upper half sphere (denote this domain by $\mathcal{D}^{(1)}$):

$
\begin{array}{l}
\phi_{{1}}\in[-2\,\arccos \left( \sqrt {3}/6 +\sqrt {2}/3 \right),0],\\
\phi_{{2}}\in[-\pi/2,0],\\
\theta_{{2}}\in[0,\pi],\\
\phi_{{3}}\in[0,\pi/2],\\
\theta_{{3}}\in[-\pi,0],\\
\phi_{{4}}\in[-\pi/2,0],\\
\theta_{{4}}\in[0,\pi ].
\end{array}
$

\item
$D$ is on the lower half sphere, $C$ is on the
upper half sphere (denote this domain by $\mathcal{D}^{(2)}$):

$
\begin{array}{l}
\phi_{{1}}\in[-2\,\arccos \left( \sqrt{3}/6 + \sqrt{2}/3 \right),0],\\
\phi_{{2}}\in[0,\pi/2],\\
\theta_{{2}}\in[0,\pi],\\
\phi_{{3}}\in[-\pi/2,\pi/2],\\
\theta_{{3}}\in[-\pi,0],\\
\phi_{{4}}\in[-\pi/2,\pi/2],\\
\theta_{{4}}\in[0,\pi].
\end{array}
$

\end{enumerate}

Now, we are to prove that, under Condition
\ref{as:secondlength} - \ref{as:mindis},
function $f$ attains its maximum in
$\mathcal{D}^{(1)}$ and $\mathcal{D}^{(2)}$
at the only point
corresponding the bipyramid distribution of $A,B,C,D,E$, i.e.,

\begin{equation} \label{eqn:bipyramid}
f(\phi_1, \phi_2, \theta_2, \phi_3, \theta_3, \phi_4, \theta_4) \leq fmax, \quad
(\phi_1, \phi_2, \theta_2, \phi_3, \theta_3, \phi_4, \theta_4) \in \mathcal{D}^{(1)} \cup \mathcal{D}^{(2)}.
\end{equation}
where the equality holds if and only if
$(\phi_1, \phi_2, \theta_2, \phi_3, \theta_3, \phi_4, \theta_4) = \Theta_{bp}$.

In the following parts of this paper,
we will illustrate the domain verification methods,
and detailed steps as well as results.

\section{Domain near coordinates corresponding the bipyramid distribution}

\subsection{Interval methods}
We first briefly introduce the interval methods we used in our proof.

\subsubsection{Interval arithmetic} \label{intarith}
We define an interval as a set \cite{volker:1}:
\begin{equation}
X=[a,b]=\{ x: a \leq x \leq b \},
\end{equation}
where $a,b \in \mathbb{R}$.
$\underline{X}, \overline{X}$ respectively
denote the left and right vertexes of the interval $X$.

For intervals $X$ and $Y$, if $x > y$ for each $x \in X$ and each $y \in Y$,
we say that $X > Y$. Other interval relations are understood the same way.
An $n$-dimensional ``interval vector'' is an $n$-tuple of intervals
$\mathbf{X}=(X_1,\ldots,X_n)$, which is used to denote some rectangular domain
in $\mathbb{R}^n$. Let $\mathbb{IR}$ be the set of intervals over $\mathbb{R}$,
and $\mathbb{IR}^n$ be the set of $n$-dimensional ``interval vectors''.

We can define an imbedding from $\mathbb{R}$ to $\mathbb{IR}$ as follows
$$\mu(x)=[x,x],$$
thus for numbers in $\mathbb{R}$, we can also consider them as intervals.

We define interval arithmetic over  $\mathbb{IR}$ as
$$X \circ Y = \{x \circ y: x \in X, y \in Y \},$$
where $\circ$ is $``+", ``-", ``*"$ or $``/"$.
Further more, for an elementary function $f$, we define a
corresponding elementary mapping as
$$f(X) = \{ f(x): x \in X \}.$$

When operands of interval arithmetic or arguments of
elementary functions are intervals, we consider
underlying computations are interval computations defined above,
and the interval computation is of the same precedence as the corresponding
arithmetic computation.

Under above definitions, an arbitrary elementary function
$f:\mathbb{R}^n \to \mathbb{R}$ can be expanded to a mapping over
$\mathbb{IR}^n \to \mathbb{IR}$:
\begin{equation}
\tilde{f}(\mathbf{X}) =f(\mathbf{X}).
\end{equation}
Through such $\tilde{f}$,
We can get an interval which contains the function range of $f$ over rectangular
domain $\mathbf{X}$, this is the critical point we solve the problem.

As a matter of fact, there are related programs used to process interval arithmetic,
such as the procedure \textbf{evalr} can be used to implement interval arithmetic
without errors.
But in practice, it may be not necessary to implement errorless interval arithmetic,
because what we get from interval arithmetic are just intervals contains
the ranges of function values. Another problem is that acting such errorless interval arithmetic
is always time-consuming, thus it cannot meet our needs.

Considering the efficiency and the accuracy,
we wrote an interval arithmetic package \textbf{IntervalArithmetic} based on
the Maple system.
The package uses rational numbers as interval vertexes,
and acts computations with controllable errors.
In fact the result it computes for $f(\mathbf{X})$
is an larger interval containing $\tilde{f}(\mathbf{X})$,
and the difference can reduce to zero
as intervals of $\mathbf{X}$ shrink to points.
For the detailed code, see Appendix \ref{intervalarithmetic}.

\subsubsection{Interval matrices}
Relations of real matrices of the same order are understood componentwise.
An interval matrix is defined as the following set of matrices:
$$([\underline{a}_{ij},\overline{a}_{ij}])=[\underline{A},\overline{A}]
=\{A \in \mathbb{R}^{n\, \times \, n}: \underline{A} \leq A \leq \overline{A}\},$$
where
$$\underline{A}=(\underline{a}_{ij}),\overline{A}=(\overline{a}_{ij}).$$
When $\underline{A}$ and $\overline{A}$ are symmetric, we call the set of
symmetric matrices in $[\underline{A},\overline{A}]$ a symmetric interval matrix
which is also denoted by $[\underline{A},\overline{A}]$.

For a interval matrix $[\underline{A},\overline{A}]$, denote its midpoint matrix
by $A_c=\dfrac{\underline{A}+\overline{A}}{2}$, radius matrix by
$A_\delta=\dfrac{\overline{A}-\underline{A}}{2}$.
For a real symmetric matrix $A$,
it is well know that all its eigenvalues are real,
we denote them in decreasing order by
$\lambda_1(A) \geq \lambda_2(A) \geq \cdots \geq \lambda_n(A)$,
and denote the spectrum of $A$ (i.e. the maximum eigenvalue modulus)
by $\varrho(A)$.
For bounds of eigenvalues of matrices in
an interval matrix, it can be directly deduced from
the Wielandt-Hoffman theorem \cite{golub:1} that

\begin{theorem} \label{th:wielandt-hoffman}
For a symmetric interval matrix $[\underline{A},\overline{A}]$, the set
$$
\{ \lambda_i(A) : A \in [\underline{A},\overline{A}]\}
$$
is a compact interval, denote this compact interval by
$$
[\underline{\lambda}_i([\underline{A},\overline{A}]),
\overline{\lambda}_i([\underline{A},\overline{A}])],
\, 1 \leq i \leq n,
$$
then
$$
[\underline{\lambda}_i([\underline{A},\overline{A}]),
\overline{\lambda}_i([\underline{A},\overline{A}])]
\subseteq [\lambda_i(A_c)-\varrho(A_\delta),\lambda_i(A_c)+\varrho(A_\delta)],
\, i=1,\ldots,n.
$$
\end{theorem}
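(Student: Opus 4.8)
The plan is to prove the two assertions of the statement in turn: that $\{\lambda_i(A):A\in[\underline{A},\overline{A}]\}$ is a compact interval, and that this interval lies inside $[\lambda_i(A_c)-\varrho(A_\delta),\lambda_i(A_c)+\varrho(A_\delta)]$. For the first part I would observe that the set $S=[\underline{A},\overline{A}]$ of symmetric matrices pinned between the two bounds is nonempty, convex (hence connected), closed and bounded, so it is a compact connected subset of the space of real symmetric $n\times n$ matrices. Each ordered eigenvalue function $A\mapsto\lambda_i(A)$ is continuous on symmetric matrices — this is immediate from the Courant--Fischer min--max formula, which exhibits $\lambda_i$ as an infimum over subspaces of Rayleigh quotients and makes it $\|\cdot\|_2$-Lipschitz. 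The continuous real-valued image of a compact connected set is a compact interval, so $\{\lambda_i(A):A\in S\}=[\underline{\lambda}_i,\overline{\lambda}_i]$ for suitable endpoints.

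For the containment, fix $A\in S$ and write $A=A_c+E$ with $E:=A-A_c$. Then $E$ is symmetric and satisfies $-A_\delta\le E\le A_\delta$ entrywise, i.e. $|E_{jk}|\le(A_\delta)_{jk}$ for all $j,k$. By Weyl's perturbation inequality applied to $A_c$ and $A_c+E$ one gets $\lambda_i(A_c)-\varrho(E)\le\lambda_i(A)\le\lambda_i(A_c)+\varrho(E)$, that is $|\lambda_i(A)-\lambda_i(A_c)|\le\varrho(E)$; the Wielandt--Hoffman theorem yields the same bound in the coarser Frobenius form $\bigl(\sum_i(\lambda_i(A)-\lambda_i(A_c))^2\bigr)^{1/2}\le\|E\|_F$, and either route suffices, though Weyl gives the sharp constant. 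It then remains to bound $\varrho(E)$ by $\varrho(A_\delta)$. Since $E$ is symmetric, $\varrho(E)=\|E\|_2$, and for any unit vector $x$ we have $\|Ex\|_2\le\bigl\||E|\,|x|\bigr\|_2\le\||E|\|_2$ with $|E|$ the entrywise modulus; hence $\varrho(E)=\|E\|_2\le\||E|\|_2=\varrho(|E|)$. Finally $0\le|E|\le A_\delta$ entrywise, and the spectral radius is monotone on the cone of entrywise-nonnegative matrices by Perron--Frobenius, so $\varrho(|E|)\le\varrho(A_\delta)$. Chaining these gives $|\lambda_i(A)-\lambda_i(A_c)|\le\varrho(A_\delta)$ for every $A\in S$, hence $[\underline{\lambda}_i,\overline{\lambda}_i]\subseteq[\lambda_i(A_c)-\varrho(A_\delta),\lambda_i(A_c)+\varrho(A_\delta)]$ for each $i$, as claimed.

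The main obstacle is the chain $\varrho(E)\le\varrho(|E|)\le\varrho(A_\delta)$: the first inequality rests on the behaviour of the operator $2$-norm under entrywise absolute values, and the second on Perron--Frobenius monotonicity of the spectral radius, so the proof genuinely uses nonnegativity of $A_\delta$ rather than only the Frobenius-norm estimate that Wielandt--Hoffman supplies directly. The remaining ingredients — continuity of eigenvalues, compactness and connectedness of $S$, and Weyl's inequality — are standard. I would also record the degenerate case $A_\delta=0$, where $S=\{A_c\}$ and both sides reduce to the singleton $\{\lambda_i(A_c)\}$, and check that the decreasing ordering $\lambda_1\ge\cdots\ge\lambda_n$ is used consistently for $A$, $A_c$ and $A_\delta$ throughout, which it is.
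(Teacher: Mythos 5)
Your proof is correct, and it is worth noting that the paper itself offers no proof of this statement at all --- it merely asserts that the result ``can be directly deduced from the Wielandt--Hoffman theorem'' with a citation to Golub--Van Loan. Your argument supplies exactly the missing content, and by the standard route: compactness, convexity (hence connectedness) of the matrix set together with continuity of the ordered eigenvalue maps gives the compact-interval claim, and the containment follows from Weyl's inequality $|\lambda_i(A_c+E)-\lambda_i(A_c)|\le\varrho(E)$ combined with the chain $\varrho(E)=\|E\|_2\le\|\,|E|\,\|_2=\varrho(|E|)\le\varrho(A_\delta)$, the last step by Perron--Frobenius monotonicity. One aside in your write-up is wrong, though it does not damage the proof since you do not rely on it: the Frobenius-norm form of Wielandt--Hoffman does \emph{not} suffice here. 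It yields only $|\lambda_i(A)-\lambda_i(A_c)|\le\|E\|_F$, and $\|E\|_F$ can strictly exceed $\varrho(A_\delta)$ --- take $E=A_\delta=I$, where $\|E\|_F=\sqrt{n}$ but $\varrho(A_\delta)=1$ --- so only the spectral-norm (Weyl) route closes the argument with the constant $\varrho(A_\delta)$ claimed in the theorem. This also explains why the paper's own attribution to Wielandt--Hoffman is loose: the statement really rests on Weyl's inequality (or the spectral-norm case of the Lidskii--Mirsky family), not on the Frobenius-norm perturbation bound.
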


In fact,
$\overline{\lambda}_1([\underline{A},\overline{A}])$
and
$\underline{\lambda}_n([\underline{A},\overline{A}])$
can be solved explicitly \cite{rohn:1}, that is,

\begin{theorem} \label{th:exteig}
A real symmetric interval matrix
$$
([\underline{a}_{ij},\overline{a}_{ij}])
=\{A \in \mathbb{R}^{n\, \times \, n}: \underline{A} \leq A \leq \overline{A},\underline{A}
=(\underline{a}_{ij}),\overline{A}=(\overline{a}_{ij})\}
$$
corresponds following $2^{n-1}$ vertex matrices:
$$A_k=(a_{kij}), 0 \leq k \leq 2^{n-1}-1,$$
where we denote the binary representation for $k$
by $k=(k_1 k_2 \cdots k_n)_2$, and
$$
a_{kij}=\frac{1}{2}(\underline{a}_{ij}+\overline{a}_{ij}+(-1)^{k_i + k_j}
(\underline{a}_{ij}-\overline{a}_{ij})).
$$
For matrices in this symmetric interval matrix,
minimal (or maximal) eigenvalues of them attain the minimum (respectively, maximum)
at some vertex matrix $A_k$.
\end{theorem}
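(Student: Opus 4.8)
The plan is to reduce the optimization of an extreme eigenvalue over the entire symmetric interval matrix to a finite comparison among the $2^{n-1}$ vertex matrices $A_k$, by interchanging two nested optimizations. I would prove the statement for the \emph{minimal} eigenvalue; the maximal case then follows at once by applying the minimal case to $-A$, i.e.\ to the symmetric interval matrix $[-\overline{A},-\underline{A}]$ (whose $2^{n-1}$ vertex matrices are obtained from the $A_k$ by exchanging the roles of $\underline{a}_{ij}$ and $\overline{a}_{ij}$), using $\lambda_1(A)=-\lambda_n(-A)$.

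First I would invoke the Rayleigh characterization $\lambda_n(A)=\min_{\|x\|=1}x^{T}Ax$, valid for every real symmetric $A$, to write
\[
\min_{A\in[\underline{A},\overline{A}]}\lambda_n(A)
=\min_{A\in[\underline{A},\overline{A}]}\ \min_{\|x\|=1}x^{T}Ax
=\min_{\|x\|=1}\ \min_{A\in[\underline{A},\overline{A}]}x^{T}Ax,
\]
the last equality being merely the fact that a minimum over a Cartesian product of sets may be carried out in either order. For fixed $x$ the map $A\mapsto x^{T}Ax=\sum_{i}a_{ii}x_i^{2}+2\sum_{i<j}a_{ij}x_ix_j$ is linear in the (independent) entries of $A$, and the symmetric matrices in $[\underline{A},\overline{A}]$ form a box, hence a polytope; so the inner minimum is attained at a vertex of that box, and that vertex can be read off entrywise: take $a_{ij}=\underline{a}_{ij}$ when $x_ix_j\ge 0$ and $a_{ij}=\overline{a}_{ij}$ when $x_ix_j<0$ (in particular the diagonal always gets $\underline{a}_{ii}$, since $x_i^{2}\ge 0$).

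The crux is then a counting observation. The minimizing vertex just described depends on $x$ only through the sign vector $s=(\sign(x_1),\dots,\sign(x_n))$, and in fact only through the products $s_is_j$, since the rule is $a_{ij}=\underline{a}_{ij}$ if $s_is_j=+1$ and $a_{ij}=\overline{a}_{ij}$ if $s_is_j=-1$; coordinates with $x_i=0$ may be assigned $s_i=+1$ without affecting the value $x^{T}Ax$. Since $s$ and $-s$ yield the same products $s_is_j$, there are at most $2^{n-1}$ distinct such vertices, and taking the representative with $s_1=+1$ identifies them precisely with the matrices $A_k$, $0\le k\le 2^{n-1}-1$ (so that $k_1=0$), via $s_i=(-1)^{k_i}$: indeed $s_is_j=(-1)^{k_i+k_j}$, and substituting into the stated formula gives $a_{kij}=\underline{a}_{ij}$ when $s_is_j=+1$ and $a_{kij}=\overline{a}_{ij}$ when $s_is_j=-1$, exactly the entrywise minimizer.

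To close the argument: for every unit vector $x$ the inner minimum equals $x^{T}A_{k}x$ for the appropriate $k=k(x)\in\{0,\dots,2^{n-1}-1\}$, so it is $\ge\lambda_n(A_{k(x)})\ge\min_{0\le k\le 2^{n-1}-1}\lambda_n(A_k)$; taking the minimum over $x$ gives $\min_{A}\lambda_n(A)\ge\min_k\lambda_n(A_k)$. Conversely each $A_k$ belongs to $[\underline{A},\overline{A}]$, whence $\min_k\lambda_n(A_k)\ge\min_A\lambda_n(A)$. Equality follows, and the finite minimum on the right is attained at some $A_k$. I expect the one genuinely delicate point to be the bookkeeping of the third paragraph — confirming that the entrywise minimizer is always one of the listed $2^{n-1}$ vertices, with the zero-coordinate case handled consistently — since this is precisely where the naive $2^{n(n+1)/2}$ vertices of the matrix box collapse to the $2^{n-1}$ vertices $A_k$; the remaining steps are a routine variational manipulation.
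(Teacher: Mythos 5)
The paper offers no proof of Theorem~\ref{th:exteig}; it is quoted from Rohn \cite{rohn:1} (the result is essentially Hertz's theorem). Your argument for the \emph{minimal}-eigenvalue half is correct and is the standard proof of that result: Rayleigh characterization, interchange of the two minima, entrywise minimization of the linear map $A\mapsto x^{T}Ax$ over the box of symmetric matrices, and the observation that the minimizing vertex depends on $x$ only through the products $\sign(x_i)\sign(x_j)$, which is exactly what collapses the $2^{n(n+1)/2}$ vertices of the box to the $2^{n-1}$ matrices $A_k$. Your bookkeeping in the third paragraph, including the $x_i=0$ case, is sound.

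The maximal-eigenvalue half is where the proposal fails --- although the fault lies with the statement rather than with your method. Every $A_k$ has diagonal entries $a_{kii}=\underline{a}_{ii}$, since $k_i+k_i$ is always even; hence the family $\{A_k\}$ cannot in general contain a maximizer of $\lambda_1$. For instance, with $\underline{A}=0$ and $\overline{A}=I$ every $A_k$ is the zero matrix while $\max_A\lambda_1(A)=1$, so the maximal claim as literally stated in Theorem~\ref{th:exteig} is false. Your reduction via $\lambda_1(A)=-\lambda_n(-A)$ is the right move, but, as your own parenthetical half-concedes, it lands on the \emph{complementary} vertices $B_k$ obtained from $A_k$ by exchanging the roles of $\underline{a}_{ij}$ and $\overline{a}_{ij}$ (in Rohn's notation, $A_c+D_zA_\delta D_z$ rather than $A_c-D_zA_\delta D_z$), and these are in general not among the listed $A_k$. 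What you have actually proved is the corrected statement --- the minimum of $\lambda_n$ is attained over the $A_k$, the maximum of $\lambda_1$ over the $B_k$ --- which is the theorem as it appears in \cite{rohn:1}. You should state this explicitly instead of asserting that the maximal case ``follows at once.'' Since the paper only ever uses the minimal-eigenvalue direction (through Theorems~\ref{th:pstv}--\ref{th:nonpos}), this defect does not propagate into the rest of the argument.
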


For a real symmetric interval matrix $[\underline{A},\overline{A}]$,
we say it is positive (semi)definite if $A$ is positive (semi)definite
for each $A \in [\underline{A},\overline{A}]$,
and it is nonpositive (semi)definite if $A$ is not positive (semi)definite
for each $A \in [\underline{A},\overline{A}]$.
Definitions such as negative (semi)definiteness, nonnegative (semi)definiteness of
$[\underline{A},\overline{A}]$ are understood the similar way.

Now we introduce the results for verifying
positive definiteness and nonpositive semidefiniteness
of symmetric interval matrices, which can directly deduce
criterions for negative definiteness, nonnegative definiteness, etc.

Rohn has given the following theorem \cite{rohn:1}, which is an
improvement on results in \cite{shi:1}, we state it in a
varied way that adapts to be understood as an algorithm.

\begin{theorem} \label{th:pstv}
The real symmetric interval matrix
$$([\underline{a}_{ij},\overline{a}_{ij}])
=\{A \in \mathbb{R}^{n\, \times \, n}: \underline{A} \leq A \leq \overline{A},\underline{A}
=(\underline{a}_{ij}),\overline{A}=(\overline{a}_{ij})\}$$
is positive definite if and only if the following
$2^{n-1}$ vertex matrices are all positive definite:
$$A_k=(a_{kij}), 0 \leq k \leq 2^{n-1}-1,$$
where we denote the binary representation for $k$ by
$k=(k_1 k_2 \cdots k_n)_2$, and
$$
a_{kij}=\frac{1}{2}(\underline{a}_{ij}+\overline{a}_{ij}+(-1)^{k_i + k_j}
(\underline{a}_{ij}-\overline{a}_{ij})).
$$
\end{theorem}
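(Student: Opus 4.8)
The plan is to reduce the positive definiteness of the whole symmetric interval matrix $[\underline{A},\overline{A}]$ to checking only the $2^{n-1}$ distinguished vertex matrices $A_k$. The ``only if'' direction is immediate: each $A_k$ lies in $[\underline{A},\overline{A}]$ (for a symmetric interval matrix, the formula $a_{kij}=\tfrac12(\underline a_{ij}+\overline a_{ij}+(-1)^{k_i+k_j}(\underline a_{ij}-\overline a_{ij}))$ picks either $\underline a_{ij}$ or $\overline a_{ij}$ in each entry in a symmetric pattern, hence $A_k$ is symmetric and $\underline A\le A_k\le\overline A$), so positive definiteness of the interval matrix forces positive definiteness of every $A_k$. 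The content is the ``if'' direction.

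For the ``if'' direction I would argue by contradiction using a continuity/connectedness argument combined with a sign-flipping normalization. Suppose all $A_k$ are positive definite but some $A^\ast\in[\underline A,\overline A]$ is not; since the set of symmetric matrices in the box is convex (hence connected) and $A_0$ (say) is positive definite, along a segment from $A_0$ to $A^\ast$ there is a first matrix $B\in[\underline A,\overline A]$ that is positive semidefinite but singular, so $x^\top B x=0$ for some $x\neq 0$ and $Bx=0$. The key normalization is: for any diagonal sign matrix $T_z=\mathrm{diag}(z_1,\dots,z_n)$ with $z_i\in\{-1,+1\}$, the congruence $A\mapsto T_z A T_z$ preserves definiteness and maps the interval matrix $[\underline A,\overline A]$ onto itself while permuting the vertex matrices among themselves (the entry $(i,j)$ picks up a factor $z_iz_j$, which is exactly the effect of flipping the bits $k_i,k_j$). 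Choosing $z_i=\sign(x_i)$ (with $z_i=1$ when $x_i=0$) and replacing $B,x$ by $T_zBT_z,\,T_zx$, we may assume $x\ge 0$ componentwise. Now the crucial pointwise estimate: for the particular vertex matrix $A_k$ obtained by taking $\underline a_{ij}$ or $\overline a_{ij}$ so as to minimize $x^\top A_k x$ over the box — concretely, choose each off-diagonal entry at its lower end and each diagonal entry at its lower end, which one checks corresponds to a legitimate $A_k$ after the sign normalization has made all $x_i\ge 0$ — we get $x^\top A_k x\le x^\top B x=0$ with $x\neq 0$, contradicting positive definiteness of $A_k$. Hence no such $A^\ast$ exists.

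The step I expect to be the main obstacle — and the one I would write out carefully — is verifying that the minimizing choice of box-endpoints ``$\underline a_{ij}$ for off-diagonal, endpoint matching $x$ for diagonal'' is actually realized by one of the $2^{n-1}$ matrices $A_k$ of the stated form, i.e.\ that after the sign normalization the worst case for the quadratic form $x^\top A x=\sum_i a_{ii}x_i^2+2\sum_{i<j}a_{ij}x_ix_j$ over the box, with $x\ge 0$ fixed, sits at a vertex with the prescribed $(-1)^{k_i+k_j}$ pattern. Since each $x_ix_j\ge 0$, minimizing term by term sends every off-diagonal $a_{ij}$ to $\underline a_{ij}$ and every diagonal $a_{ii}$ to $\underline a_{ii}$; the remaining bookkeeping is to confirm this assignment is consistent with a single bit vector $(k_1,\dots,k_n)$ (it is, since the all-equal-sign pattern $k_i\equiv 0$ gives $(-1)^{k_i+k_j}=1$ for all $i,j$, which after folding in the earlier sign matrix $T_z$ yields precisely the endpoint we want), so that the count $2^{n-1}$ rather than $2^n$ is explained by the global sign symmetry $T\mapsto -T$. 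I would also remark that the same congruence-invariance argument, together with Theorem \ref{th:exteig}, is what lets one restrict attention to vertex matrices in the first place, tying this theorem to the eigenvalue statements preceding it.
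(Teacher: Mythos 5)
First, a point of comparison: the paper does not prove this theorem at all --- it is quoted from Rohn \cite{rohn:1} (as an improvement of \cite{shi:1}) and merely restated in algorithmic form --- so your proposal has to stand on its own rather than be measured against an in-paper argument.

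Your ``only if'' direction is fine, and your key mechanism for the ``if'' direction is the right one: for fixed $x$ the form $x^{\top}Ax=\sum_i a_{ii}x_i^2+2\sum_{i<j}a_{ij}x_ix_j$ is minimized over the box entry by entry, and the minimizing endpoint pattern is exactly the vertex $A_{k(x)}$ with $k_i$ recording the sign of $x_i$ (diagonal entries always at $\underline{a}_{ii}$ since $k_i=k_i$, off-diagonal at $\underline{a}_{ij}$ iff $x_ix_j\ge 0$ iff $k_i=k_j$). That observation alone finishes the proof: for any symmetric $A^{\ast}$ in the box and any $x\neq 0$, $x^{\top}A^{\ast}x\ge x^{\top}A_{k(x)}x>0$; the convexity/first-singular-matrix detour is superfluous. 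The genuine error is in your normalization step. The congruence $A\mapsto T_zAT_z$ does \emph{not} map $[\underline{A},\overline{A}]$ onto itself and does \emph{not} permute the $A_k$ among themselves: the $(i,j)$ entry interval $[\underline{a}_{ij},\overline{a}_{ij}]$ is sent to $[-\overline{a}_{ij},-\underline{a}_{ij}]$ whenever $z_iz_j=-1$, a different interval unless $\overline{a}_{ij}=-\underline{a}_{ij}$; multiplying an entry by $z_iz_j$ is not ``the effect of flipping the bits $k_i,k_j$'', since flipping bits exchanges $\underline{a}_{ij}$ with $\overline{a}_{ij}$ rather than negating them. Consequently your claim that, after normalization, the all-lower-endpoint choice is a legitimate vertex \emph{of the original} interval matrix is unjustified as written. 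The repair is easy: either drop the normalization and apply the term-by-term minimization directly to the original $x$ with $k_i$ chosen from $\sign(x_i)$, or keep the congruence but note that it carries the box to a new symmetric interval matrix whose vertices are the images $T_zA_kT_z$ of the original ones, hence still positive definite, and minimize over that new box. With either fix the contradiction $0\ge x^{\top}A_{k(x)}x>0$ goes through, and the identification $A_k=A_{\bar k}$ for complementary bit vectors correctly accounts for the count $2^{n-1}$.
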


Theorem \ref{th:exteig} in fact implies Theorem \ref{th:pstv},
because the positive definiteness of a symmetric interval matrix
is equal that the minimum of minimal eigenvalues of matrices in it
is positive.
So we can get from Theorem \ref{th:wielandt-hoffman} a
sufficient condition for determining the positive definiteness
of a symmetric interval matrix.

\begin{theorem} \label{th:posdefeig}
The symmetric interval matrix $[\underline{A},\overline{A}]$
is positive definite if
$$
\lambda_n(A_c)-\varrho(A_\delta) > 0,
$$
where $A_c$ and $A_\delta$ are the midpoint matrix
and the radius matrix respectively.
\end{theorem}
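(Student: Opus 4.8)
The plan is to reduce the statement to Theorem \ref{th:wielandt-hoffman} applied to the index $i=n$. Recall that $[\underline{A},\overline{A}]$ is positive definite precisely when every $A \in [\underline{A},\overline{A}]$ has $\lambda_n(A) > 0$, i.e. when $\underline{\lambda}_n([\underline{A},\overline{A}]) > 0$, using that by Theorem \ref{th:wielandt-hoffman} the set $\{\lambda_n(A): A \in [\underline{A},\overline{A}]\}$ is the compact interval $[\underline{\lambda}_n([\underline{A},\overline{A}]),\overline{\lambda}_n([\underline{A},\overline{A}])]$. So it suffices to produce a lower bound for $\underline{\lambda}_n([\underline{A},\overline{A}])$ and show the hypothesis forces that lower bound to be strictly positive.

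First I would invoke Theorem \ref{th:wielandt-hoffman} with $i=n$, which gives the inclusion
$$
[\underline{\lambda}_n([\underline{A},\overline{A}]),\overline{\lambda}_n([\underline{A},\overline{A}])]
\subseteq [\lambda_n(A_c)-\varrho(A_\delta),\,\lambda_n(A_c)+\varrho(A_\delta)].
$$
In particular $\underline{\lambda}_n([\underline{A},\overline{A}]) \geq \lambda_n(A_c)-\varrho(A_\delta)$. By hypothesis the right-hand side is strictly positive, hence $\underline{\lambda}_n([\underline{A},\overline{A}]) > 0$, so the smallest eigenvalue of every matrix in the interval matrix is positive. Since a real symmetric matrix is positive definite iff its least eigenvalue is positive, every $A \in [\underline{A},\overline{A}]$ is positive definite, which is exactly the definition of the symmetric interval matrix $[\underline{A},\overline{A}]$ being positive definite.

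There is essentially no obstacle here: the theorem is a one-line corollary of Theorem \ref{th:wielandt-hoffman} together with the eigenvalue characterization of positive definiteness, and indeed the text preceding the statement already signals this (``we can get from Theorem \ref{th:wielandt-hoffman} a sufficient condition''). The only point to state carefully is the direction of the implication — the condition $\lambda_n(A_c)-\varrho(A_\delta)>0$ is \emph{sufficient} but not necessary, because the Wielandt--Hoffman bound is in general not tight, so unlike Theorem \ref{th:pstv} this gives only an ``if'' and not an ``if and only if''. It is worth a remark that in practice one would use this cheap sufficient test first and fall back to the exact vertex-matrix criterion of Theorem \ref{th:pstv} only when the cheap test is inconclusive.
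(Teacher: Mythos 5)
Your proposal is correct and follows exactly the route the paper intends: the paper gives no separate proof but derives Theorem \ref{th:posdefeig} directly from Theorem \ref{th:wielandt-hoffman} with $i=n$, precisely as you do, noting that the positive definiteness of the symmetric interval matrix is equivalent to the minimum of the minimal eigenvalues being positive. Your additional remarks on sufficiency versus necessity and on using this as a cheap first test before the vertex criterion of Theorem \ref{th:pstv} match the paper's intended usage.
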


Similarly, the nonpositive semidefiniteness of a symmetric interval matrix
is equal that the maximum of minimal eigenvalues of matrices in it
is negative, i.e.,

\begin{theorem} \label{th:nonpos}
The symmetric interval matrix $[\underline{A},\overline{A}]$
is nonpositive semidefinite if
$$
\lambda_n(A_c)+\varrho(A_\delta) < 0,
$$
where $A_c$ and $A_\delta$ are the midpoint matrix
and the radius matrix respectively.
\end{theorem}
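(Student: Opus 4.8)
The plan is to obtain this as an immediate corollary of the Wielandt--Hoffman eigenvalue enclosure of Theorem \ref{th:wielandt-hoffman}, mirroring the way Theorem \ref{th:posdefeig} was obtained. First I would spell out what the conclusion means. By the convention fixed above, $[\underline{A},\overline{A}]$ is nonpositive semidefinite exactly when no member $A \in [\underline{A},\overline{A}]$ is positive semidefinite; and a real symmetric matrix $A$ fails to be positive semidefinite precisely when its least eigenvalue is negative, i.e. $\lambda_n(A) < 0$. So it suffices to show that the hypothesis $\lambda_n(A_c) + \varrho(A_\delta) < 0$ forces $\lambda_n(A) < 0$ for every $A$ in the interval matrix.

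This last implication is exactly what Theorem \ref{th:wielandt-hoffman} delivers, specialized to $i = n$: for any $A \in [\underline{A},\overline{A}]$ one has $\lambda_n(A) \in [\underline{\lambda}_n([\underline{A},\overline{A}]),\overline{\lambda}_n([\underline{A},\overline{A}])] \subseteq [\lambda_n(A_c) - \varrho(A_\delta),\lambda_n(A_c) + \varrho(A_\delta)]$, hence $\lambda_n(A) \le \lambda_n(A_c) + \varrho(A_\delta) < 0$. Therefore every $A \in [\underline{A},\overline{A}]$ has a strictly negative smallest eigenvalue and is not positive semidefinite, which is the asserted nonpositive semidefiniteness. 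Equivalently, one may phrase it through $\overline{\lambda}_n([\underline{A},\overline{A}]) = \max\{\lambda_n(A) : A \in [\underline{A},\overline{A}]\} \le \lambda_n(A_c) + \varrho(A_\delta) < 0$, where the maximum is attained by the compactness assertion of Theorem \ref{th:wielandt-hoffman}; this is the precise sense in which ``the maximum of minimal eigenvalues of matrices in it is negative'' as remarked just before the statement.

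I do not expect any genuine obstacle: the content is entirely in Theorem \ref{th:wielandt-hoffman}, and the remaining work is just the standard translation between ``not positive semidefinite'' and ``$\lambda_n < 0$''. The only points worth stating carefully are that we need the \emph{upper} endpoint $\lambda_n(A_c) + \varrho(A_\delta)$ of the enclosure of $\lambda_n$ rather than the lower one --- the dual of what Theorem \ref{th:posdefeig} uses --- and that the hypothesis is only sufficient, not necessary, since the enclosure of Theorem \ref{th:wielandt-hoffman} is in general strict (the sharp threshold would instead be $\overline{\lambda}_n([\underline{A},\overline{A}]) < 0$, computable exactly via the vertex matrices of Theorem \ref{th:exteig}).
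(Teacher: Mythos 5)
Your proof is correct and follows essentially the same route as the paper, which justifies this theorem only by the remark immediately preceding it (nonpositive semidefiniteness means the maximum of the smallest eigenvalues is negative), a fact your application of Theorem \ref{th:wielandt-hoffman} with $i=n$ makes explicit. The only slight inaccuracy is in your closing aside: Theorem \ref{th:exteig} as stated locates the minimum of $\lambda_n$ and the maximum of $\lambda_1$ at vertex matrices, not the maximum of $\lambda_n$, so the sharp threshold $\overline{\lambda}_n([\underline{A},\overline{A}])<0$ is not directly delivered by it --- but this does not affect your argument.
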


The procedure \textbf{isdef} in the package \textbf{IntervalArithmetic}
can implement algorithms above to verify the properties of
symmetric interval matrices, such as positive definiteness,
negative definiteness and so on.

With the help of above theorems, we can use the following trivial results to
determine the extreme point of a function in a domain.

\begin{theorem} \label{th:extremepoint}
If $g \in C^2(D)$, $X_0 \in D$ is a stationary point of $g$,
and the Hessian matrix of $g$ over $D$ varies in a positive definite
real symmetric interval matrix, then $X_0$ is the minimum point of $g$
in $D$.
\end{theorem}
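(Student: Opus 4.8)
The plan is to establish Theorem \ref{th:extremepoint} by a direct appeal to the second-order Taylor expansion of $g$ around the stationary point $X_0$, combined with the uniform positive-definiteness information encoded in the interval Hessian. First I would fix an arbitrary point $X \in D$ and, since $D$ is (implicitly) convex — or by restricting to a convex neighbourhood contained in $D$, which suffices after noting the argument is local-to-global along segments — write $g(X) - g(X_0)$ using Taylor's theorem with the Lagrange form of the remainder: there is a point $\xi$ on the segment joining $X_0$ to $X$ with
\begin{equation}
g(X) - g(X_0) = \nabla g(X_0)^{\mathsf{T}}(X - X_0) + \tfrac{1}{2}(X - X_0)^{\mathsf{T}} H_g(\xi)\,(X - X_0),
\end{equation}
where $H_g(\xi)$ denotes the Hessian of $g$ at $\xi$. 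Because $X_0$ is a stationary point of $g$, the first-order term vanishes, so $g(X) - g(X_0) = \tfrac{1}{2}(X - X_0)^{\mathsf{T}} H_g(\xi)(X - X_0)$.

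Next I would invoke the hypothesis that the Hessian of $g$ over $D$ varies within a positive definite real symmetric interval matrix $[\underline{A}, \overline{A}]$: since $\xi \in D$, the matrix $H_g(\xi)$ is one of the matrices $A$ in $[\underline{A}, \overline{A}]$, and by the definition of positive definiteness of a symmetric interval matrix given just before Theorem \ref{th:pstv}, every such $A$ is positive definite. Hence $(X - X_0)^{\mathsf{T}} H_g(\xi)(X - X_0) > 0$ whenever $X \neq X_0$, which gives $g(X) > g(X_0)$ for all $X \in D \setminus \{X_0\}$. Therefore $X_0$ is the (strict, global) minimum point of $g$ on $D$, as claimed.

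The only genuine subtlety — and the step I expect to require the most care — is the convexity/connectivity issue: Taylor's theorem with a segment remainder needs the segment from $X_0$ to $X$ to lie in $D$, so one either assumes $D$ is convex or argues that the strict-minimum property propagates along any path in a connected open $D$ by a standard chaining argument. Since in the intended applications $D$ is always a rectangular interval vector, hence convex, this is not an obstruction in practice, and I would simply remark that $D$ is taken to be convex (the paper calls the result "trivial," which matches this assessment). The remaining ingredients — $g \in C^2(D)$ guaranteeing the validity of the Taylor expansion, and the equivalence "positive definite interval matrix $\Leftrightarrow$ every member positive definite $\Leftrightarrow$ every quadratic form strictly positive on nonzero vectors" — are immediate from the definitions already set up in the excerpt.
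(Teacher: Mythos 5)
Your proof is correct and is exactly the standard argument the paper has in mind: the paper states Theorem \ref{th:extremepoint} without proof, presenting it as a ``trivial'' consequence of the definitions, and your Taylor-expansion-with-Lagrange-remainder argument, using that the Hessian at the intermediate point lies in the positive definite interval matrix, is the intended justification. Your convexity caveat is the right one (the domains actually used are rectangular interval vectors, hence convex), though you should drop the suggested ``chaining'' alternative: a positive definite Hessian on a connected but nonconvex domain does not by itself make a stationary point the global minimum, so convexity of $D$ is genuinely the hypothesis doing the work rather than something recoverable by propagating along paths.
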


\begin{theorem} \label{th:notextremepoint}
If $g \in C^2(D)$, and the Hessian matrix of $g$ over $D$ varies in
a nonpositive semidefinite real symmetric interval matrix,
then no inner point of $D$ is the minimum point of $g$.
\end{theorem}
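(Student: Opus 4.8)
The plan is to argue by contradiction using the second-order necessary condition for a local minimum. Suppose, contrary to the claim, that some inner point $X_0$ of $D$ is a minimum point of $g$. Since $g \in C^2(D)$ and $X_0$ is an interior minimizer, the classical second-order necessary condition applies: the gradient $\nabla g(X_0)$ vanishes and the Hessian $H(X_0)$ is positive semidefinite, i.e. $v^{\top} H(X_0) v \geq 0$ for every vector $v$. First I would record this as the single analytic input we need from calculus.

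Next I would invoke the hypothesis that the Hessian of $g$ over $D$ varies in a nonpositive semidefinite real symmetric interval matrix $[\underline{A},\overline{A}]$. By the definition given in the excerpt, ``nonpositive semidefinite'' for an interval matrix means that \emph{no} matrix $A \in [\underline{A},\overline{A}]$ is positive semidefinite. In particular, since $X_0 \in D$, the Hessian $H(X_0)$ is one of the matrices in this interval matrix, hence $H(X_0)$ is not positive semidefinite. This contradicts the conclusion of the previous paragraph, so no inner point of $D$ can be a minimum point of $g$, which is exactly the statement.

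The main (and essentially only) subtlety is making sure the definition of nonpositive semidefiniteness is the one that makes the argument go through: we are using that it rules out positive semidefiniteness of \emph{every} member matrix, not merely positive definiteness. As stated in the excerpt this is exactly the intended reading, so there is no real obstacle — the theorem is a direct packaging of the second-order necessary condition together with the interval-matrix terminology, and Theorems \ref{th:nonpos} and \ref{th:wielandt-hoffman} are what will be used in practice to \emph{verify} the hypothesis on concrete subdomains; they play no role in the proof of the implication itself. One should also note that the statement only claims something about \emph{inner} points, which is necessary: at a boundary point the second-order necessary condition need not hold, so the contradiction argument would break down there, and the restriction to interior points is therefore essential and should be flagged.
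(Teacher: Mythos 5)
Your argument is correct and is exactly the intended one: the paper treats Theorem \ref{th:notextremepoint} as a trivial consequence of the second-order necessary condition together with the stated definition of nonpositive semidefiniteness (no member matrix is positive semidefinite), offering no separate proof. Your contradiction at an interior minimizer, and your remark that the restriction to inner points is essential, match the paper's reading precisely.
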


\subsection{Domain excluded near coordinates corresponding the bipyramid distribution}

Now we introduce a disturbance $\left[-\frac{\pi}{377},\frac{\pi}{377}\right]$
on coordinates corresponding the bipyramid distribution,
and obtain a rectangular domain, i.e.,
\begin{equation} \label{bpyva:pi}
\left( \begin {array}{c} \phi_{{1}}\\\noalign{\medskip}\phi_{{2}}
\\\noalign{\medskip}\theta_{{2}}\\\noalign{\medskip}\phi_{{3}}
\\\noalign{\medskip}\theta_{{3}}\\\noalign{\medskip}\phi_{{4}}
\\\noalign{\medskip}\theta_{{4}}\end {array} \right)
\in
\left( \begin {array}{c} \left[-{\frac {380}{1131}}\,\pi ,-{\frac {374}{
1131}}\,\pi \right]\\\noalign{\medskip}\left[{\frac {374}{1131}}\,\pi ,{\frac {
380}{1131}}\,\pi \right]\\\noalign{\medskip}\left[{\frac {376}{377}}\,\pi ,{
\frac {378}{377}}\,\pi \right]\\\noalign{\medskip}\left[-{\frac {1}{377}}\,\pi ,{
\frac {1}{377}}\,\pi \right]\\\noalign{\medskip}\left[-{\frac {379}{754}}\,\pi ,-
{\frac {375}{754}}\,\pi \right]\\\noalign{\medskip}\left[-{\frac {1}{377}}\,\pi ,
{\frac {1}{377}}\,\pi \right]\\\noalign{\medskip}\left[{\frac {375}{754}}\,\pi ,{
\frac {379}{754}}\,\pi \right]\end {array} \right).
\end{equation}
In this domain, $\theta_2$ varies in
$\left[{\frac {376}{377}}\,\pi ,{\frac {378}{377}}\,\pi \right]$,
which exceeds the bound we prescribed for $\theta_2$.
But since the periodicity of function $f$, it is of
no error. In fact, interval vertexes are represented by rational numbers
in the Maple package \textbf{IntervalArithmetic}, so these intervals
whose vertexes contain $\pi$ are enlarged to their
rational representations, that is, the rectangular domain
we actually obtain is
\begin{equation}
\mathcal{D}_{bp} =
\left( \begin {array}{c} \left[-{\frac {1055530689}{1000000000}},-{\frac {
1038864413}{1000000000}}\right]\\\noalign{\medskip}\left[{\frac {1038864413}{
1000000000}},{\frac {1055530689}{1000000000}}\right]\\\noalign{\medskip}\left[{
\frac {783314879}{250000000}},{\frac {98435181}{31250000}}\right]
\\\noalign{\medskip}\left[-{\frac {10416421265218147343}{
1250000000000000000000}},{\frac {10416421265218147343}{
1250000000000000000000}}\right]\\\noalign{\medskip}\left[-{\frac {315825893}{
200000000}},-{\frac {1562463189}{1000000000}}\right]\\\noalign{\medskip}\left[-{
\frac {10416421265218147343}{1250000000000000000000}},{\frac {
10416421265218147343}{1250000000000000000000}}\right]\\\noalign{\medskip}\left[{
\frac {1562463189}{1000000000}},{\frac {315825893}{200000000}}\right]
\end {array} \right).
\end{equation}

The interval Hessian matrix of $f$ over $\mathcal{D}_{bp}$ can be calculated
by interval arithmetic:
\begin{equation}
\mathcal{V} = \left(V_1,V_2,V_3,V_4,V_5,V_6,V_7\right),
\end{equation}
where $V_i(i=1,\ldots,7)$ are vectors as follows:\\[10pt]

{\tiny \raggedright
$
V_{{1}}= \left( \begin {array}{c} \left[-{\dfrac {9073071021}{10000000000}},
-{\dfrac {257887557}{312500000}}\right]\\\noalign{\medskip}\left[{\dfrac {
4158136493}{10000000000}},{\dfrac {1126402921}{2500000000}}\right]
\\\noalign{\medskip}\left[-{\dfrac {2503191333}{1000000000000}},{\dfrac {
2503191333}{1000000000000}}\right]\\\noalign{\medskip}\left[-{\dfrac {910889963}{
2500000000}},-{\dfrac {428510269}{1250000000}}\right]\\\noalign{\medskip}\left[{
\dfrac {6038013477}{10000000000}},{\dfrac {1552383121}{2500000000}}\right]
\\\noalign{\medskip}\left[-{\dfrac {72871197}{200000000}},-{\dfrac {685616431
}{2000000000}}\right]\\\noalign{\medskip}\left[-{\dfrac {3104766243}{5000000000}},
-{\dfrac {6038013477}{10000000000}}\right]\end {array} \right),
V_{{2}}= \left( \begin {array}{c} \left[{\dfrac {4158136493}{10000000000}},{
\dfrac {1126402921}{2500000000}}\right]\\\noalign{\medskip}\left[-{\dfrac {
2283824451}{2500000000}},-{\dfrac {8191611957}{10000000000}}\right]
\\\noalign{\medskip}\left[-{\dfrac {23364423}{781250000}},{\dfrac {747661531}
{25000000000}}\right]\\\noalign{\medskip}\left[-{\dfrac {114837637}{312500000}},-{
\dfrac {3397366803}{10000000000}}\right]\\\noalign{\medskip}\left[-{\dfrac {
1559063347}{2500000000}},-{\dfrac {240452637}{400000000}}\right]
\\\noalign{\medskip}\left[-{\dfrac {183740219}{500000000}},-{\dfrac {
424670851}{1250000000}}\right]\\\noalign{\medskip}\left[{\dfrac {6011315931}{
10000000000}},{\dfrac {779531673}{1250000000}}\right]\end {array} \right)
$
\\[10pt]
$
V_{{3}}= \left( \begin {array}{c} \left[-{\dfrac {2503191333}{1000000000000}
},{\dfrac {2503191333}{1000000000000}}\right]\\\noalign{\medskip}\left[-{\dfrac {
23364423}{781250000}},{\dfrac {747661531}{25000000000}}\right]
\\\noalign{\medskip}\left[-{\dfrac {3523502821}{10000000000}},-{\dfrac {
2901860369}{10000000000}}\right]\\\noalign{\medskip}\left[-{\dfrac {1628135751}{
10000000000}},-{\dfrac {359058763}{2500000000}}\right]\\\noalign{\medskip}\left[{
\dfrac {1556247929}{20000000000}},{\dfrac {9916175537}{100000000000}}\right]
\\\noalign{\medskip}\left[{\dfrac {718117527}{5000000000}},{\dfrac {
1628135749}{10000000000}}\right]\\\noalign{\medskip}\left[{\dfrac {7781239691}{
100000000000}},{\dfrac {2479043873}{25000000000}}\right]\end {array} \right),
V_{{4}}= \left( \begin {array}{c} \left[-{\dfrac {910889963}{2500000000}},-{
\dfrac {428510269}{1250000000}}\right]\\\noalign{\medskip}\left[-{\dfrac {114837637
}{312500000}},-{\dfrac {3397366803}{10000000000}}\right]\\\noalign{\medskip}\left[
-{\dfrac {1628135751}{10000000000}},-{\dfrac {359058763}{2500000000}}\right]
\\\noalign{\medskip}\left[-{\dfrac {214384921}{200000000}},-{\dfrac {
9891691177}{10000000000}}\right]\\\noalign{\medskip}\left[-{\dfrac {483991657}{
20000000000}},{\dfrac {97153283}{4000000000}}\right]\\\noalign{\medskip}\left[-{
\dfrac {10001389}{20000000}},-{\dfrac {4999305593}{10000000000}}\right]
\\\noalign{\medskip}\left[-{\dfrac {1041678267}{10000000000000}},{\dfrac {
1041678267}{10000000000000}}\right]\end {array} \right)
$
\\[10pt]
$
V_{{5}}= \left( \begin {array}{c} \left[{\dfrac {6038013477}{10000000000}},{
\dfrac {1552383121}{2500000000}}\right]\\\noalign{\medskip}\left[-{\dfrac {
1559063347}{2500000000}},-{\dfrac {240452637}{400000000}}\right]
\\\noalign{\medskip}\left[{\dfrac {1556247929}{20000000000}},{\dfrac {
9916175537}{100000000000}}\right]\\\noalign{\medskip}\left[-{\dfrac {483991657}{
20000000000}},{\dfrac {97153283}{4000000000}}\right]\\\noalign{\medskip}\left[-{
\dfrac {1058713931}{1000000000}},-{\dfrac {1002296947}{1000000000}}\right]
\\\noalign{\medskip}\left[-{\dfrac {1041678267}{10000000000000}},{\dfrac {
1041678267}{10000000000000}}\right]\\\noalign{\medskip}\left[{\dfrac {312434903}{
625000000}},{\dfrac {1250173619}{2500000000}}\right]\end {array} \right),
V_{{6}}= \left( \begin {array}{c} \left[-{\dfrac {72871197}{200000000}},-{
\dfrac {685616431}{2000000000}}\right]\\\noalign{\medskip}\left[-{\dfrac {183740219
}{500000000}},-{\dfrac {424670851}{1250000000}}\right]\\\noalign{\medskip}\left[{
\dfrac {718117527}{5000000000}},{\dfrac {1628135749}{10000000000}}\right]
\\\noalign{\medskip}\left[-{\dfrac {10001389}{20000000}},-{\dfrac {4999305593
}{10000000000}}\right]\\\noalign{\medskip}\left[-{\dfrac {1041678267}{
10000000000000}},{\dfrac {1041678267}{10000000000000}}\right]
\\\noalign{\medskip}\left[-{\dfrac {1071924603}{1000000000}},-{\dfrac {
2472922799}{2500000000}}\right]\\\noalign{\medskip}\left[-{\dfrac {607208011}{
25000000000}},{\dfrac {302494783}{12500000000}}\right]\end {array} \right)
$
\\[10pt]
$
V_{{7}}= \left( \begin {array}{c} \left[-{\dfrac {3104766243}{5000000000}},-
{\dfrac {6038013477}{10000000000}}\right]\\\noalign{\medskip}\left[{\dfrac {
6011315931}{10000000000}},{\dfrac {779531673}{1250000000}}\right]
\\\noalign{\medskip}\left[{\dfrac {7781239691}{100000000000}},{\dfrac {
2479043873}{25000000000}}\right]\\\noalign{\medskip}\left[-{\dfrac {1041678267}{
10000000000000}},{\dfrac {1041678267}{10000000000000}}\right]
\\\noalign{\medskip}\left[{\dfrac {312434903}{625000000}},{\dfrac {1250173619
}{2500000000}}\right]\\\noalign{\medskip}\left[-{\dfrac {607208011}{25000000000}},
{\dfrac {302494783}{12500000000}}\right]\\\noalign{\medskip}\left[-{\dfrac {
1058713929}{1000000000}},-{\dfrac {20045939}{20000000}}\right]\end {array}
\right)
$
\\[10pt]
}

Through Theorem \ref{th:pstv}, we can judge that
the symmetric interval matrix $\mathcal{V}$ is negative definite,
and by Theorem \ref{th:extremepoint}, the conjectured configuration indeed corresponds the
maximum of $f$ in $\mathcal{D}_{bp}$. That is

\begin{proposition} \label{prop:bipyramid}
The bipyramid distribution of 5 points represented by Eq. \eqref{bipycoords}
is the only maximal distance sum distribution in domain
$\mathcal{D}_{bp}$, i.e.,
\begin{equation} \label{eqn:bipyramid}
f(\phi_1, \phi_2, \theta_2, \phi_3, \theta_3, \phi_4, \theta_4) \leq fmax, \quad
(\phi_1, \phi_2, \theta_2, \phi_3, \theta_3, \phi_4, \theta_4) \in \mathcal{D}_{bp},
\end{equation}
where the equality holds if and only if
$(\phi_1, \phi_2, \theta_2, \phi_3, \theta_3, \phi_4, \theta_4)=\Theta_{bp}$.
\end{proposition}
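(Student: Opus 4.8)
The plan is to deduce the proposition from the fact that $f$ is strictly concave on the box $\mathcal{D}_{bp}$ and has a stationary point at the interior point $\Theta_{bp}$. Indeed, $\mathcal{D}_{bp}$ is an interval vector, hence a convex set, and the coordinates of $\Theta_{bp}$ (namely $-\frac{\pi}{3},\frac{\pi}{3},\pi,0,-\frac{\pi}{2},0,\frac{\pi}{2}$) lie strictly inside the corresponding rational intervals of $\mathcal{D}_{bp}$. On this box every distance stays bounded well away from $0$ (this is a neighbourhood of the bipyramid configuration, and in any case Condition \ref{as:mindis} applies), so all the radicands in $f$ are positive and $f \in C^2(\mathcal{D}_{bp})$. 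Therefore, once we know that the Hessian of $f$ lies in a negative definite symmetric interval matrix over $\mathcal{D}_{bp}$ and that $\nabla f(\Theta_{bp}) = 0$, Theorem \ref{th:extremepoint} applied to $g = -f$ gives that $\Theta_{bp}$ is the unique minimum point of $-f$, i.e.\ the unique maximum point of $f$, in $\mathcal{D}_{bp}$; combined with $f(\Theta_{bp}) = fmax$ from Eq.~\eqref{fmax} this is exactly the claimed inequality with the stated equality case.

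Concretely I would carry out the following steps. First, record that $\Theta_{bp}$ is a stationary point of $f$: either by the direct computation that produced the Hessian displayed in \S\ref{sec:bipyramid}, or by invoking Theorem \ref{th:gradient} for the bipyramid (a symmetric, balanced configuration forces $\nabla f = 0$). Second, form the enlarged rational box $\mathcal{D}_{bp}$ obtained by the disturbance $\left[-\frac{\pi}{377},\frac{\pi}{377}\right]$ around $\Theta_{bp}$ and run the \textbf{IntervalArithmetic} package on the symbolic Hessian entries of $f$ to obtain the interval matrix $\mathcal{V} = (V_1,\dots,V_7)$ displayed above, an enclosure of $\{\,\mathrm{Hess}\,f(X) : X \in \mathcal{D}_{bp}\,\}$. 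Third, apply Theorem \ref{th:pstv} to $-\mathcal{V}$: the symmetric interval matrix $\mathcal{V}$ is negative definite iff the $2^{7-1}=64$ vertex matrices of $-\mathcal{V}$ are all positive definite, which the procedure \textbf{isdef} checks (e.g.\ by leading-principal-minor or Cholesky tests on each of the $64$ rational vertex matrices). Fourth, conclude via Theorem \ref{th:extremepoint} as in the previous paragraph; strict negative definiteness (not merely semidefiniteness) together with convexity of $\mathcal{D}_{bp}$ makes $f$ strictly concave, which yields $f(X) < fmax$ for every $X \in \mathcal{D}_{bp}\setminus\{\Theta_{bp}\}$ and hence the ``if and only if''.

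The crux — and the only place anything can go wrong — is Step 3, i.e.\ whether the interval Hessian enclosure over $\mathcal{D}_{bp}$ is still negative definite. This is a genuine quantitative constraint: the exact Hessian at $\Theta_{bp}$ is negative definite but only moderately so (its eigenvalues are $O(1)$ numbers built from $\sqrt2,\sqrt3,\sqrt6$), so the radius matrix $A_\delta$ of $\mathcal{V}$ must be kept small, which is precisely why the disturbance is taken as small as $\frac{\pi}{377}$ and then only slightly inflated to rational endpoints; a substantially larger box would make some vertex matrix of $-\mathcal{V}$ fail the positive-definiteness test. A secondary, purely computational obstacle is the bookkeeping of the $64$ vertex matrices and the controlled-error rational interval evaluation of the (rather long) Hessian expressions, both handled by the Maple package. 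If one preferred to avoid enumerating $64$ matrices, one could instead apply the sufficient condition of Theorem \ref{th:nonpos} to $-\mathcal{V}$ — checking $\lambda_n((-A)_c) + \varrho((-A)_\delta) < 0$, i.e.\ $\lambda_1(A_c) + \varrho(A_\delta) < 0$ — at the cost of a possibly weaker (but here, one expects, still sufficient) test.
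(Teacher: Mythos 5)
Your proposal follows essentially the same route as the paper: compute the interval Hessian $\mathcal{V}$ over $\mathcal{D}_{bp}$, verify its negative definiteness via the vertex-matrix criterion of Theorem \ref{th:pstv}, and conclude with Theorem \ref{th:extremepoint} applied to $-f$ (using that $\Theta_{bp}$ is an interior stationary point with $f(\Theta_{bp})=fmax$). The only slip is in your closing aside, where the eigenvalue-based sufficient condition for negative definiteness is Theorem \ref{th:posdefeig} applied to $-\mathcal{V}$, not Theorem \ref{th:nonpos}; this does not affect the main argument.
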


\section{Domain near coordinates corresponding the pyramid distribution} \label{sec:pyramid}
Under conditions in \S\,\ref{conditions}, coordinates representing the pyramid distribution
are unique, while they corresponds a stationary point of function $f$,
and the function value on this point is too close to $fmax$,
therefore, we discuss it separately.

\subsection{Pyramid distribution}
The spherical coordinate corresponding the
pyramid distribution is

\begin{equation} \label{bycoords}
\begin{array}{l}
A(1,0,0),\\[5pt]
B\left(1,
-2\,\omega_1,
\pi\right),\\[5pt]
C\left(1,
\dfrac{\pi}{2} -\omega_1,
\pi\right),\\[5pt]
D\left(1,
\omega_2,
-\omega_3
\right),\\[5pt]
E\left(1,
\omega_2,
\omega_3
\right),
\end{array}
\end{equation}
where

{\footnotesize
\begin{equation}
\begin{split}
\omega_1 & = \arcsin \left( -\dfrac{3}{4} + \dfrac{\sqrt{2}}{2}+
\dfrac{\sqrt {41-28\,\sqrt {2}}} {4} \right),\\
\omega_2 & = -\arcsin \left(
\left( -\dfrac{3}{4} + \dfrac{\sqrt{2}}{2}+
\dfrac{\sqrt {41-28\,\sqrt {2}}} {4} \right)
\sqrt{1-
\left( -\dfrac{3}{4} + \dfrac{\sqrt{2}}{2}+
\dfrac{\sqrt {41-28\,\sqrt {2}}} {4} \right) ^2
}
\right),\\
\omega_3 & = \arccot \left( \frac{
\left( -\dfrac{3}{4} + \dfrac{\sqrt{2}}{2}+
\dfrac{\sqrt {41-28\,\sqrt {2}}} {4} \right) ^2
}
{
\sqrt{1-
\left( -\dfrac{3}{4} + \dfrac{\sqrt{2}}{2}+
\dfrac{\sqrt {41-28\,\sqrt {2}}} {4} \right) ^2
}
}
\right),
\end{split}\nonumber
\end{equation}
}\\
as showed in Fig. \ref{pyconfig}.

\begin{figure}[htbp]
\centering
\includegraphics[width=0.60\textwidth]{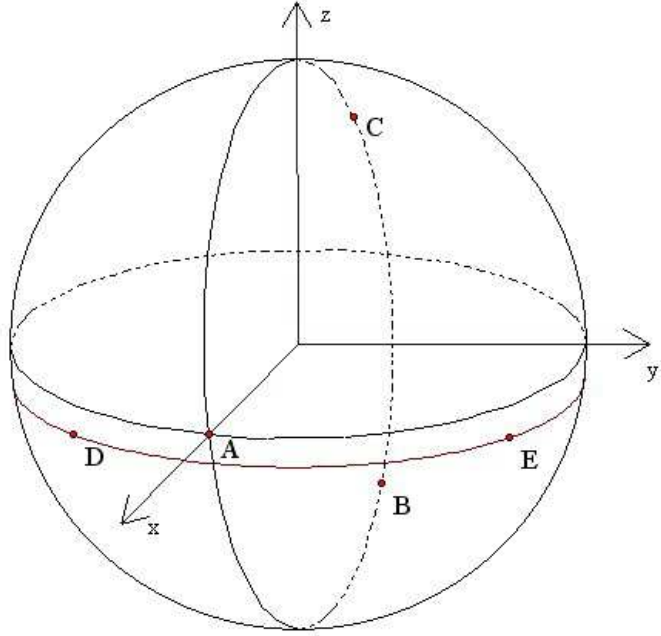}
\caption{The pyramid configuration} \label{pyconfig}
\end{figure}

Denote the corresponding values of
$(\phi_1, \phi_2, \theta_2, \phi_3, \theta_3, \phi_4, \theta_4)$
by
$$
\Theta_p = \left(
-2\,\omega_1,
\dfrac{\pi}{2} -\omega_1,
\pi,
\omega_2,
-\omega_3,
\omega_2,
\omega_3
\right),
$$
then the corresponding value of function $f$ is
\begin{equation}  \label{eqn:pyfmax}
\begin{split}
pyfmax & = f(\Theta_p)\\
& \approx 15.67482117.
\end{split}
\end{equation}

\subsection{Domain excluded near coordinates corresponding the pyramid distribution}
Similarly with the method we adopted near the bipyramid distribution,
we introduce a disturbance of $\left[-\frac{\pi}{791},\frac{\pi}{791}\right]$
on coordinates corresponding the pyramid distribution,
and finally obtain a rectangular domain
\begin{equation}
\mathcal{D}_{p} =
\left( \begin {array}{c} \left[-{\frac {5157880419}{10000000000}},-{\frac
{203137879}{400000000}}\right]\\\noalign{\medskip}\left[{\frac {1310916469}{
1000000000}},{\frac {263771963}{200000000}}\right]\\\noalign{\medskip}\left[{
\frac {156881049}{50000000}},{\frac {3145564327}{1000000000}}\right]
\\\noalign{\medskip}\left[-{\frac {39276321}{156250000}},-{\frac {
2434251099}{10000000000}}\right]\\\noalign{\medskip}\left[-{\frac {1508635943}{
1000000000}},-{\frac {375173149}{250000000}}\right]\\\noalign{\medskip}\left[-{
\frac {39276321}{156250000}},-{\frac {2434251099}{10000000000}}\right]
\\\noalign{\medskip}\left[{\frac {375173149}{250000000}},{\frac {1508635943
}{1000000000}}\right]\end {array} \right).
\end{equation}

The interval Hessian matrix of $f$ over $\mathcal{D}_{p}$ can be calculated
by interval arithmetic, i.e.,
\begin{equation}
\mathcal{W} = \left(W_1,W_2,W_3,W_4,W_5,W_6,W_7\right),
\end{equation}
where $W_i(i=1,\ldots,7)$ are vectors

{\tiny \raggedright
$
W_{{1}}= \left( \begin {array}{c} \left[-{\dfrac {68555671}{80000000}},-{
\dfrac {8085653887}{10000000000}}\right]\\\noalign{\medskip}\left[{\dfrac {
241020403}{625000000}},{\dfrac {4060471777}{10000000000}}\right]
\\\noalign{\medskip}\left[-{\dfrac {5396514777}{10000000000000}},{\dfrac {
5396514777}{10000000000000}}\right]\\\noalign{\medskip}\left[-{\dfrac {665453073}{
1000000000}},-{\dfrac {3261193771}{5000000000}}\right]\\\noalign{\medskip}\left[{
\dfrac {323160259}{1250000000}},{\dfrac {2727361243}{10000000000}}\right]
\\\noalign{\medskip}\left[-{\dfrac {6654530721}{10000000000}},-{\dfrac {
6522387553}{10000000000}}\right]\\\noalign{\medskip}\left[-{\dfrac {545472247}{
2000000000}},-{\dfrac {16158013}{62500000}}\right]\end {array} \right),
W_{{2}}= \left( \begin {array}{c} \left[{\dfrac {241020403}{625000000}},{
\dfrac {4060471777}{10000000000}}\right]\\\noalign{\medskip}\left[-{\dfrac {
567279649}{500000000}},-{\dfrac {544204613}{500000000}}\right]
\\\noalign{\medskip}\left[-{\dfrac {609173849}{50000000000}},{\dfrac {
1218347717}{100000000000}}\right]\\\noalign{\medskip}\left[-{\dfrac {68583739}{
400000000}},-{\dfrac {1584190467}{10000000000}}\right]\\\noalign{\medskip}\left[-{
\dfrac {1486861643}{2500000000}},-{\dfrac {5875949419}{10000000000}}\right]
\\\noalign{\medskip}\left[-{\dfrac {428648367}{2500000000}},-{\dfrac {
792095233}{5000000000}}\right]\\\noalign{\medskip}\left[{\dfrac {2937974709}{
5000000000}},{\dfrac {5947446571}{10000000000}}\right]\end {array} \right)
$
\\[10pt]
$
W_{{3}}= \left( \begin {array}{c} \left[-{\dfrac {5396514777}{10000000000000
}},{\dfrac {5396514777}{10000000000000}}\right]\\\noalign{\medskip}\left[-{\dfrac {
609173849}{50000000000}},{\dfrac {1218347717}{100000000000}}\right]
\\\noalign{\medskip}\left[-{\dfrac {8085854539}{100000000000}},-{\dfrac {
1541389841}{25000000000}}\right]\\\noalign{\medskip}\left[-{\dfrac {9959577971}{
100000000000}},-{\dfrac {9385595363}{100000000000}}\right]
\\\noalign{\medskip}\left[{\dfrac {116759761}{5000000000}},{\dfrac {109700733
}{4000000000}}\right]\\\noalign{\medskip}\left[{\dfrac {9385595361}{100000000000}}
,{\dfrac {9959577971}{100000000000}}\right]\\\noalign{\medskip}\left[{\dfrac {
2335195223}{100000000000}},{\dfrac {2742518293}{100000000000}}\right]
\end {array} \right),
W_{{4}}= \left( \begin {array}{c} \left[-{\dfrac {665453073}{1000000000}},-{
\dfrac {3261193771}{5000000000}}\right]\\\noalign{\medskip}\left[-{\dfrac {68583739
}{400000000}},-{\dfrac {1584190467}{10000000000}}\right]\\\noalign{\medskip}\left[
-{\dfrac {9959577971}{100000000000}},-{\dfrac {9385595363}{100000000000}
}\right]\\\noalign{\medskip}\left[-{\dfrac {8828398133}{10000000000}},-{\dfrac {
1678302277}{2000000000}}\right]\\\noalign{\medskip}\left[-{\dfrac {585736263}{
5000000000}},-{\dfrac {2143830967}{25000000000}}\right]\\\noalign{\medskip}\left[-
{\dfrac {4897161727}{10000000000}},-{\dfrac {4822526949}{10000000000}}\right]
\\\noalign{\medskip}\left[{\dfrac {616963279}{100000000000}},{\dfrac {
1002395789}{100000000000}}\right]\end {array} \right)
$
\\[10pt]
$
W_{{5}}= \left( \begin {array}{c} \left[{\dfrac {323160259}{1250000000}},{
\dfrac {2727361243}{10000000000}}\right]\\\noalign{\medskip}\left[-{\dfrac {
1486861643}{2500000000}},-{\dfrac {5875949419}{10000000000}}\right]
\\\noalign{\medskip}\left[{\dfrac {116759761}{5000000000}},{\dfrac {109700733
}{4000000000}}\right]\\\noalign{\medskip}\left[-{\dfrac {585736263}{5000000000}},-
{\dfrac {2143830967}{25000000000}}\right]\\\noalign{\medskip}\left[-{\dfrac {
36387877}{31250000}},-{\dfrac {225679327}{200000000}}\right]
\\\noalign{\medskip}\left[-{\dfrac {1002395789}{100000000000}},-{\dfrac {
616963279}{100000000000}}\right]\\\noalign{\medskip}\left[{\dfrac {4813603601}{
10000000000}},{\dfrac {1215175283}{2500000000}}\right]\end {array} \right),
W_{{6}}= \left( \begin {array}{c} \left[-{\dfrac {6654530721}{10000000000}},
-{\dfrac {6522387553}{10000000000}}\right]\\\noalign{\medskip}\left[-{\dfrac {
428648367}{2500000000}},-{\dfrac {792095233}{5000000000}}\right]
\\\noalign{\medskip}\left[{\dfrac {9385595361}{100000000000}},{\dfrac {
9959577971}{100000000000}}\right]\\\noalign{\medskip}\left[-{\dfrac {4897161727}{
10000000000}},-{\dfrac {4822526949}{10000000000}}\right]\\\noalign{\medskip}\left[
-{\dfrac {1002395789}{100000000000}},-{\dfrac {616963279}{100000000000}}
\right]\\\noalign{\medskip}\left[-{\dfrac {8828398113}{10000000000}},-{\dfrac {
4195755701}{5000000000}}\right]\\\noalign{\medskip}\left[{\dfrac {171506479}{
2000000000}},{\dfrac {1171472521}{10000000000}}\right]\end {array} \right)
$
\\[10pt]
$
W_{{7}}= \left( \begin {array}{c} \left[-{\dfrac {545472247}{2000000000}},-{
\dfrac {16158013}{62500000}}\right]\\\noalign{\medskip}\left[{\dfrac {2937974709}{
5000000000}},{\dfrac {5947446571}{10000000000}}\right]\\\noalign{\medskip}\left[{
\dfrac {2335195223}{100000000000}},{\dfrac {2742518293}{100000000000}}\right]
\\\noalign{\medskip}\left[{\dfrac {616963279}{100000000000}},{\dfrac {
1002395789}{100000000000}}\right]\\\noalign{\medskip}\left[{\dfrac {4813603601}{
10000000000}},{\dfrac {1215175283}{2500000000}}\right]\\\noalign{\medskip}\left[{
\dfrac {171506479}{2000000000}},{\dfrac {1171472521}{10000000000}}\right]
\\\noalign{\medskip}\left[-{\dfrac {582206031}{500000000}},-{\dfrac {
564198319}{500000000}}\right]\end {array} \right)
$
\\[10pt]
}

Through Theorem \ref{th:nonpos},
we can judge that $\mathcal{W}$ is nonnegative semidefinite,
and when the disturbance enlarges very little,
it is still true.
So by Theorem \ref{th:notextremepoint}, we know that
values of $f$ cannot attain the maximum in $\mathcal{D}_{p}$, i.e.,

\begin{proposition} \label{prop:pyramid}
Maximum of function $f$ can not attain in domain $\mathcal{D}_{p}$, i.e.,
\begin{equation} \label{eqn:pyramid}
f(\phi_1, \phi_2, \theta_2, \phi_3, \theta_3, \phi_4, \theta_4) < fmax, \quad
(\phi_1, \phi_2, \theta_2, \phi_3, \theta_3, \phi_4, \theta_4) \in \mathcal{D}_{p}.
\end{equation}
\end{proposition}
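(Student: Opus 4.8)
The plan is to prove Proposition \ref{prop:pyramid} by a second–derivative (interval Hessian) argument that runs parallel to the one used for $\mathcal{D}_{bp}$, except that here we exploit the \emph{indefiniteness} of the Hessian rather than its definiteness: $\Theta_p$ is a saddle point of $f$ with $f(\Theta_p)=pyfmax<fmax$, and the goal is to show $f$ has no local maximum anywhere in $\mathcal{D}_p$ (more precisely, in a slight enlargement of it), so that the global maximum of $f$ over $\mathcal{D}^{(1)}\cup\mathcal{D}^{(2)}$ cannot be located inside $\mathcal{D}_p$.

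First I would use the package \textbf{IntervalArithmetic} to compute, entrywise over the box $\mathcal{D}_p$, the $7\times 7$ symmetric interval matrix $\mathcal{W}$ enclosing the Hessian of $f$; each entry $\mathcal{W}_{ij}$ contains $\partial^2 f/\partial x_i\partial x_j$ at every point of $\mathcal{D}_p$, and since the disturbance $[-\pi/791,\pi/791]$ is small the overestimation in $\mathcal{W}$ is tiny. Next I would verify that $\mathcal{W}$ is \emph{nonnegative semidefinite}, i.e.\ that no matrix of $\mathcal{W}$ is negative semidefinite; this is equivalent to $\min_{A\in\mathcal{W}}\lambda_1(A)>0$, and applying Theorem \ref{th:nonpos} to $-\mathcal{W}$ (whose midpoint matrix is $-W_c$ and whose radius matrix is still $W_\delta$) reduces it to the single scalar inequality $\lambda_1(W_c)>\varrho(W_\delta)$; equivalently, by Theorem \ref{th:exteig} it suffices to check that each of the $2^{6}=64$ vertex matrices of $\mathcal{W}$ has a strictly positive largest eigenvalue. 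Either test is carried out by \textbf{isdef}. Because the resulting eigenvalue gap is comfortably bounded away from $0$, the same test still succeeds for a slightly larger box $\mathcal{D}_p^{+}$ with $\mathcal{D}_p\subset\operatorname{int}\mathcal{D}_p^{+}\subset\mathcal{D}^{(1)}\cup\mathcal{D}^{(2)}$; this is exactly the role of the remark that the disturbance may be enlarged a little.

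Then I would conclude. The Hessian of $g:=-f$ over $\mathcal{D}_p^{+}$ varies in the nonpositive semidefinite interval matrix $-\mathcal{W}^{+}$, so by Theorem \ref{th:notextremepoint} no inner point of $\mathcal{D}_p^{+}$ is a minimum point of $-f$; equivalently, no point of $\mathcal{D}_p$ (all of which are inner points of $\mathcal{D}_p^{+}$, the boundary of $\mathcal{D}_p$ included) is a local maximum of $f$ in $\mathbb{R}^7$. Now if the global maximum of $f$ over the compact set $\mathcal{D}^{(1)}\cup\mathcal{D}^{(2)}$ were attained at some $P^{*}\in\mathcal{D}_p$, then, since $\mathcal{D}_p$ lies strictly in the interior of $\mathcal{D}^{(1)}\cup\mathcal{D}^{(2)}$, $P^{*}$ would be a local maximum of $f$, contradicting the previous sentence. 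Hence no maximizer of $f$ lies in $\mathcal{D}_p$; combined with $f(\Theta_{bp})=fmax$ (attained outside $\mathcal{D}_p$) this yields \eqref{eqn:pyramid}. One should note that ``$f<fmax$ on $\mathcal{D}_p$'' in \eqref{eqn:pyramid} is properly read as the statement that $\mathcal{D}_p$ contains no point at which $f$ attains the value $fmax$; the literal bound is then immediate once the global maximum has been identified as $fmax$ via Proposition \ref{prop:bipyramid}, this proposition, and the verification on the remaining domains together.

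The main obstacle is the rigor and sharpness of the interval Hessian: one must (i) produce a correct rational enclosure of the twenty-eight distinct second partials of the rather unwieldy sum of nine square roots over the seven-dimensional box, controlling the outward rounding, and (ii) certify, again in rational arithmetic, that $\lambda_1(W_c)>\varrho(W_\delta)$ (or that each of the $64$ vertex matrices has a positive eigenvalue). The delicate balance is to choose the disturbance radius small enough that this eigenvalue gap is provably positive and survives a slight enlargement of the box, yet large enough that $\mathcal{D}_p$, $\mathcal{D}_{bp}$ and the remaining region still admit a subdivision of manageable size in the final interval-arithmetic step; once the radius is fixed, what is left is a routine, if lengthy, machine interval computation.
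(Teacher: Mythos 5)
Your main line of argument is exactly the paper's proof of Proposition \ref{prop:pyramid}: compute the interval Hessian $\mathcal{W}$ of $f$ over $\mathcal{D}_p$ by interval arithmetic, certify via the eigenvalue bound of Theorem \ref{th:nonpos} (applied, as you say, to $-\mathcal{W}$, i.e.\ $\lambda_1(W_c)>\varrho(W_\delta)$, which is the \textbf{isdef} query for nonnegative semidefiniteness) that no matrix in $\mathcal{W}$ is negative semidefinite, note that the test still passes when the disturbance $\pi/791$ is slightly enlarged so that all of $\mathcal{D}_p$ becomes interior, and conclude by Theorem \ref{th:notextremepoint} that $f$ has no local, hence no global, maximum in $\mathcal{D}_p$, whence $f<fmax$ there once the global maximum is identified. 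Two peripheral corrections. First, the aside ``equivalently, by Theorem \ref{th:exteig} it suffices to check the $2^{6}$ vertex matrices'' is not justified: that theorem (Rohn) gives vertex attainment only for $\overline{\lambda}_1$ and $\underline{\lambda}_n$, whereas the quantity you need is $\underline{\lambda}_1=\min_{A\in\mathcal{W}}\lambda_1(A)$, which need not be attained at a vertex matrix --- this is precisely why the paper's \textbf{isdef} procedure refuses the vertex method for the nonpositive/nonnegative queries; keep only the midpoint--radius test. Second, $\mathcal{D}_p$ does \emph{not} lie strictly in the interior of $\mathcal{D}^{(1)}\cup\mathcal{D}^{(2)}$: its $\theta_2$-interval straddles $\theta_2=\pi$ (just as for $\mathcal{D}_{bp}$), so the final contradiction should not be phrased via interiority in the reduced domain; instead argue that $f$ is periodic in the $\theta$'s and defined on all of $\mathbb{R}^7$, so any global maximizer of the configuration problem is automatically a local maximum of $f$ on $\mathbb{R}^7$, which the Hessian certificate excludes throughout the (slightly enlarged) box.
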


\section{Other domains}
Now, we are to prove the following strict inequality,
\begin{equation} \label{strictinequality}
\begin{split}
& f(\phi_1, \phi_2, \theta_2, \phi_3, \theta_3, \phi_4, \theta_4) < fmax,\\
& \mbox{where } (\phi_1, \phi_2, \theta_2, \phi_3, \theta_3, \phi_4, \theta_4) \in
(\mathcal{D}^{(1)} \cup \mathcal{D}^{(2)}) \backslash (\mathcal{D}_{bp} \cup \mathcal{D}_{p}).
\end{split}
\end{equation}

Algorithms in this section are implemented by procedures in the Maple package
\textbf{fivepoints}, for the code, see appendix.

\subsection{Branch and bound strategies} \label{5points:bb}
We check domains over which variables take using the interval method,
more precisely, we compute the interval value of the interval mapping
corresponding some functions through interval arithmetic,
properties of this interval may suggest that,
when variables take values in this domain,
function $f$ has no stationary point,
or its maximum is less than the value corresponding the
bipyramid configuration, or it is not necessary to consider the
case for symmetries. All in all,
function values in this domain
cannot be greater than $fmax$
(these verification methods are implemented
by procedure \textbf{ischecked} in the package \textbf{fivepoints}).
The followings are methods we used to exclude domains contained in
$\mathcal{D}^{(1)}$ and $\mathcal{D}^{(2)}$.

\begin{enumerate}
\item \label{pointc}
(by Condition \ref{as:pointc})
Verify that $C$ is below $E$.

\item \label{halfsphere}
(by Condition \ref{as:halfsphere})
Verify that 5 points are in the same half sphere.

\item \label{secondlength}
(by Condition \ref{as:secondlength})
Verify that $AB$ is not the second largest distance.

\item \label{mindis}
(by Condition \ref{as:mindis})
Verify that the distance between some two points is less than $\dfrac{2}{15}$.

\item \label{totaldis}
Verify that the upper bound of function values are less than
$fmax$ (see Eq. \eqref{fmax}).

\item \label{derivative}
(by Proposition \ref{prop:parder})
Verify that some partial derivative of $f$ does not change signs
in this domain.

\item \label{totaldefver}
Compute the interval Hessian matrix corresponding this domain,
and determine its negative definiteness through Theorem \ref{th:pstv}.

\item \label{totaldefeig}
Compute the interval Hessian matrix corresponding this domain,
and determine its negative definiteness through Theorem \ref{th:posdefeig}.

\item \label{totalnotdef}
Compute the interval Hessian matrix corresponding this domain,
and determine its nonnegative definiteness through Theorem \ref{th:nonpos}.

\item \label{gradient}
Determine there exists no maximal point
in this domain through Theorem \ref{th:gradient}.

\end{enumerate}

Different methods should be used in different domains,
for example, methods \ref{totaldefver} and \ref{totaldefeig}
should be used first near points corresponding
bipyramid distribution, then others
(\ref{gradient}, \ref{totaldis}, \ref{derivative}) can be used;
while method \ref{totalnotdef} should be used first
near points corresponding pyramid distribution, then
others (\ref{gradient}, \ref{totaldis}, \ref{derivative})
can be used; and for generic domains,
methods can be used in turns as
\ref{pointc}, \ref{halfsphere}, \ref{mindis},
\ref{secondlength}, \ref{totaldis}, \ref{gradient}, \ref{derivative}.

For a domain to be verified,
we choose appropriate verification methods and the verification order,
if verifications are not successful, we subdivide the interval whose
width is maximal into two equal intervals, and
verify the two subdomains recursively.
We set a positive number, if
the largest interval width of a domain
we get in the above process is less than this number,
we stop subdividing this domain, and record it,
this domain may contain distributions of points corresponding larger
distance sums then the maximal distance sum conjectured.
This process terminates when all domains have been
verified.
If all domains are verified successful, and
no domain is contained in the record list,
then we have proved the conjecture in fact.
The complete algorithm is described below
(implemented as the procedure \textbf{spchecked}
in the package \textbf{fivepoints}):

\begin{algorithm}[H]
\scriptsize
\dontprintsemicolon
\linesnumbered
\caption{CheckDomain}
\KwIn{intervallist, methods, notcheckbipyramid, notcheckpyramid}
\KwOut{true/false}

\Begin{
checkbipyramid := not notcheckbipyramid; checkpyramid := not notcheckpyramid;
checkmethods := methods

\If{checkbipyramid}{
\If(
\;\#\textit{bipyramidintervallist is the rectangular
domain we excluded first near the bipyramid distribution.}
)
{intervallist is contained in domain bipyramidintervallist}{
add $[-1]$ to checkprocess\;
\#\textit{record the checking process}\;
\Return{true}
}
\If{some variable interval in intervallist
disjoints the corresponding variable interval
in bipyramidintervallist}
{checkbipyramid := false}
}

\If{checkpyramid}{
\If(\;\#\textit{pyramidintervallist is the rectangular
domain we excluded first near the pyramid distribution.}
)
{intervallist is contained in domain pyramidintervallist}{
add $[0]$ to checkprocess; \Return{true}
}
\If{some variable interval in intervallist
disjoints the corresponding variable interval
in pyramidintervallist}
{checkpyramid := false}
}

\While{$methods_i \in methods$}{
\If{checking domain intervallist successfully
through the method $methods_i$}{
add $[i]$ to checkprocess; \Return{true}
}
}

dim := the widest interval position in interval vector intervallist\;
\If{the width of the dim-th interval in
intervallist $<$ $\frac{1}{1000}$}{
add intervallist to notchecked\;
\#\textit{notchecked records domains cannot be checked successfully}\;
add $[-4]$ to checkprocess; \Return{false}
}
add dim to checkprocess\;
\#\textit{dim designates the variable to be subdivided, and is to be
recorded in checking process}\;
subintervallist := subdivide the domain intervallist in the dim-th interval\;
cur := true\;
\ForAll{$subintervallist_i \in subintervallist$}{
\If{not CheckDomain($subintervallist_i$, checkmethods, not checkbipyramid,
not checkpyramid)}{cur := false}
}
\If{cur}{add $[-2]$ to checkprocess}
\Else{add $[-3]$ to checkprocess}

\Return{cur}
}
\end{algorithm}

\subsection{Verification process}
In order to subdivide domains into appropriate widths,
we act some experiments first, finally we subdivide domains as follows:
for $\mathcal{D}^{(1)}$ and $\mathcal{D}^{(2)}$
we divide in \S\,\ref{5points:math},
each domain is subdivided the way that each interval of it is trisected,
so we get $3^7=2817$ subdomains each, denote them respectively by:
$$\mathcal{D}^{(1)}_1,\mathcal{D}^{(1)}_2,\ldots,\mathcal{D}^{(1)}_{2187},$$
and
$$\mathcal{D}^{(2)}_1,\mathcal{D}^{(2)}_2,\ldots,\mathcal{D}^{(2)}_{2187}.$$
If some of these subdomains are
difficult to verify successfully,
we can again subdivide them the same way.
Actually, the following domains need to subdivide again:
\begin{equation} \label{eqn:subdivision}
\begin{split}
&\mathcal{D}^{(1)}_{62}, \mathcal{D}^{(1)}_{158}, \mathcal{D}^{(1)}_{239},
\mathcal{D}^{(1)}_{863}, \mathcal{D}^{(1)}_{1102}, \mathcal{D}^{(1)}_{1105},
\mathcal{D}^{(1)}_{1106}, \mathcal{D}^{(1)}_{2114}, \mathcal{D}^{(1)}_{2132},\\
&\mathcal{D}^{(1)}_{1105-1101},
\mathcal{D}^{(1)}_{1106-834}, \mathcal{D}^{(1)}_{1106-861},
\mathcal{D}^{(1)}_{1106-1099}, \mathcal{D}^{(1)}_{1106-1100},\\
&\mathcal{D}^{(1)}_{1105-1101-1100},
\mathcal{D}^{(1)}_{1106-834-725}, \mathcal{D}^{(1)}_{1106-834-726},\\
&\mathcal{D}^{(1)}_{1106-834-725-1752},
\mathcal{D}^{(1)}_{1106-834-726-1507}, \mathcal{D}^{(1)}_{1106-834-726-1750},
\end{split}
\end{equation}
where $\mathcal{D}^{(1)}_{1105-1101}$ denotes the 1101-th subdomain in all 2187
subdomains of $\mathcal{D}^{(1)}_{1105}$, other similar notations are understood the same way.

\subsection{Algorithm implementations}
The Maple Package \textbf{fivepoints} implements algorithms described in above sections.
For the detailed code, see Appendix \ref{fivepoints}.

\section{Conclusion}
The following is verification time for various domains
(may differ on different computers, it is the time used by computers
with Pentium IV 3.0 GHz CPU, and 1 GB RAM):
\begin{enumerate}
\item
Time used to verify domain $\mathcal{D}^{(1)}$:
$782534.203$ seconds.

\item
Time used to verify domain $\mathcal{D}^{(2)}$:
$8797.600$ seconds.

\item
Total time:
$791331.803$ seconds.
\end{enumerate}

This completes the proof of the problem of spherical distribution of 5 points.

\clearpage

\begin{appendix}
\section{Maple code}
\subsection{Module IntervalArithmetic} \label{intervalarithmetic}
\begin{maplettyout}
# IntervalArithmetic: a Maple package used for interval computations.
# $revision: 1.0.3.4$
IntervalArithmetic := module()
  export ulp, fulb, rfulb0, rfulb,
    `Evalr/add`, `Evalr/multiply`, `Evalr/power`,
    `Evalr/sin`, `Evalr/cos`, `Evalr/tan`, `Evalr/cot`,
    `Evalr/arcsin`, `Evalr/arccos`, `Evalr/arctan`, `Evalr/arccot`,
    `Evalr/exp`, `Evalr/ln`, `Evalr/powexp`,
    `Evalr/shake`, Evalr,
    isdef, vert, inthull, intwidth, maxwidthdim, intsbdv, sortpos, contain:
  local init:
  option package, load = init:

##############################################################################
# initialization
##############################################################################

# the initialization procedure
init := proc()
global `type/interval`, `type/int_ext`, `type/rat_ext`,
  `type/cons`, `type/consnum`, `type/ratpar`,
  `type/exp_user`, `type/ln`,
  `convert/ft2rat`, truncatenegativepart, p1, p2:

# defining the datatype of interval
`type/interval` := proc( inv )
  if type( inv, list( rat_ext ) ) and nops( inv ) = 2
  and inv[ 1 ] <= inv[ 2 ] then
    true
  elif inv = [ ] then
    true
  else
    false
  fi
end:

# extended datatype of integer numbers
`type/int_ext` := proc( x )
  evalb(  type( x, integer ) or x = -infinity or x = infinity  )
end:

# extended datatype of rational numbers
`type/rat_ext` := proc( x )
  evalb(  type( x, rational ) or x = -infinity or x = infinity  )
end:

# datatype that containing gamma, Catalan or Pi
`type/cons` := proc( x )
  member( x, { gamma, Catalan, Pi } )
end:

# datatype of constant numbers
`type/consnum` := proc( x )
  local c, r:
  global constants:
  c := constants:
  constants := gamma, Catalan, Pi, infinity:
  r := type( x, constant ):
  constants := c:
  r
end:

# datatype of functions whose parameters are rational numbers
`type/ratpar` := proc( f, tp::Or( set( type ), type ) )
  type( f, tp ) and type( [ op( f ) ], list( rational ) )
end:

# datatype of natural exponent
`type/exp_user` := proc( f )
  evalb( op( 0, f ) = `exp` )
end:

# datatype of natural logarithm
`type/ln` := proc( f )
  evalb( op( 0, f ) = `ln` )
end:

# converting from float point numbers(ranges) to rational numbers(intervals)
`convert/ft2rat` := proc( inv )
  local t:
  if type( inv, float ) then
    convert( inv, rational, exact )
  elif type( inv, rat_ext ) then
    inv
  elif type( inv, list( Or( float, rat_ext ) ) ) and nops( inv ) = 2 then
    t := map( procname, inv ):
    if t[ 1 ] <= t[ 2 ] then t
    else [ ]
    fi
  elif type( inv, list ) or type( inv, Matrix ) then
    map( procname, inv )
  else
    error "invalid argument: 
  fi
end:

# If truncatenegativepart is set to ture,
# then in interval power arithmetic,
# negative parts of base intervals are truncated.
# If truncatenegativepart is set to false,
# then an error is generated when encountering
# a interval power arithmetic whose base contains a negative part.
truncatenegativepart := false:

# rational lower bound of Pi
p1 := rfulb0( Pi, 'r', 'l' ):

# rational upper bound of Pi
p2 := rfulb0( Pi, 'r', 'u' ):

end:

# the error of a float point number
ulp := proc( x )
  if type( x, float ) then
    if x = Float( infinity ) or x = -Float( infinity ) then 0
    else Float( 1, length( op( 1, x ) )+op( 2, x )-Digits )
    fi
  elif type( x, list ) or type( x, Matrix ) then map( procname, x )
  else error "invalid argument: 
  fi
end:

# the float upper or lower bounds of a float number
fulb := proc( x, ul::{identical( u ), identical( l )} )
  local t:
  if type( x, float ) then
    if x = Float( infinity ) or x = -Float( infinity ) then
      x
    else
      # not using "eval" may generate an error,
      # may be a bug in maple( evaluation rule in Float ).
      if type( ul, identical( u ) ) then
        eval( x+Float( 1, length( op( 1, x ) )+op( 2, x )-Digits ) )
      else
        eval( x-Float( 1, length( op( 1, x ) )+op( 2, x )-Digits ) )
      fi
    fi
  elif type( x, int_ext ) then
    x
  elif type( x, consnum ) then
    procname( evalf( x ), args[ 2..-1 ] )
  elif type( x, list ) or type( x, Matrix ) then
    map( procname, x, args[ 2..-1 ] )
  else
    error "invalid argument: 
  fi
end:

# the rational(or float) upper(or lower) bound of
# a number that can be computed to any precision in Maple system
rfulb0 := proc( x, rf::{identical( r ), identical( f )},
ul::{identical( u ), identical( l )} )
  local p, q, n, t:
  if type( x, int_ext ) then
    x
  elif type( x, rational ) then
    if rf = 'r' then x
    else fulb( x, ul )
    fi
  elif type( x, Or( float, cons,
  ratpar( { trig, arctrig, exp_user, ln } ) ) )
  or ( op( 0, x ) in { `exp`, `ln`, `arctan`, `arccot` }
  and op( x ) = infinity )
  or ( op( 0, x ) in { `exp`, `arctan`, `arccot` }
  and op( x ) = -infinity ) then
    if rf = 'f' then fulb( x, ul )
    else convert( fulb( x, ul ), ft2rat )
    fi
  elif type( x, ratpar( `^` ) ) then
    q := op( 1, x ): n := op( 2, x ):
    if q < 0 then    # e.g. ( -16 )^( 1/3 )
      if type( n, integer ) then
        procname( q^n, args[ 2..-1 ] )
      elif type( numer( n ), even ) then
        procname( (-q)^n, args[ 2..-1 ] )
      elif type( numer( n ), odd ) and type( denom( n ), odd ) then
        procname( -( -q )^n, args[ 2..-1 ] )
      else
        error "negative radicand"
      fi
    else
      if rf = 'f' then fulb( x, ul )
      else convert( fulb( x, ul ), ft2rat )
      fi
    fi
  elif type( x, `+` ) and
  type( [ op( x ) ], list(
  Or(  rational, cons, ratpar( {trig, arctrig, `^`} ) )  ) ) then
    `+`(  op(  map(  procname, [ op( x ) ], args[ 2..-1 ] ) ) )
  elif type( x, `*` ) and type( [ op( x ) ],
  list(  Or(  rational, ratpar( `^` ) )  ) ) then
    if hastype( remove( type, [ op( x ) ], rational ), negative ) then
      procname( map( convert, x, surd ), args[ 2..-1 ] )
    elif op( select( type, [ op( x ) ], rational ) ) < 0 then
      procname(  map(  procname, x, 'r',
      `if`( ul = 'u', 'l', 'u' ) ), args[ 2..-1 ] )
    else
      procname(  map(  procname, x, 'r',
      `if`( ul = 'u', 'u', 'l' ) ), args[ 2..-1 ] )
    fi
  elif type( x, list ) or type( x, Matrix ) then
    map( procname, x, args[ 2..-1 ] )
  else
    error "cannot determine the 
    `if`( rf = 'r', rational, float ), `if`( ul = 'u', upper, lower ), x
  fi
end:

# the rational(or float) upper(or lower) bound of a number
rfulb := proc( x, rf::{identical( r ), identical( f )},
ul::{identical( u ), identical( l )} )
  local i, t:
  if type( x, consnum ) then
    t := `Evalr/shake`( x ):
    if type( t, 'interval' ) then t := op( `if`( ul = 'l', 1, 2 ), t ) fi:
    if rf = 'f' then t := rfulb0( t, 'f', ul ) fi:
    t
  elif type( x, list ) or type( x, Matrix ) then
    map( procname, x )
  else error "invalid argument: 
  fi:
end:

##############################################################################
# Following procedures define interval arithmetic
##############################################################################

# interval addition
`Evalr/add` := proc( L::list )
  local n, i, lb, ub:
  if member( [ ], L) then return [ ] fi:
  n := nops( L ):
  lb := 0: ub := 0:
  for i to n do
    if type( L[ i ], 'interval' ) then
      lb := lb + L[ i, 1 ]: ub := ub + L[ i, 2 ]:
    elif type( L[ i ], rat_ext ) then
      lb := lb + L[ i ]: ub := ub + L[ i ]:
    else
      error "unrecognized argument: 
    fi
  od:

  # Large integer may occur after computation with rational numbers,
  # we adjust intervals outside, expressing denominators of interval
  # vertexes with integers whose digits is not greater then Digits.
  lb := convert( rfulb0( lb, 'f', l ), ft2rat ):
  ub := convert( rfulb0( ub, 'f', u ), ft2rat ):
  return [ lb, ub ]
end:

# arithmetic multiplication
`Evalr/multiply` := proc( L::list )
  local n, i, lb, ub, a, b, tmp:
  if member( [ ], L) then return [ ] fi:
  n := nops( L ):
  lb := 1: ub := 1:
  for i to n do
    if type( L[ i ], 'interval' ) then
      a := L[ i, 1 ]: b := L[ i, 2 ]:
      if a >= 0 then
        if lb >= 0 then lb := lb*a: ub := ub*b:
        elif ub <= 0 then lb := lb*b: ub := ub*a:
        else lb := lb*b: ub := ub*b:
        fi:
      elif b <= 0 then
        if lb >= 0 then tmp := lb: lb := ub*a: ub := tmp*b:
        elif ub <= 0 then tmp := lb: lb := ub*b: ub := tmp*a:
        else tmp := lb: lb := ub*a: ub := tmp*a:
        fi:
      else    # a < 0 < b
        if lb >= 0 then lb := ub*a: ub := ub*b:
        elif ub <= 0 then ub := lb*a: lb := lb*b:
        else  # lb < 0 < ub
          tmp := lb:
          if ( lb*b ) <= ( ub*a )    # lb*b and ub*a will not be 0*infinity
            then lb := lb*b:
          else lb := ub*a:
          fi:
          if ( tmp*a ) <= ( ub*b )
            then ub := ub*b:
          else ub := tmp*a:
          fi:
        fi:
      fi:
    elif type( L[ i ], rational ) then
      if ( L[ i ] ) >= 0 then lb := lb*L[ i ]: ub := ub*L[ i ]:
      else tmp := lb: lb := ub*L[ i ]: ub := tmp*L[ i ]:
      fi:
    else error "unrecognized argument: 
    fi:
  od:
  lb := convert( rfulb0( lb, 'f', 'l' ), ft2rat ):
  ub := convert( rfulb0( ub, 'f', 'u' ), ft2rat ):
  return [ lb, ub ]
end:

# interval power
`Evalr/power` := proc( a, n::rational )
  local lb, ub, tmp:
  if n = 0 then return 0 fi:
  if a = [ ] then return [ ] fi:
  if type( a, 'interval' ) then
    lb := a[ 1 ]: ub := a[ 2 ]:
    if n > 0 then
      if lb >= 0 then
        lb := lb^n: ub := ub^n:
      elif ub < 0 then
        if type( numer( n ), odd ) then
          if type( denom( n ), odd ) then
            lb := lb^n: ub := ub^n:
          else    # e.g. sqrt( [ -2, -1 ] )
            error "negative radicand"
          fi:
        else tmp := lb: lb := ub^n: ub := tmp^n:
        fi:
      else    # lb < 0 =< ub
        if type( numer( n ), odd ) then
          if type( denom( n ), odd ) then
            lb := lb^n: ub := ub^n:
          else   # e.g. sqrt( [ -2, 1 ] )
            if truncatenegativepart then
              lb := 0: ub := ub^n:
            else
              error "negative radicand"
            fi
          fi:
        else    # e.g. ( [ -2, 1 ] )^2
          if ( -lb ) <= ub then ub := ub^n:
          else ub := ( -lb )^n:
          fi:
          lb := 0:
        fi:
      fi:
    else    # n < 0
      if lb = 0 then
        if ub = 0 then
          lb := -infinity: ub := infinity:
        else
          lb := ub^n: ub := infinity:
        fi:
      elif lb > 0 then tmp := lb: lb := ub^n: ub := tmp^n:
      elif ub <= 0 then
        if type( numer( n ), odd ) then
          if type( denom( n ), odd ) then
            if ub=0 then ub := lb^n: lb := -infinity:
            else tmp := lb: lb := ub^n: ub := tmp^n:
            fi:
          else    # e.g. 1/sqrt( [ -2,-1 ] )
            error "negative radicand"
          fi:
        else
          if ub = 0 then lb := lb^n: ub := infinity:
          else lb := lb^n: ub := ub^n:
          fi:
        fi:
      else    # lb < 0 < ub
        if type( numer( n ), odd ) then    # e.g. 1/sqrt( [ -2, 1 ] )
          if truncatenegativepart then
            lb := ub^n: ub := infinity:
          else
            error "negative radicand"
          fi
        else    # e.g. 1/( [ -2, 1 ] )^2.
          if ( -lb ) <= ub then lb := ub^n:
          else lb := ( -lb )^n:
          fi:
          ub := infinity:
        fi:
      fi:
    fi:
  elif type( a, rational ) then
    return [ rfulb0( a^n, 'r', 'l' ), rfulb0( a^n, 'r', 'u' ) ]
  else error "invalid argument: 
  fi:
  if type( lb, Not( rational ) ) then lb := rfulb0( lb, 'r', 'l' )
  else lb := convert( rfulb0( lb, 'f', 'l' ), ft2rat )
  fi:
  if type( ub, Not( rational ) ) then ub := rfulb0( ub, 'r', 'u' )
  else ub := convert( rfulb0( ub, 'f', 'u' ), ft2rat )
  fi:
  return [ lb, ub ]
end:

# interval sine
`Evalr/sin` := proc( a )
  local lb, ub, tmp:
  global p1, p2:
  if a = [ ] then return [ ] fi:
  if type( a, 'interval' ) then
    lb := a[ 1 ]: ub := a[ 2 ]:
    if ub <= 0 then
      tmp := procname( [ -ub, -lb ] ):
      return [ -tmp[ 2 ], -tmp[ 1 ] ]
    fi:
    lb := lb/`if`( lb >0, p2, p1 ): ub := ub/`if`( ub >0, p1, p2 ):
    tmp := floor( lb/2 ): lb := lb-2*tmp: ub := ub-2*tmp:
    if ub >= ( lb+2 ) then lb := -1: ub := 1:
    else
      if ( lb-1/2 ) < 0 then
        if ( ub-1/2 ) < 0 then
          # 1/2*Pi > lb*Pi > lb*p1 > 0
          # ( ub-1/2 )*Pi < ( ub-1/2 )*p1 < 0 ==>
          # 0 < ub*Pi < 1/2*Pi+( ub-1/2 )*p1 < Pi/2
          lb := rfulb0( sin( lb*p1 ), 'r', 'l' ):
          ub := rfulb0( cos( ( 1/2-ub )*p1 ), 'r', 'u' ):
        elif ( ub-1+lb ) < 0 then
          # 1/2*Pi > lb*Pi > lb*p1 > 0
          lb := rfulb0( sin( lb*p1 ), 'r', 'l' ): ub := 1:
        elif ( ub-1 ) < 0 then
          # ( ub-1 )*Pi < ( ub-1 )*p1 < 0 ==>
          # Pi/2 < ub*Pi < Pi+( ub-1 )*p1 < Pi
          lb := rfulb0( sin( ( 1-ub )*p1 ), 'r', 'l' ): ub := 1:
        elif ( ub-3/2 ) < 0 then
          # ( ub-3/2 )*Pi < ( ub-3/2 )*p1 < 0 ==>
          # Pi <= ub*Pi < 3/2*Pi+( ub-3/2 )*p1 < 3/2*Pi
          lb := -rfulb0( cos( ( 3/2-ub )*p1 ), 'r', 'u' ): ub := 1:
        else
          lb := -1: ub := 1:
        fi:
      elif ( lb-1 ) < 0 then
        if ( ub-3/2 ) < 0 then
          # ( ub-3/2 )*Pi < ( ub-3/2 )*p1 <0 ==>
          # Pi/2 <= ub*Pi < 3/2*Pi+( ub-3/2 )*p1 < 3/2*Pi
          # ( lb-1/2 )*Pi >= ( lb-1/2 )*p1 >=0 ==>
          # Pi > lb*Pi >= Pi/2+( lb-1/2 )*p1 >= Pi/2
          tmp := lb:
          lb := -rfulb0( cos( ( 3/2-ub )*p1 ), 'r', 'u' ):
          ub := rfulb0( cos( ( tmp-1/2 )*p1 ), 'r', 'u' ):
        elif ( ub-3+lb ) < 0 then
          # ( lb-1/2 )*Pi >= ( lb-1/2 )*p1 >=0 ==>
          # Pi > lb*Pi >= Pi/2+( lb-1/2 )*p1 >= Pi/2
          ub := rfulb0( cos( ( lb-1/2 )*p1 ), 'r', 'u' ): lb := -1:
        elif ( ub-5/2 ) < 0 then
          # ( ub-5/2 )*Pi < ( ub-5/2 )*p1 < 0 ==>
          # 2*Pi <= ub*Pi < 5/2*Pi+( ub-5/2 )*p1 < 5/2*Pi
          lb := -1: ub := rfulb0( cos( ( 5/2-ub )*p1 ), 'r', 'u' ):
        else lb := -1: ub := 1:
        fi:
      elif ( lb-3/2 ) < 0 then
        if ( ub-3/2 ) < 0 then
          # ( ub-3/2 )*Pi < ( ub-3/2 )*p1 < 0 ==>
          # Pi < ub*Pi < 3/2*Pi+( ub-3/2 )*p1 < 3/2*Pi
          # ( lb-1 )*Pi >= ( lb-1 )*p1 >= 0 ==>
          # 3/2*Pi > lb*Pi >= Pi+( lb-1 )*p1 >= Pi
          tmp := lb:
          lb := -rfulb0( cos( ( 3/2-ub )*p1 ), 'r', 'u' ):
          ub := -rfulb0( sin( ( tmp-1 )*p1 ), 'r', 'l' ):
        elif ( ub-3+lb ) < 0 then
          # ( lb-1 )*Pi >= ( lb-1 )*p1 >=0 ==>
          # 3/2*Pi > lb*Pi >= Pi+( lb-1 )*p1 >= Pi
          ub := -rfulb0( sin( ( lb-1 )*p1 ), 'r', 'l' ): lb := -1:
        elif ( ub-5/2 ) < 0 then
          # ( ub-5/2 )*Pi < ( ub-5/2 )*p1 < 0 ==>
          # 3/2*Pi <= ub*Pi < 5/2*Pi+( ub-5/2 )*p1 < 5/2*Pi
          lb := -1: ub := rfulb0( cos( ( 5/2-ub )*p1 ), 'r', 'u' ):
        else lb := -1: ub := 1:
        fi:
      elif ( ub-5/2 ) < 0 then
        # ( lb-3/2 )*Pi >= ( lb-3/2 )*p1 >=0 ==>
        # 2*Pi > lb*Pi >= 3/2*Pi+( lb-3/2 )*p1 >= 3/2*Pi
        # ( ub-5/2 )*Pi < ( ub-5/2 )*p1 < 0 ==>
        # 3/2*Pi <= ub*Pi < 5/2*Pi+( ub-5/2 )*p1 < 5/2*Pi
        lb := -rfulb0( cos( ( lb-3/2 )*p1 ), 'r', 'u' ):
        ub := rfulb0( cos( ( 5/2-ub )*p1 ), 'r', 'u' ):
      elif ( ub-5+lb ) < 0 then
        # ( lb-3/2 )*Pi >= ( lb-3/2 )*p1 >=0 ==>
        # 2*Pi > lb*Pi >= 3/2*Pi+( lb-3/2 )*p1 >= 3/2*Pi
        lb := -rfulb0( cos( ( lb-3/2 )*p1 ), 'r', 'u' ): ub := 1:
      elif ( ub-7/2 ) < 0 then
        # ( ub-7/2 )*Pi < ( ub-7/2 )*p1 < 0 ==>
        # 3*Pi <= ub*Pi < 7/2*Pi+( ub-7/2 )*p1 < 7/2*Pi
        lb := -rfulb0( cos( ( 7/2-ub )*p1 ), 'r', 'u' ): ub := 1:
      else lb := -1: ub := 1:
      fi:
    fi:
  elif type( a, rational ) then
    return [ rfulb0( 'sin'( a ), 'r', 'l' ), rfulb0( 'sin'( a ), 'r', 'u' ) ]
  else error "invalid argument: 
  fi:
  return [ lb, ub ]
end:

# interval cosine
`Evalr/cos` := proc( a )
  global p1, p2:
  if a = [ ] then return [ ] fi:
  if type( a, 'interval' ) then
    `Evalr/sin`( [ -a[ 2 ] + p1/2, -a[ 1 ] + p2/2 ] )
  elif type( a, rational ) then
    # using 'cos' to suppress automatic simplification
    [ rfulb0( 'cos'( a ), 'r', 'l' ), rfulb0( 'cos'( a ), 'r', 'u' ) ]
  else
    error "invalid argument: 
  fi:
end:

# interval tangent
`Evalr/tan` := proc( a )
  local lb, ub, tmp:
  global p1, p2:
  if a = [ ] then return [ ] fi:
  if type( a, 'interval' ) then
    lb := a[ 1 ]: ub := a[ 2 ]:
    lb := lb/`if`( lb > 0, p2, p1 ): ub := ub/`if`( ub > 0, p1, p2 ):
    tmp := floor( lb ): lb := lb - tmp: ub := ub - tmp:
    if ub >= ( lb+1 ) then lb := -infinity: ub := infinity:
    else
      if ( lb-1/2 ) < 0 then
        if ( ub-1/2 ) < 0 then
          # 0<lb*p1<lb*Pi<a[1]-tmp*Pi<a[2]-tmp*Pi<ub*Pi<ub*p2
          lb := rfulb0( 'tan'( a[ 1 ] ), 'r', 'l' ):
          ub := rfulb0( 'tan'( a[ 2 ] ), 'r', 'u' ):
        else
          lb := -infinity: ub := infinity:
        fi:
      elif ( ub-3/2 ) < 0 then
        # 0<lb*p1<lb*Pi<a[1]-tmp*Pi<a[2]-tmp*Pi<ub*Pi<ub*p2
        lb := rfulb0( 'tan'( a[ 1 ] ), 'r', 'l' ):
        ub := rfulb0( 'tan'( a[ 2 ] ), 'r', 'u' ):
      else
        lb := -infinity: ub := infinity:
      fi:
    fi:
  elif type( a, rational ) then
    return [ rfulb0( 'tan'( a ), 'r', 'l' ), rfulb0( 'tan'( a ), 'r', 'u' ) ]
  else
    error "invalid argument: 
  fi:
  return [ lb, ub ]
end:

# interval cotangent
`Evalr/cot` := proc( a )
  global p1, p2:
  if a = [ ] then return [ ] fi:
  if type( a, 'interval' ) then
    `Evalr/tan`( [ -a[ 2 ] + p1/2, -a[ 1 ] + p2/2 ] )
  elif type( a, rational ) then
    [ rfulb0( 'cot'( a ), 'r', 'l' ), rfulb0( 'cot'( a ), 'r', 'u' ) ]
  else
    error "invalid argument: 
  fi:
end:

# interval arc sine
`Evalr/arcsin` := proc( inv )
  local lb, ub:
  global p1, p2:
  if inv = [ ] then return [ ] fi:
  if type( inv, rational ) then
    if inv >= -1 and inv <= 1 then
      [  rfulb0( 'arcsin'( inv ), 'r', 'l' ),
      rfulb0( 'arcsin'( inv ), 'r', 'u' ) ]
    else
      error "invalid argument: 
    fi
  elif type( inv, 'interval' ) then
    lb := inv[ 1 ]: ub := inv[ 2 ]:
    if not ( type( lb, rational ) and type( ub, rational ) ) then
      error "invalid argument: 
    fi:
    if lb > 1 or ub < -1 then
      error "invalid argument: 
    elif lb < -1 and ub > 1 then
      lb := -p2:
      ub := p2:
    elif lb < -1 then
      lb := -p2: ub := rfulb0( 'arcsin'( ub ), 'r', 'u' ):
    elif ub > 1 then
      lb := rfulb0( 'arcsin'( lb ), 'r', 'l' ): ub := p2:
    else
      lb := rfulb0( 'arcsin'( lb ), 'r', 'l' ):
      ub := rfulb0( 'arcsin'( ub ), 'r', 'u' ):
    fi:
    [ lb, ub ]
  else
    error "invalid argument: 
  fi
end:

# interval arc cosine
`Evalr/arccos` := proc( inv )
  local tmp:
  global p1, p2:
  if inv = [ ] then return [ ] fi:
  if type( inv, rational ) then
    if inv >= -1 and inv <= 1 then
      [ rfulb0( 'arccos'( inv ), 'r', 'l' ),
      rfulb0( 'arccos'( inv ), 'r', 'u' ) ]
    else
      error "invalid argument: 
    fi:
  elif type( inv, 'interval' ) then
    tmp := `Evalr/arcsin`( inv ):
    [ p1/2-tmp[ 2 ], p2/2-tmp[ 1 ] ]
  else
    error "invalid argument: 
  fi:
end:

# interval arc tangent
`Evalr/arctan` := proc( inv )
  if inv = [ ] then return [ ] fi:
  if type( inv,rat_ext ) then
    [ rfulb0( 'arctan'( inv ), 'r', 'l' ),
    rfulb0( 'arctan'( inv ), 'r', 'u' ) ]
  elif type( inv, 'interval' ) then
    [ rfulb0( 'arctan'( inv[ 1 ] ), 'r', 'l' ),
    rfulb0( 'arctan'( inv[ 2 ] ), 'r', 'u' ) ]
  else
    error "invalid argument: 
  fi
end:

# interval arc cotangent
`Evalr/arccot` := proc( inv )
  local tmp:
  global p1, p2:
  if inv = [ ] then return [ ] fi:
  if type( inv, rat_ext ) then
    [ rfulb0( 'arccot'( inv ), 'r', 'l' ),
    rfulb0( 'arccot'( inv ), 'r', 'u' ) ]
  elif type( inv, 'interval' ) then
    tmp := `Evalr/arctan`( inv ):
    [ p1/2-tmp[ 2 ], p2/2-tmp[ 1 ] ]
  else
    error "invalid argument: 
  fi
end:

# interval exponent
`Evalr/exp` := proc( )
  local a, inv:
  if has( [args], { [ ] } ) then return [ ] fi:
  if nargs = 1 then
    inv := args[ 1 ]:
    if type( inv, rat_ext ) then
      [ rfulb0( exp( inv ), 'r', 'l' ), rfulb0( exp( inv ), 'r', 'u' ) ]
    elif inv = [ ] then
      [ ]
    elif type( inv, 'interval' ) then
      [ rfulb0( exp( inv[ 1 ] ), 'r', 'l' ),
      rfulb0( exp( inv[ 2 ] ), 'r', 'u' ) ]
    else
      error "invalid argument: 
    fi
  elif type( args[ 1 ], rational ) and args[ 1 ] > 0 and args[ 1 ] <> 1 then
    a := args[ 1 ]:
    inv := args[ 2 ]:
    if type( inv, rat_ext ) then
      [ rfulb0( a^inv, 'r', 'l' ), rfulb0( a^inv, 'r', 'u' ) ]
    elif inv = [ ] then
      [ ]
    elif a > 1 and type( inv, 'interval' ) then
      [ rfulb0( a^( inv[ 1 ] ), 'r', 'l' ),
      rfulb0( a^( inv[ 2 ] ), 'r', 'u' ) ]
    elif a < 1 and type( inv, 'interval' ) then
      [ rfulb0( a^( inv[ 2 ] ), 'r', 'l' ),
      rfulb0( a^( inv[ 1 ] ), 'r', 'u' ) ]
    else
      error "invalid argument: 
    fi
  else
    error "invalid argument: 
  fi
end:

# interval logarithm
`Evalr/ln` := proc( )
  local a, inv:
  if has( [args], { [ ] } ) then return [ ] fi:
  if nargs = 1 then
    inv := args[ 1 ]:
    if type( inv, rat_ext ) and inv > 0 then
      # e.g. ln(4) is automatically simplified to 2*ln(2).
      [ rfulb0( 'ln'( inv ), 'r', 'l' ), rfulb0( 'ln'( inv ), 'r', 'u' ) ]
    elif inv = [ ] then
      [ ]
    elif type( inv, 'interval' ) and inv[ 1 ] >= 0 then
      [ `if`( inv[ 1 ] = 0, -infinity, rfulb0( 'ln'( inv[ 1 ] ), 'r', 'l' ) ),
      rfulb0( 'ln'( inv[ 2 ] ), 'r', 'u' ) ]
    else
      error "invalid argument: 
    fi
  elif type( args[ 1 ], rational ) and args[ 1 ] > 0 and args[ 1 ] <> 1 then
    a := args[ 1 ]:
    inv := args[ 2 ]:
    if type( inv, rat_ext ) and inv > 0 then
      [ rfulb0( 'ln'( inv ), 'r', 'l' )/
      rfulb0( 'ln'( a ), 'r', `if`( inv > 1, 'u', 'l' ) ),
      rfulb0( 'ln'( inv ), 'r', 'u' )/
      rfulb0( 'ln'( a ), 'r', `if`( inv > 1, 'l', 'u' ) ) ]
    elif inv = [ ] then
      [ ]
    elif a > 1 and type( inv, 'interval' ) and inv[ 1 ] >= 0 then
      [ `if`( inv[ 1 ] = 0, -infinity,
      rfulb0( 'ln'( inv[ 1 ] ), 'r', 'l' )/
      rfulb0( 'ln'( a ), 'r', `if`( inv[ 1 ] > 1, 'u', 'l' ) ) ),
      rfulb0( 'ln'( inv[ 2 ] ), 'r', 'u' )/
      rfulb0( 'ln'( a ), 'r', `if`( inv[ 2 ] > 1, 'l', 'u' ) ) ]
    elif a < 1 and type( inv, 'interval' ) and inv[ 1 ] >= 0 then
      [ rfulb0( 'ln'( inv[ 2 ] ), 'r', 'l' )/
      rfulb0( 'ln'( a ), 'r', `if`( inv[ 2 ] > 1, 'u', 'l' ) ),
      `if`( inv[ 1 ] = 0, infinity,
      rfulb0( 'ln'( inv[ 1 ] ), 'r', 'u' )/
      rfulb0( 'ln'( a ), 'r', `if`( inv[ 1 ] > 1, 'l', 'u' ) ) ) ]
    else
      error "invalid argument: 
    fi
  else
    error "invalid argument: 
  fi
end:

# interval power exponent
`Evalr/powexp` := proc( a::interval, x::interval )
  if a = [ ] or x = [ ] then
    [ ]
  elif a = 0 and x[ 1 ]*x[ 2 ] > 0 then
    [ 0, 0 ]
  elif a = 1 then
    [ 1, 1 ]
# elif a[ 1 ] >= 0 and `if`( a[ 1 ] = 0, x[ 1 ]*x[ 2 ] >= 0, true ) then
  elif a[ 1 ] >= 0 then
    `Evalr/exp`( `Evalr/multiply`( [ x, `Evalr/ln`( a ) ] ) )
  else
    error "invalid argument in 
  fi
end:

# to get an interval containing the number
`Evalr/shake` := proc( expr )
  local tm, tp:
  if Digits < 20 then Digits := 2*Digits: fi:
  tm := [ op( expr ) ]:
  tp := op( 0, expr ):
  if type( expr, rat_ext ) then
    [ expr, expr ]
  elif type( expr, Or( float, cons ) ) then
    [ rfulb0( expr, 'r', 'l' ), rfulb0( expr, 'r', 'u' ) ]
  elif tp = `+` then
    `Evalr/add`( map( procname, tm ) )
  elif tp = `*` then
    `Evalr/multiply`( map( procname, tm ) )
  elif tp = `^` then
    if type( tm[ 2 ], rational ) then
      `Evalr/power`( procname( tm[ 1 ] ), tm[ 2 ] )
    elif type( tm[ 1 ], rational ) then
      `Evalr/exp`( tm[ 1 ], procname( tm[ 2 ] ) )
    else
      `Evalr/powexp`( procname( tm[ 1 ] ), procname( tm[ 2 ] ) )
    fi
  elif tp in {`sin`,`cos`,`tan`,`cot`,
  `arcsin`,`arccos`,`arctan`,`arccot`,`exp`,`ln`} then
    thismodule[ cat( `Evalr/`,tp ) ]( procname( tm[ 1 ] ) )
  elif type( expr, Or( list, Matrix ) ) then
    map( procname, expr )
  else
    error "invalid argument: 
  fi
end:

# interval arithmetic
Evalr := proc( expr, invls::Or( list, set ) )
  local i, vars, t, tm, tp, tmp:
  option system, remember:
  vars := select( type, indets( expr ), name ):
  if vars = { } then
    if type( expr, consnum ) then
      return `Evalr/shake`( expr )
    elif has( expr, { [ ] } ) then
      return [ ]
    elif type( expr, 'interval' ) then
      return expr
    elif type( expr, list ) and nops( expr ) = 2 and
    type( expr[ 1 ], consnum ) and type( expr[ 2 ], consnum ) then
      return [ `Evalr/shake`( expr[ 1 ] )[ 1 ],
      `Evalr/shake`( expr[ 2 ] )[ 2 ] ]
    elif type( expr, Or( list, Matrix ) ) then
      return map( procname, expr, args[ 2.. -1 ] )
    elif not member( op( 0, expr ),
    {`+`, `*`, `^`, `sin`, `cos`, `tan`, `cot`,
    `arcsin`, `arccos`, `arctan`, `arccot`, `exp`, `ln`} ) then
      error "invalid argument: 
    fi:
  else
    if nargs = 1 then
      return procname( expr, map( t -> t = [ -infinity, infinity ], vars ) )
    elif member( [ ], subs( invls, vars ) ) then
      return [ ]
    elif vars minus { op( map( lhs, invls ) ) } <> { } then
      return procname( expr,
      { op( invls ) } union map( t -> t = [ -infinity, infinity ],
      vars minus { op( map( lhs, invls ) ) } ) )
    elif not type( subs( invls, vars ), set( 'interval' ) ) then
      tmp := [ ]:
      t := [ op( map( t -> t = subs( invls, t ), vars ) ) ]:
      for i to nops( t ) do
        if type( rhs( t[ i ] ), 'interval' ) then
          tmp := [ op( tmp ), t[ i ] ]:
        elif type( rhs( t[ i ] ), list ) and nops( rhs( t[ i ] ) ) = 2 and
        type( rhs( t[ i ] )[ 1 ], consnum ) and
        type( rhs( t[ i ] )[ 2 ], consnum ) then
          tmp := [ op( tmp ), lhs( t[ i ] ) =
          [ `Evalr/shake`( rhs( t[ i ] )[ 1 ] )[ 1 ],
          `Evalr/shake`( rhs( t[ i ] )[ 2 ] )[ 2 ] ] ]:
        elif type( rhs( t[ i ] ), consnum ) then
          tmp := [ op( tmp ), `Evalr/shake`( rhs( t[ i ] ) ) ]:
        else
          error "invalid evaluation range: 
        fi:
      od:
      return procname( expr, tmp )
    elif type( expr, name ) then
      return subs( invls, expr )
    fi:
  fi:

  tm := [ op( expr ) ]:
  tp := op( 0, expr ):
  if tp = `+` then
    `Evalr/add`( map( procname, tm, args[ 2..-1 ] ) )
  elif tp = `*` then
    `Evalr/multiply`( map( procname, tm, args[ 2..-1 ] ) )
  elif tp = `^` then
    if type( tm[ 2 ], rational ) then
      `Evalr/power`( procname( tm[ 1 ], args[ 2..-1 ] ), tm[ 2 ] )
    elif type( tm[ 1 ], rational ) then
      `Evalr/exp`( tm[ 1 ], procname( tm[ 2 ], args[ 2..-1 ] ) )
    else
      `Evalr/powexp`( procname( tm[ 1 ], args[ 2..-1 ] ),
      procname( tm[ 2 ], args[ 2..-1 ] ) )
    fi
  elif tp in {`sin`, `cos`, `tan`, `cot`,
  `arcsin`, `arccos`, `arctan`, `arccot`, `exp`, `ln`} then
    thismodule[ cat( `Evalr/`,tp ) ]( procname( tm[ 1 ], args[ 2..-1 ] ) )
  elif type( expr, Or( list, Matrix ) ) then
    map( procname, expr, args[ 2.. -1 ] )
  else
    error "invalid argument: 
  fi
end:

##############################################################################
# Auxiliary procedures
##############################################################################

# list sorting
sortpos := proc( ls::list )
  local v, i, j, n, tmp:
  n := nops( ls ):
  v := [ seq( [ ls[ i ], i ], i = 1..n ) ]:
  for i from 2 to n do
    if ( v[ i ][ 1 ] ) < ( v[ i-1 ][ 1 ] ) then
      tmp := v[ i ]:
      v[ i ] := v[ i-1 ]:
      for j from i-2 by -1 to 1 while ( tmp[ 1 ] ) < ( v[ j ][ 1 ] ) do
        v[ j+1 ] := v[ j ]:
      od:
    if j = 1 and ( tmp[ 1 ] ) < ( v[ j ][ 1 ] ) then v[ j ] := tmp:
    else v[ j+1 ] := tmp:
    fi:
    fi:
  od:
  return map2( op, 2, v ):
end:

# The "isdef" procedure determines whether or not a interval matrix is
# positive definite
# input: interval matrix and options, can receive 1 to 3 parameters
# interval matrix can have following forms:
## 1.( [ a, b ] )_{nn}, one parameter
## 2.[ A, B ], two parameters
# option query=posdef/negdef:
## determine positive (default)/negative definiteness.
# option query=possemidef/negsemidef:
## determine positive /negative semidefiniteness.
# option query=nonposdef/nonnegdef:
## determine nonpositive/nonnegative definiteness.
# option query=nonpossemidef/nonnegsemidef:
## determine nonpositive/nonnegative semidefiniteness.
# option query=nondef:
## determine nondefiniteness.
# option method=vertex/eigenvalue:
## use different methods to determine positive /negative definiteness:
## the method using vertex matrices(a sufficient and necessary condition,
## the default method), and the method using eigenvalues(a sufficient condition)
## in which case the "false" returned just indicates an unsuccessful test.
# output: true/false
# Note: in order to reduce the time of datatype checking,
# the entries in matrices should have the same datetype.
isdef := proc()
  local A, B, n, _query, _method, i, j, k, q, S, v, Ac, Ad, Ace, Ade, temp, x:
  uses LinearAlgebra:
  A := 0: B := 0: n := 0: _query := 0: _method := 0:
  for i to nargs do
    if type( args[ i ], 'Matrix'( square ) ) then

      # more accurate form should be:
      # if type( args[ i ], 'Matrix'( interval ) ) then
      if type( args[ i ][ 1, 1 ], 'interval' ) then
        A := map2( op, 1, args[ i ] ):
        B := map2( op, 2, args[ i ] ):
        n := RowDimension( A ):

      # more accurate form should be:
      # elif type( args[ i ], 'Matrix'( constant ) ) then
      # and comparison should be contained.
      elif type( args[ i ][ 1, 1 ], rat_ext ) then
        if A = 0 then
          A := args[ i ]:
          n := RowDimension( A ):
        else
          if RowDimension( args[ i ] ) <> n then
            error "different matrix dimension: 
          fi:
          B := args[ i ]:
        fi:
      else error "error matrix: 
      fi:
    elif type( args[ i ],
    identical( query ) =
    { identical( posdef ), identical( negdef ),
    identical( possemidef ), identical( negsemidef ),
    identical( nonposdef ), identical( nonnegdef ),
    identical( nonpossemidef ), identical( nonnegsemidef ),
    identical( nondef ) } ) then
      _query := rhs( args[ i ] ):
    elif type( args[ i ], identical( method ) =
    { identical( vertex ),identical( eigenvalue ) } ) then
      _method := rhs( args[ i ] ):
    else error "invalid argument: 
    fi:
  od:

  if A = 0 or B = 0 or n = 0 then error "invalid arguments: 
  if has( A, infinity ) or has( B, infinity ) then return false fi:
  if _query = 0 then _query := posdef fi:
  if member( _query,
  { 'posdef', 'negdef', 'possemidef', 'negsemidef' } ) and
  _method = 0 then
    _method := vertex
  fi:
  if member( _query,
  { nonposdef, nonnegdef, nonpossemidef, nonnegsemidef, nondef } ) and
  _method = 'vertex' then
    error "the method specified cannot be applied"
  fi:

  if _method = 'vertex' then
    q := 'positive_definite';
    if _query = 'negdef' or _query = 'negsemidef' then
      temp := A: A := -B: B := -temp:
    fi:
    if _query = 'possemidef' or _query = 'negsemidef' then
      q := 'positive_semidefinite';
    fi;
    for k from 0 to 2^( n-1 )-1 do
      v := Vector( n, convert( k, base, 2 ) ):
      S := Matrix( n, n, ( i, j ) ->
      ( ( A[ i, j ]+B[ i, j ] )+
      ( -1 )^( v[ i ]+v[ j ] )*( A[ i, j ]-B[ i, j ] ) )/2,
      shape = symmetric ):
      if LA_Main:-IsDefinite( S, 'query' = q ) = false then
        return false
      fi:
    od:
    return true
  else
    Ac := Matrix( n, n, ( i, j ) -> ( A[ i, j ]+B[ i, j ] )/2,
    shape = symmetric ):
    Ad := Matrix( n, n, ( i, j ) -> ( -A[ i, j ]+B[ i, j ] )/2,
    shape = symmetric ):
    Ace := realroot( CharacteristicPolynomial( Ac, x ), 10^( -Digits ) ):
    Ade := realroot( CharacteristicPolynomial( Ad, x ), 10^( -Digits ) ):
    Ade := max( op( map( abs, map( op, Ade ) ) ) ):
    if _query = 'posdef' and
    min( op( map2( op, 1, Ace ) ) )-Ade > 0 then
      return true
    elif _query = 'negdef' and
    max( op( map2( op, 2, Ace ) ) )+Ade < 0 then
      return true
    elif _query = 'possemidef' and
    min( op( map2( op, 1, Ace ) ) )-Ade >= 0 then
      return true
    elif _query = 'negsemidef' and
    max( op( map2( op, 2, Ace ) ) )+Ade <= 0 then
      return true
    elif _query = 'nonposdef' and
    min( op( map2( op, 1, Ace ) ) )+Ade <= 0 then
      return true
    elif _query = 'nonnegdef' and
    max( op( map2( op, 2, Ace ) ) )-Ade >= 0 then
      return true
    elif _query = 'nonpossemidef' and
    min( op( map2( op, 1, Ace ) ) )+Ade < 0 then
      return true
    elif _query = 'nonnegsemidef' and
    max( op( map2( op, 2, Ace ) ) )-Ade > 0 then
      return true
    elif min( op( map2( op, 1, Ace ) ) )+Ade < 0 and
    max( op( map2( op, 2, Ace ) ) )-Ade > 0 then
      return true
    else
      return false
    fi:
  fi:
end:

# get vertexes of a domain
vert := proc( va::list )
  local n,i,j,k,vt:
  n := nops( va ):
  vt := []:
  if type( va[ 1 ], `=` ) then
    for i from 0 to 2^n-1 do
      j := Vector( n, convert( i, base, 2 ) ):
      vt := [ op( vt ), [ seq( lhs( va[ k ] )
      =op( j[ k ]+1, rhs( va[ k ] ) ), k=1..n ) ] ]:
    od:
  elif type( va[ 1 ], 'interval' ) then
    for i from 0 to 2^n-1 do
      j := Vector( n, convert( i, base, 2 ) ):
      vt := [ op( vt ), [ seq( op( j[ k ]+1, va[ k ] ), k=1..n ) ] ]:
    od:
  fi:
  return vt:
end:

# get the convex hull of a set of intervals
inthull := proc( invls::list(interval) )
  if nops( invls ) = 1 then
    invls[ 1 ]
  else
    [ op( sortpos( map2( op, 1, invls ) )[ 1 ], invls )[ 1 ],
    op( sortpos( map2( op, 2, invls ) )[ -1 ], invls )[ 2 ] ]
  fi
end:

# get the width of an interval
intwidth := proc( inv )
  if type( inv, 'interval' ) then
    inv[ 2 ] - inv[ 1 ]
  elif type( inv, rational ) then
    0
  elif type( inv, `=` ) and type( rhs( inv ), 'interval' ) then
    rhs( inv )[ 2 ] - rhs( inv )[ 1 ]
  else
    error "invalid argument: 
  fi
end:#

# get the position of the interval with the maximal width in
# a list of intervals
maxwidthdim := proc( invls::list )
  op( -1, sortpos( map( intwidth, invls ) ) )
end:

# subdivides a domain over an interval or all intervals
intsbdv := proc( invls::list, sbdv::integer )
  local i, j, dim, s, t, n, cnt, subint:
  n := nops( invls ):
  if sbdv = 1 or has( invls, infinity ) then return [ invls ] fi:
  dim := 0:
  if nargs > 2 and type( args[ 3 ], integer ) and args[ 3 ] <= n then
    dim := args[ 3 ]
  elif nargs > 2 then
    error "invalid argument: 
  fi:
  if dim <> 0 then
    if type( invls[ 1 ], `=` ) then
      return [ seq( [ op( invls[ 1..dim-1 ] ), lhs( invls[ dim ] )=
             [ rhs( invls[ dim ] )[ 1 ] +
             i/sbdv*intwidth( rhs( invls[ dim ] ) ),
             rhs( invls[ dim ] )[ 1 ] +
             ( i+1 )/sbdv*intwidth( rhs( invls[ dim ] ) ) ],
             op( invls[ dim+1..n ] ) ], i = 0..sbdv-1 ) ]:
    elif type( invls[ 1 ], 'interval' ) then
      return [ seq( [ op( invls[ 1..dim-1 ] ),
             [ invls[ dim, 1 ] + i/sbdv*intwidth( invls[ dim ] ),
             invls[ dim, 1 ] + ( i+1 )/sbdv*intwidth( invls[ dim ] ) ],
             op( invls[ dim+1..n ] ) ], i = 0..sbdv-1 ) ]:
    fi:
  fi:
  cnt := sbdv^n-1:
  subint := []:
  if type( invls[ 1 ], `=` ) then
    for i from 0 to cnt do
      j := Vector[ row ]( n,convert( i,base,sbdv ) ):
      subint := [ op( subint ),
              [ seq( lhs( invls[ s ] ) =
              [ rhs( invls[ s ] )[ 1 ]+
              j[ s ]/sbdv*( rhs( invls[ s ] )[ 2 ]-rhs( invls[ s ] )[ 1 ] ),
              rhs( invls[ s ] )[ 1 ]+
              ( j[ s ]+1 )/sbdv*( rhs( invls[ s ] )[ 2 ]-
              rhs( invls[ s ] )[ 1 ] ) ],
              s=1..n ) ] ]:
    od
  elif type( invls[ 1 ], 'interval' ) then
    for i from 0 to cnt do
      j := Vector[ row ]( n, convert( i, base, sbdv ) ):
      subint := [ op( subint ),
              [ seq(
              [ invls[ s ][ 1 ]+
              j[ s ]/sbdv*( invls[ s ][ 2 ]-invls[ s ][ 1 ] ),
              invls[ s ][ 1 ]+
              ( j[ s ]+1 )/sbdv*( invls[ s ][ 2 ]-invls[ s ][ 1 ] ) ],
              s=1..n ) ] ]:
    od
  fi:
  return subint
end:

# test whether a rational number or an interval is contained in an interval
contain := proc( inv::Or( name = interval, interval ),
p::Or( name = Or( rational, interval ), rational, interval ) )
  if type( inv, name = 'interval' ) then
    procname( rhs( inv ), p )
  elif type( p, name = Or( rational, 'interval' ) ) then
    procname( inv, rhs( p ) )
  elif type( p, rational ) then
    evalb( inv[ 1 ] <= p and p <= inv[ 2 ] )
  elif type( p, 'interval' ) then
    evalb( inv[ 1 ] <= p[ 1 ] and p[ 1 ] <= inv[ 2 ] and
    inv[ 1 ] <= p[ 2 ] and p[ 2 ] <= inv[ 2 ] )
  else
    error "invalid arguments: 
  fi
end:

end module:
\end{maplettyout}
\clearpage

\subsection{Module fivepoints} \label{fivepoints}
\begin{maplettyout}
# fivepoints: a Maple package used for solving the problem of spherical
# distribution of 5 points.
# $revision: 1.0$
fivepoints := module()
  export spf, cmb, changecoord, eucdis, grad, normalvector,
    ischecked, spchecked,
    howchecked, inwhichdomain, subdivisiondomain, subdivisionprocess:
  local init:
  option package, load = init:

init := proc( )
local p, q, i, j, k, m, nv, temp, x, y, A, B, C, D, E, py, pyd, ls:
global truncatenegativepart,
  phi, theta, indet, Vt, Vtd, f, F, Fls, Df, H,
  Thetabp, fmax, fmaxrat, pyang, Thetap, pyfmax, xi, chi, eta, zeta, bpyva, pyva,
  varsrange1, varsrange2, ckrangels1, ckrangels2,
  gr, grcv, signdis, Methods:
uses IntervalArithmetic:

# Negative parts of base intervals in interval power arithmetic are truncated.
# e.g. Evalr( op( 6, f ), ckrangels1[ 1 ] );
truncatenegativepart := true:

if FileTools[ Exists ]( "fivepoints.data" ) = false then

# spherical coordinates of five points
Vt := [ [ 1, 0, 0 ], [ 1, phi[ 1 ], Pi ], [ 1, phi[ 2 ], theta[ 2 ] ],
[ 1, phi[ 3 ], theta[ 3 ] ], [ 1, phi[ 4 ], theta[ 4 ] ] ]:

# Cartesian coordinates corresponding the spherical coordinates
Vtd := map( changecoord, Vt, s2d ):

# indeterminates
indet := [ phi[ 1 ], op( map( op, [ seq( [ phi[ i ], theta[ i ] ],
i = 2..4 ) ] ) ) ]:

# sum of mutual distances
f := add( add( sqrt( ( cos( Vt[ i, 2 ] )*cos( Vt[ i, 3 ] )-
cos( Vt[ j, 2 ] )*cos( Vt[ j, 3 ] ) )^2+
( cos( Vt[ i, 2 ] )*sin( Vt[ i, 3 ] )-
cos( Vt[ j, 2 ] )*sin( Vt[ j, 3 ] ) )^2+
( sin( Vt[ i, 2 ] )-sin( Vt[ j, 2 ] ) )^2 ), j = i+1..5 ), i = 1..4 ):
f := spf( simplify( f ) ):

Fls := [ seq( op( i, f ), i = 1..10 ) ]:
p := [ A, B, C, D, E ]:
q := [ seq( seq( [ i, j ], j = i+1..5 ), i = 1..4 ) ]:
# ten distances
F := table( [ seq( cat( p[ q[ i ][ 1 ] ], p[ q[ i ][ 2 ] ] ) = Fls[ i ],
i = 1..10 ) ] ):

# list of derivatives of f
Df := [ seq( diff( f, indet[ i ] ), i = 1..7 ) ]:
# Hessian matrix
H := Matrix( 7, 7, ( i, j ) -> ( diff( f, indet[ i ], indet[ j ] ) ) ):

# coordinate corresponding the bipyramid distribution of 5 points
Thetabp := [ phi[ 1 ] = -1/3*Pi, phi[ 2 ] = 1/3*Pi, theta[ 2 ] = Pi,
phi[ 3 ] = 0, theta[ 3 ] = -1/2*Pi, phi[ 4 ] = 0, theta[ 4 ] = 1/2*Pi ]:
# distance sum corresponding the bipyramid distribution of 5 points
fmax := simplify( subs( Thetabp, f ) ): fmaxrat := rfulb( fmax, 'r', 'l' ):

py := ( 4+4*sqrt( 2 ) )*sqrt( 1-tt^2 )+4*sqrt( 2 )*sqrt( 1+tt ):
pyd := diff( py, tt ):
# angle between the line connecting the spherical center and
# the vertex of the pyramid bottom, and the pyramid bottom
pyang := arcsin( solve( pyd, tt ) ):
# coordinate corresponding the pyramid distribution of 5 points
Thetap := [ phi[ 1 ] = -2*pyang, phi[ 2 ] = Pi/2-pyang, theta[ 2 ] = Pi,
phi[ 3 ] = -arcsin( sin( pyang )*cos( pyang ) ),
theta[ 3 ] = -arccot( sin( pyang )^2/cos( pyang ) ),
phi[ 4 ] = -arcsin( sin( pyang )*cos( pyang ) ),
theta[ 4 ] = arccot( sin( pyang )^2/cos( pyang ) ) ]:
# distance sum corresponding the pyramid distribution of 5 points
pyfmax := subs( Thetap, f ):

# radius of the domain first excluded near coordinates corresponding
# the bipyramid distribution
xi := 1/377*Pi:
# radius of the domain first excluded near coordinates corresponding
# the pyramid distribution
chi := 1/791*Pi:
# radius of the domain excluded near coordinates corresponding
# the bipyramid distribution
eta := xi:
# radius of the domain excluded near coordinates corresponding
# the pyramid distribution
zeta := chi:
bpyva := [ seq( indet[ i ] = Evalr( rhs( Thetabp[ i ] ) + [ -eta, eta ] ),
i = 1..7 ) ]:
pyva := [ seq( indet[ i ] = Evalr( rhs( Thetap[ i ] ) + [ -zeta, zeta ] ),
i = 1..7 ) ]:

# one domain need to be checked
varsrange1 := [ phi[ 1 ] = [ -2*arccos( ( fmax-2 )/9/2 ), 0 ],
phi[ 2 ] = [ 0, 1/2*Pi ], theta[ 2 ] = [ 0, Pi ],
phi[ 3 ] = [ -1/2*Pi, 1/2*Pi ], theta[ 3 ] = [ -Pi, 0 ],
phi[ 4 ] = [ -1/2*Pi, 1/2*Pi ], theta[ 4 ] = [ 0, Pi ] ]:
varsrange1 := map( x -> lhs( x ) = Evalr( rhs( x ) ), varsrange1 ):
# another domain need to be checked
varsrange2 := [ phi[ 1 ] = [ -2*arccos( ( fmax-2 )/9/2 ), 0 ],
phi[ 2 ] = [ -1/2*Pi, 0 ], theta[ 2 ] = [ 0, Pi ],
phi[ 3 ] = [ 0, 1/2*Pi ], theta[ 3 ] = [ -Pi, 0 ],
phi[ 4 ] = [ -1/2*Pi, 0 ], theta[ 4 ] = [ 0, Pi ] ]:
varsrange2 := map( x -> lhs( x ) = Evalr( rhs( x ) ), varsrange2 ):

k := 3:
# subdivide the first domain
ckrangels1 := intsbdv( varsrange1, k ):
# subdivide the second domain
ckrangels2 := intsbdv( varsrange2, k ):

# used to test whether or not 5 points are on the same half sphere
signdis := []:
for i from 1 to 4 do
  for j from `if`( i = 1, 3, i+1 ) to 5 do
    temp := []:
    nv := normalvector( [ Vtd[ i ], Vtd[ j ] ] ):
    for k from 1 to 5 do
      if k = i or k = j then next: fi:
      temp := [ op( temp ), add( nv[ m ]*Vtd[ k ][ m ], m = 1..3 ) ]:
    od:
    signdis := [ op( signdis ), temp ]:
  od:
od:

# gradients at points
gr := [ seq( grad( i ), i = 1..5 ) ]:
grcv := [ -gr[ 1, 2 ], -gr[ 1, 3 ], -gr[ 2, 2 ],
gr[ 2, 1 ]*Vtd[ 2, 3 ]-gr[ 2, 3 ]*Vtd[ 2, 1 ],
seq( gr[ i, 1 ]*Vtd[ i, 2 ]-gr[ i, 2 ]*Vtd[ i, 1 ], i = 3..5 ),
seq( gr[ i, 1 ]*Vtd[ i, 3 ]-gr[ i, 3 ]*Vtd[ i, 1 ], i = 3..5 ) ]:
# a condition used to test whether there exists a maximal distribution
# in a domain
grcv := spf( cmb( expand( grcv ) ) ):

# method names used to test
Methods := [ defver,defeig, nondef, gradient, mindis, secondlength,
  dis, derivative, secordder, pointc, halfsphere ]:

# save values of these variables in the file "fivepoints.data"
# if it does not exist
save indet, Vt, Vtd, f, F, Fls, Df, H,
  Thetabp, fmax, fmaxrat, pyang, Thetap, pyfmax,
  xi, chi, eta, zeta, bpyva, pyva,
  varsrange1, varsrange2, ckrangels1, ckrangels2,
  gr, grcv, signdis, Methods, "fivepoints.data":

else
# read values of values from the file "fivepoints.data"
  read "fivepoints.data":

fi:

# generate the process of subdisivion
subdivisionprocess():

end:

# simplify expressions so as to raise the efficiency of interval computation
spf  :=  proc( expr )
  local x:
  if type( expr,  `+` ) and hastype( expr,  radical ) then
    `+`( op( procname( [ op( expr ) ] ) ) )
  elif type( expr,  `+` ) then
    map( combine,
    factor( `+`( op( select( x -> nops( x ) = 5, [ op( expr ) ] ) ) ) ) )+
    `+`( op( select( x -> nops( x ) <> 5, [ op( expr ) ] ) ) )
  elif type( expr,  `*` ) and hastype( expr,  radical ) then
    `*`( op( procname( [ op( expr ) ] ) ) )
  elif type( expr,  Or( `*`,  trig,  arctrig,  constant ) ) then
    expr
  elif type( expr,  `^` ) then
    `^`( procname( op( 1,  expr ) ), op( 2, expr ) )
  elif type( expr,  list ) then
    map( procname,  expr )
  else
    error "unrecognized expression type: 
  fi
end:

# combine terms in denominators
cmb := proc( expr )
  local tm, tmp, x, i:
  if type( expr, list ) then return map( procname, expr )
  elif type( expr, Not( `+` ) ) then return expr
  fi:
  tm := [ op( expr )]: tmp := 0:
  for i to nops( tm ) do
    if tmp = 0 then
      tmp := [ tm[ i ] ]
    else
      tmp := selectremove( x -> denom( x ) = denom( tm[ i ] ), tmp ):
      if tmp[ 1 ] = [] then
        tmp := [ op( tmp[ 2 ] ), tm[ i ] ]
      else
        tmp := [ op( tmp[ 2 ] ),
        ( numer( op( tmp[ 1 ] ) ) + numer( tm[ i ] ) )/denom( tm[i] ) ]
      fi
    fi
  od:
  convert( tmp, `+` )
end:

# convert from one coordinate system to another
# s2d: from spherical system to Cartesian system
# d2s: from Cartesian system to spherical system
# Cartesian system: [ x1, x2, x3 ]
# spherical system: [ r, phi, theta ]
# ( r >= 0, -Pi/2 =< phi <= Pi/2, -Pi < theta <= Pi )
changecoord := proc( L::list, opt )
  uses IntervalArithmetic:
  if nargs = 1 then return changecoord( L, s2d ): fi:
  if type( opt, identical( s2d ) ) then
    return [ L[ 1 ]*cos( L[ 2 ] )*cos( L[ 3 ] ),
    L[ 1 ]*cos( L[ 2 ] )*sin( L[ 3 ] ), L[ 1 ]*sin( L[ 2 ] ) ]:
  elif type( opt, identical( d2s ) ) then
    if L[ 1 ] = 0 and L[ 2 ] = 0 and L[ 3 ] > 0 then
      return [ L[ 3 ], Pi/2, 0 ]:
    elif L[ 1 ] = 0 and L[ 2 ] = 0 and L[ 3 ] < 0 then
      return [ -L[ 3 ], -Pi/2, 0 ]:
    elif L[ 1 ] = 0 and L[ 2 ] = 0 then
      return [ 0, 0, 0 ]:
    elif L[ 1 ] > 0 and L[ 2 ] = 0 and L[ 3 ] = 0 then
      return [ L[ 1 ], 0, 0 ]:
    elif L[ 1 ] < 0 and L[ 2 ] = 0 and L[ 3 ] = 0 then
      return [ L[ 1 ], 0, Pi ]:
    elif L[ 1 ] > 0 then
      return [ sqrt( L[ 1 ]^2+L[ 2 ]^2+L[ 3 ]^2 ),
      arctan( L[ 3 ]/sqrt( L[ 1 ]^2+L[ 2 ]^2 ) ),
      arctan( L[ 2 ]/L[ 1 ] ) ]:
    elif L[ 1 ] < 0 and L[ 2 ] >= 0 then
      return [ sqrt( L[ 1 ]^2+L[ 2 ]^2+L[ 3 ]^2 ),
      arctan( L[ 3 ]/sqrt( L[ 1 ]^2+L[ 2 ]^2 ) ),
      arctan( L[ 2 ]/L[ 1 ] )+Pi ]:
    elif L[ 1 ] < 0 and L[ 2 ] < 0 then
      return [ sqrt( L[ 1 ]^2+L[ 2 ]^2+L[ 3 ]^2 ),
      arctan( L[ 3 ]/sqrt( L[ 1 ]^2+L[ 2 ]^2 ) ),
      arctan( L[ 2 ]/L[ 1 ] )-Pi ]:
    fi:
  fi:
end:

# Euclidean distances
eucdis := proc()
  local i:
  if nargs = 3 then
    if type( args[ 3 ], identical( sphere ) ) then
      procname( changecoord( args[ 1 ] ), changecoord( args[ 2 ] ) )
    elif type( args[ 3 ], identical( descartes ) ) then
      procname( args[ 1 ], args[ 2 ] )
    else
      error "invalid argument: 
    fi:
  elif nargs = 2 and type( args[ 1 ], list ) and type( args[ 2 ], list ) then
    sqrt( sum( ( args[ 1 ][ i ]-args[ 2 ][ i ] )^2, i = 1..3 ) )
  elif nargs = 2 and type( args[ 1 ], list ) and
  type( args[ 2 ], identical( sphere ) ) then
    eucdis( changecoord( args[ 1 ] ) )
  elif ( nargs = 2 and type( args[ 1 ], list ) and
  type( args[ 2 ], identical( descartes ) ) ) or nargs = 1 then
    sqrt( sum( ( args[ 1 ][ i ] )^2, i = 1..3 ) )
  else
    error "invalid argument: 
  fi:
end:

# gradients
grad := proc( n::integer )
  local i:
  global Vtd:
  add( map( `*`, Vtd[ n ]-Vtd[ i ],
  1/simplify( eucdis( Vtd[ n ], Vtd[ i ] ) ) ),
  i in [ $1..( n-1 ), $( n+1 )..5 ] ):
end:

# normal vector of the plane determined by two points and
# the spherical center
normalvector := proc( L::listlist )
  [ L[ 1, 2 ]*L[ 2, 3 ]-L[ 2, 2 ]*L[ 1, 3 ],
  L[ 1, 3 ]*L[ 2, 1 ]-L[ 2, 3 ]*L[ 1, 1 ],
  L[ 1, 1 ]*L[ 2, 2 ]-L[ 1, 2 ]*L[ 2, 1 ] ]
end:

# use various methods to check a rectangular domain
ischecked := proc( invls )
  local _method, i, j, intm, x, y, z, df, iv, tmp:
  global H, Fls, gr, f, fmaxrat, Df, Digits, Vtd, grcv, signdis:
  uses IntervalArithmetic:
  _method := 0: intm := 0:
  if nargs > 1 then
    for i from 2 to nargs do
      if type( args[ i ], identical( method ) ={

      # use vertex matrices to determine the negative definiteness of
      # the interval Hessian matrix
      identical( defver ),

      # use eigenvalues to determine the negative definiteness of
      # the interval Hessian matrix
      identical( defeig ),

      # use eigenvalues to determine the nonnegative definiteness of
      # the interval Hessian matrix
      identical( nondef ),

      # use gradients to determine there exists no maximum in the domain
      identical( gradient ),

      # check that the distance of some two points is less then
      # a certain value
      identical( mindis ),

      # check that the distance of A and B is not the second longest
      identical( secondlength ),

      # use interval computation to show directly that
      # the upper bound of f in the domain is less then fmax
      identical( totaldis ),

      # use interval computation to show directly that
      # one of the derivatives doesnot change siges in the domain
      identical( derivative ),

      # analyze second order derivatives to determine that
      # one of the derivatives doesnot change siges in the domain
      identical( secordder ),

      # check that C is below E, which contradicts assumptions
      identical( pointc ),

      # check that all points are on the same half sphere
      # in which case f couldnot obtain its maximum
      identical( halfsphere )
      } )
      then
        _method := rhs( args[ i ] ):
      elif type( args[ i ], 'Matrix'( square ) ) then
        intm := args[ i ]:
      else error "unrecognized method:
      fi:
    od:
  fi:

  if _method = 0 then _method := totaldis: fi:
  if member( _method, {defver, defeig, nondef} ) then
    if intm = 0 then
      intm := Matrix( 7, 7, ( i, j ) -> Evalr( H[ i, j ], invls ),
      shape = symmetric ):
    fi:
    if _method = defver then
      isdef( intm, method = vertex, query = negdef )
    elif _method = defeig then
      isdef( intm, method = eigenvalue, query = negdef )
    else
      isdef( intm, query = nonnegsemidef )
    fi:
  elif _method = gradient then
    for i to nops( grcv ) do
      x := Evalr( grcv[ i ], invls ):
      if x[ 1 ] > 0 or x[ 2 ] < 0 then
        return true:
      fi:
    od:
    return false:
  elif _method = mindis then
    for i to nops( Fls ) do
      if Evalr( Fls[ i ], invls )[ 2 ] < ( 2/15 ) then return true: fi:
    od:
    return false:
  elif _method = secondlength then
    x := 0:
    y := Evalr( Fls[ 1 ], invls ):
    z := []:
    for i from 2 to nops( Fls ) do
      z := [ op( z ), Evalr( Fls[ i ], invls ) ]:
      if z[ -1 ][ 1 ] > y[ 2 ] then
        if x = 0 then x := 1:
        else return true:
        fi:
      fi:
    od:
    if member( false, map( tmp -> evalb( tmp[ 2 ] < y[ 1 ] ), z ) ) = false
    then
      return true:
    else
      return false:
    fi:
  elif _method = totaldis then
    return evalb( Evalr( f, invls )[ 2 ] < fmaxrat ):
  elif _method = derivative then
    for i to 7 do
      df := Evalr( Df[ i ], invls ):
      if df[ 1 ] > 0 or df[ 2 ] < 0 then return true: fi:
    od:
    return false:
  elif _method = secordder then
    if intm = 0 then
      intm := Matrix( 7, 7, ( i, j ) -> Evalr( H[ i, j ], invls ),
      shape = symmetric ):
    fi:
    df := []:
    for i to 7 do
      iv := []:
      for j to 7 do
        if intm[ i, j ][ 1 ] > 0 then
          iv := [ op( iv ),
          [ lhs( invls[ j ] ) = rhs( invls[ j ] )[ 1 ],
          lhs( invls[ j ] ) = rhs( invls[ j ] )[ 2 ] ] ]:
        elif intm[ i, j ][ 2 ] < 0 then
          iv := [ op( iv ),
          [ lhs( invls[ j ] ) = rhs( invls[ j ] )[ 2 ],
          lhs( invls[ j ] ) = rhs( invls[ j ] )[ 1 ] ] ]:
        fi:
        iv := [ op( iv ), [ invls[ j ], invls[ j ] ] ]:
      od:
      x := Evalr( Df[ i ], map2( op, 1, iv ) ):
      y := Evalr( Df[ i ], map2( op, 2, iv ) ):
      if type( x, constant ) then :
      else x := x[ 1 ]:
      fi:
      if type( y, constant ) then :
      else y := y[ 2 ]:
      fi:
      df := [ op( df ), [ x, y ] ]:
      if df[ -1 ][ 1 ] > 0 or df[ -1 ][ 2 ] < 0 then return true: fi:
    od:
    return false:
  elif _method = pointc then
    if subs( invls, phi[ 2 ] )[ 2 ] - subs( invls, phi[ 4 ] )[ 1 ] < 0 then
      return true:
    else
      return false:
    fi:
  elif _method = halfsphere then
    for i to nops( signdis ) do
      x := 0: tmp := true:
      for j to 3 do
        y := Evalr( signdis[ i, j ], invls ):
        if type( y, constant ) then
          if y > 0 then y := 1:
          elif y < 0 then y := -1:
          else y := 0:
          fi:
        else
          if y[ 1 ] > 0 then y := 1:
          elif y[ 2 ] < 0 then y := -1:
          else y := 0:
          fi:
        fi:
        if y = 0 then tmp := false: break:
        elif y <> 0 and x = 0 then x := y:
        elif y <> 0 and x <> y then tmp := false: break:
        fi:
      od:
      if tmp then return true: fi:
    od:
    return false:
  fi:
end:

# use specified methods to check a rectangular domain,
# if not successful, the domain is subdivided,
# and then check subdomains recursively,
# until maximal width of subdomains is less than some value.
spchecked := proc( va::list )
  local x, y, methods, dim, subint, i, cur, bpyck, pyck, tmp:
  # "notchecked" and "checkprocess" should be set to [] before the procedure
  # is first called, since they store domains cannot be checked successfully,
  # and the process of checking.
  global bpyva, pyva, notchecked, checkprocess:
  uses IntervalArithmetic:

  methods := [ totaldis ]:
  # The case when 5 points form a distribution close to a bipyramid or
  # a pyramid should be checked separately.
  bpyck := true: pyck := true:
  if nargs > 1 then
    for i from 2 to nargs do
      if type( args[ i ], identical( notcheckbipyramid ) ) then
        bpyck := false:
      elif type( args[ i ], identical( notcheckpyramid ) ) then
        pyck := false:
      elif type( args[ i ], list ) then
        methods := args[ i ]:
      else
        error "invalid argument: 
      fi:
    od:
  fi:

  if bpyck then
    for i to 7 do
      if not contain( rhs( bpyva[ i ] ), rhs( va[ i ] ) ) then break fi
    od:
    if i > 7 then
      checkprocess := [ op( checkprocess ), [ -1 ] ]:
      return true:
    fi:
    for i to 7 do
      if rhs( va[ i ] )[ 2 ] < rhs( bpyva[ i ] )[ 1 ]
      or rhs( va[ i ] )[ 1 ] > rhs( bpyva[ i ] )[ 2 ] then
        break
      fi
    od:
    if i <= 7 then
      bpyck := false
    fi:
  fi:

  if pyck then
    for i to 7 do
      if not contain( rhs( pyva[ i ] ), rhs( va[ i ] ) ) then break fi
    od:
    if i > 7 then
      checkprocess := [ op( checkprocess ), [ 0 ] ]:
      return true:
    fi:
    for i to 7 do
      if rhs( va[ i ] )[ 2 ] < rhs( pyva[ i ] )[ 1 ]
      or rhs( va[ i ] )[ 1 ] > rhs( pyva[ i ] )[ 2 ] then
        break
      fi
    od:
    if i <= 7 then
      pyck := false
    fi:
  fi:

  for i from 1 to nops( methods ) do
    # check the domain through specific in turn,
    # the order arranged for the methods may influence efficiencies
    if ischecked( va, method = methods[ i ] ) then
      checkprocess := [ op( checkprocess ), [ i ] ]:
      return true:
    fi:
  od:

  # subdivide the domain over the widest interval
  dim := maxwidthdim( va ):
  # when the critical value is set to 1/1000,
  # domain1_1105_1101_1100_1099 cannot be check successfully.
  if intwidth( va[ dim ] ) < ( 1/10000 ) then
    notchecked := [ op( notchecked ), va ]:
    checkprocess := [ op( checkprocess ), [ -4 ] ]:
    return false:
  fi:
  # record the position of subdivision
  checkprocess := [ op( checkprocess ), dim ]:
  subint := intsbdv( va, 2, dim ):
  cur := true:
  for i from 1 to nops( subint ) do
    # check subdomains divided
    if procname( subint[ i ], methods,
    `if`( bpyck, NULL, notcheckbipyramid ),
    `if`( pyck, NULL, notcheckpyramid ) ) = false then
      cur := false:
    fi:
  od:
  checkprocess := [ op( checkprocess ), [ `if`( cur, -2, -3 ) ] ]:
  return cur:
end:

# generate a list which indicates the process of checking
subdivisionprocess := proc( )
  local t, t1, x, y:
  global sp:
  t := [ $1..2187 ]:
  t1 := x -> op( map( y -> y = t, x ) ):
  sp := t:
  sp := subsop( t1( [ 62, 158, 239, 863, 1102, 1105, 1106, 2114, 2132] ),
  sp ):
  sp := subsop( t1( [ [ 1105,1101 ], [ 1106,861 ], [ 1106,834 ],
  [ 1106,1099 ], [1106,1100 ] ] ), sp ):
  sp := subsop( t1( [ [ 1105, 1101, 1100], [1106, 834, 725],
  [1106, 834, 726]]), sp ):
  sp := subsop( t1( [ [ 1106,834,725,1752 ], [ 1106,834,726,1507 ],
  [ 1106, 834, 726, 1750 ] ] ), sp ):
end:

# find the subdomain subdivided in the checking process
# corresponding to an integer or a list integer in varsrange1
subdivisiondomain := proc( dmcur::Or( integer, list( integer ) ) )
  local d, indm2, i, t, v:
  uses IntervalArithmetic:
  d := [ ]:
  indm2 := false:
  for i from 2 to nargs do
    if type( args[ i ], list ) then
      d := args[ i ]
    elif type( args[ i ], identical(dm2) ) then
      indm2 := true
    else
      error "invalid argument: 
    fi
  od:
  if d = [ ] then
    t := `if`(indm2, varsrange2, varsrange1 ):
    d := map( rhs, t ):
   fi:

  if type( dmcur, integer ) then
    v := Vector( 7, convert( dmcur-1, base, 3 ) ):
    [ seq(
    [ d[i][1]+intwidth(d[i])/3*v[i], d[i][1]+intwidth(d[i])/3*(v[i]+1) ],
    i=1..7 ) ]
  elif type( dmcur, list ) and nops( dmcur ) = 1 then
    procname( op( dmcur ), args[2..-1] )
  elif dmcur = [ ] then
    t := `if`(indm2, varsrange2, varsrange1 ):
    map( rhs, t )
  else
    procname( dmcur[ -1 ], procname( dmcur[1..-2], args[2..-1] ),
    args[2..-1] )
  fi
end:

# determine the subdomain in which a point or a domain is contained
inwhichdomain := proc( ls::list )
  local b, indm2, d, i, v, m, t:
  global sp:
  uses IntervalArithmetic:
  if Digits < 100 then Digits := 100 fi:
  b := [ ]: indm2 := false:
  for i from 2 to nargs do
    if type( args[ i ], list ) then
      b := args[ i ]
    elif type( args[ i ], identical(dm2) ) then
      indm2 := true
    else
      error "invalid argument: 
    fi
  od:
  # subdomain corresponding the list b
  d := subdivisiondomain( b, `if`( indm2, 'dm2', NULL ) ):

  if type( ls, list( Or( rational, 'interval') ) ) then
    if member( false, zip( contain, d, ls ) ) then
      print( `the point/domain input is not in any domain subdivided in the \
checking process` ):
      return FAIL
    fi:
    v := [seq(
    floor( ( `if`( type( ls[ i ], 'interval' ),
    ls[i][1], ls[i] )-d[i][1] )*3/intwidth( d[i]) ),
    i=1..7 ) ]:
    # when ls[i] = d[i,2]
    v := map( t -> `if`( t = 3, 2, t ), v ):
    m := add( v[ i ]*3^( i-1 ), i = 1..7 )+1:

    # When the domain is in varsrange2, there is only once subdivision.
    if indm2 then
      #print( cat( `the point/domain input is in:`,
      #cat( domain2, _, op( map( t -> (t,_), b ) ), m ) ) ):
      return [ domain2, op( b ), m ]
    fi:

    if op( [ op( b ), m ], sp ) = m then
      #print( cat( `the point/domain input is in:`,
      #cat( domain1, _, op( map( t -> (t,_), b ) ), m ) ) ):
      return [ domain1, op( b ), m ]
    else
      procname( ls, [ op( b ), m ] )
    fi:
  elif type( ls, list( constant ) ) then
    procname( map( `Evalr/shake`, ls ), b )
  elif type( ls[ 1 ], `=` ) then
    procname( map( rhs, ls ), b )
  else
    error "invalid argument: 
  fi:
end:

# search the stored data to determine how a point can be checked
howchecked := proc(ls::list)
  local x, y, indm2, methods, t, fn, pd, currentdomain, i, cur, lev:
  uses IntervalArithmetic:

  if type(ls[1], `=`) then
    return procname( map( rhs, ls ) )
  elif member(true, map( t->not type( t,rational ), ls )) then
    return procname( convert( evalf( ls ), ft2rat ) )
  fi:

  # first suppose that ls is in varsrange1
  indm2 := false:

  # input coordinates reprensent a distribution close to the bipyramid
  if not member( false, zip( contain, bpyva, ls ) ) then
    print( cat( `via the "negative definite" interval matrix method, we had \
first exclude a rectangular region whose midpoint correspond to the \
bipyramid configuration, and radius is `, convert( eta, name ), `, and this \
point just belongs to the region.` ) ):
    return

  # input coordinates reprensent a distribution close to the pyramid
  elif not member( false, zip( contain, pyva, ls ) ) then
    print( cat( `via the "nonnegative definite" interval matrix method, we \
had first exclude a rectangular region whose midpoint correspond to the \
pyramid configuration, and radius is `, convert( zeta, name ), `, and this \
point just belongs to the region.` ) ):
    return

  # input coordinates are not in the variable ranges checked
  elif member( false, zip( contain, varsrange1, ls ) ) and
  member( false, zip( contain, varsrange2, ls ) ) then
    print(`the point is not in the checked range(-Pi/2<=phi<=Pi/2 and \
-Pi<=theta<=Pi)`):
    return

  # ls is contained in varsrange2.
  elif not member( false, zip( contain, varsrange2, ls ) ) then
    indm2 := true
  fi:

  # methods used in the checking process
  methods := convert([pointc, halfsphere, mindis, secondlength, totaldis,
  gradient, derivative],table):
  methods[-1] := bipyramid: methods[0] := pyramid:
  methods[-4] := toosamlldomain: methods[-3] := fail:
  methods[-2] := success:
  # check which domain is ls contained in
  fn := inwhichdomain(ls, `if`( indm2, 'dm2', NULL ) ):
  currentdomain := subdivisiondomain( fn[2..-1], `if`( indm2, 'dm2', NULL )):
  # input the path of the directory which stores checking data
  pd := readstat( "Enter the checking files directory:" ):
  if pd = NULL then
    currentdir( "D:/program/maple/paper/5 points on a sphere/data" )
  else
    currentdir( pd )
  fi:
  # read a stored checking file
  read cat( op( map( t -> (t,`/`),
  [ seq( cat( op( map( t -> (t,_), fn[1..i] )[1..-2] ) ),
  i=1..nops(fn)-1)] ) ), cat( op( map( t -> (t,_), fn )[1..-2] ), `.txt` ) ):
  # print information of the file
  print( `the point input is(expressed in float degree): ` ):
  print( evalf( ls*180/Pi ) ):
  print( cat( `it is in: `, \
cat( `if`( indm2, domain2, domain1 ), _, op( map( t -> (t,_), fn[2..-1] )\
[1..-2] ) ) ) ):
  print( cat( `it was used: `, convert( tm, name ), \
` seconds to check this domain` ) ):

  cur := true: lev := 0:
  for i to nops( checkprocess ) do
    if cur and type( checkprocess[i], integer ) then
      if ls[ checkprocess[ i ] ] <=
      ( currentdomain[ checkprocess[ i ] ][1] +
      currentdomain[ checkprocess[ i ] ][2] )/2 then
        t := 1:
      else
        t := 2: cur := false:
      fi:
      currentdomain := intsbdv( currentdomain, 2, checkprocess[ i ] )[ t ]:
    elif cur then
      print( `it is finally contained in domain(expressed in radian): ` ):
      print( currentdomain ):
      print( `that is(expressed in float degree): ` ):
      #print( `it is finally contained in domain(expressed in float \
degree): ` ):
      print( evalf( currentdomain*180/Pi ) ):
      print( cat( `this domain is checked by method: `, \
methods[ op( checkprocess[ i ] ) ] ) ):
      break:
    else
      if type( checkprocess[ i ], integer ) then lev := lev+1
      elif checkprocess[ i ] in [ [-3], [-2] ] then lev := lev-1
      fi:
      if lev = 0 then cur := true fi:
    fi:
  od:
  return
end:

end module:
\end{maplettyout}
\clearpage

\end{appendix}
\clearpage

\end{document}